\documentclass[11pt]{article}
\usepackage{booktabs}
\usepackage[ruled, linesnumbered]{algorithm2e}

\SetAlFnt{\small}
\SetAlCapFnt{\small}
\SetAlCapNameFnt{\small}
\SetAlCapHSkip{0pt}
\IncMargin{-\parindent}

\usepackage{xcolor}

\definecolor{color1}{HTML}{CC7B2C}
\definecolor{color2}{HTML}{28607D}
\definecolor{color3}{HTML}{3F8F9B}
\definecolor{color4}{HTML}{AC4334}

\colorlet{alternativebarcolor}{black!15}
\colorlet{lightgray}{black!50!white}
\colorlet{darkgray}{black!80!white}

\usepackage[colorlinks=true, allcolors=color2]{hyperref}
\usepackage{doi}

\usepackage[T1]{fontenc}
\usepackage[margin=1in]{geometry}
\usepackage{natbib}
\usepackage[dvipsnames]{xcolor}
\usepackage{amsmath}
\usepackage{amsthm}
\usepackage{thm-restate}
\usepackage{amssymb}
\usepackage{xfrac}
\usepackage{array}
\usepackage{subcaption}
\usepackage{adjustbox}
\usepackage{makecell}
\usepackage{cleveref}
\usepackage{enumitem}
\usepackage{booktabs}
\usepackage{longtable}
\usepackage{xspace}

\usepackage{wrapfig}
\usepackage{thm-restate}

\crefname{theorem}{Theorem}{Theorems}
\crefname{lemma}{Lemma}{Lemmas}

\usepackage{tikz}
\usetikzlibrary{arrows.meta, positioning, calc, patterns, patterns.meta, shapes.geometric}
\tikzset{%
    phantom/.style={%
        draw, color=color1, thick
    },
    phantoms/.style={%
        phantom, ultra thick,
    },
    mov/.style={%
        color=color1, pos=0.5, anchor=south,
    },
    phlabel/.style={%
        color=color1, anchor=west, font=\small,
    }, 
    myStar/.style={
    star,
    star points=5,
    star point ratio=2.5,
    fill=yellow!40,
    minimum size=0.4cm
  }
}

\DeclareMathOperator*{\argmax}{arg\,max}

\DeclareMathOperator*{\med}{med}

\newcommand{\F}{\mathcal{F}}

\newcommand{\G}{\mathcal{G}}
\newcommand{\A}{\mathcal{A}}
\newcommand{\B}{\mathcal{B}}

\newcommand{\N}{\mathbb{N}}
\newcommand{\R}{\mathbb{R}}

\newcommand{\p}{p}   %

\newcommand{\opt}{\textsc{Opt}}
\newcommand{\optd}{\textsc{UD}}
\newcommand{\overlap}[2]{\sqcap(#1, #2)}

\newcommand{\IM}{\textsc{IndependentMarkets}\xspace}
\newcommand{\IMs}{\textsc{IM}\xspace}
\newcommand{\Ladder}{\textsc{Ladder}\xspace}
\newcommand{\GreedyMax}{\textsc{GreedyMax}\xspace}
\newcommand{\Fan}{\textsc{Fan}\xspace}
\newcommand{\Constant}{\textsc{Constant}\xspace}
\newcommand{\PiecewiseUniform}{\textsc{PiecewiseUniform}\xspace}
\newcommand{\PiecewiseUniforms}{\textsc{PU}\xspace}
\newcommand{\Ut}{\textsc{Util}\xspace}
\newcommand{\UtP}{\textsc{UtilProp}\xspace}
\newcommand{\Uts}{\textsc{U}\xspace} %
\newcommand{\GrdDec}{\textsc{GreedyDecomp}\xspace}
\newcommand{\GrdDecs}{\textsc{GD}\xspace}%
\newcommand{\UtDec}{\textsc{UtilDecomp}\xspace}
\newcommand{\smp}{single-minded proportionality\xspace}
\newcommand{\dec}{decomposability\xspace}

\newtheorem{theorem}{Theorem}
\newtheorem{lemma}{Lemma}
\newtheorem{definition}{Definition}
\newtheorem{corollary}{Corollary}
\newtheorem{proposition}{Proposition}

\newtheorem{example}{Example}

\usepackage{authblk}

\author[1]{Javier Cembrano}
\author[2]{Rupert Freeman}
\author[3]{\\ Ulrike Schmidt-Kraepelin}
\author[3]{Markus Utke}
\affil[1]{Universidad de Chile, Santiago, Chile}
\affil[2]{Darden School of Business, University of Virginia, VA, USA}
\affil[3]{TU Eindhoven, The Netherlands}
\title{Social Welfare in Budget Aggregation}
\date{}

\begin{document}

\begin{titlepage}

\maketitle
\vspace{0.0cm}
\begin{abstract}
We study budget aggregation under $\ell_1$-utilities, a model for collective decision making in which agents with heterogeneous preferences must allocate a public budget across a set of alternatives. Each agent reports their preferred allocation, and a mechanism selects an allocation. Early work focused on social welfare maximization, which in this setting admits truthful mechanisms, but may underrepresent minority groups, motivating the study of \emph{proportional} mechanisms. However, the dominant proportionality notion, \emph{single-minded proportionality}, is weak, as it only constrains outcomes when agents hold extreme preferences. To better understand proportionality and its interaction with welfare and truthfulness, we address three questions.

First, how much welfare must be sacrificed to achieve proportionality? We formalize this via the \emph{price of proportionality}, the best worst-case welfare ratio between a proportional mechanism and \Ut, the welfare-maximizing mechanism. We introduce a new single-minded proportional and truthful mechanism, \UtP, and show that it achieves the optimal worst-case ratio. Second, how do proportional mechanisms compare in terms of welfare? We define an instance-wise welfare dominance relation and use it to compare mechanisms from the literature. In particular, we show that \UtP welfare-dominates all previously known single-minded proportional and truthful mechanisms. Third, can stronger notions of proportionality be achieved without compromising welfare guarantees? We answer this question in the affirmative by studying \emph{decomposability} and proposing \GrdDec, a decomposable mechanism with optimal worst-case welfare ratio. We further show that computing the welfare-dominant decomposable mechanism, \UtDec, is NP-hard, and that \GrdDec provides a $2$-approximation to \UtDec in terms of welfare.
\end{abstract}

\end{titlepage}

\section{Introduction}

Participatory budgeting, public investment planning, and resource allocation within organizations share a common challenge: agents with typically conflicting preferences must collectively decide how to allocate a public budget. In such settings, well-defined decision mechanisms are essential for transparency and legitimacy. Designing an appropriate mechanism, however, is nontrivial, as several desiderata must be balanced. Ideally, a mechanism should achieve high overall agreement (efficiency), ensure meaningful representation of all agents (proportionality), and incentivize truthful preference reporting (truthfulness). These goals, however, are often in tension.

We consider the model of budget aggregation under $\ell_1$-utilities \cite{lindner2008midpoint,goel2019knapsack}. Agents report their preferred budget allocations, and an agent’s utility for an outcome is modeled by the amount of budget spent in their interest; equivalently, disutility is measured by the $\ell_1$-distance between the outcome and the reported allocation. 
Unlike many voting settings in which truthfulness is unattainable, budget aggregation under $\ell_1$-utilities admits truthful mechanisms. In particular, \citet{freeman2021truthful} introduce a broad class of truthful mechanisms known as moving-phantom mechanisms.
Earlier work on this model \cite{lindner2008midpoint,goel2019knapsack} focuses on maximizing social welfare, a goal that is compatible with truthfulness. In particular, \citet{freeman2021truthful} define $\Ut$ as a neutral, anonymous, and truthful welfare-maximizing mechanism. They further characterize $\Ut$ as the unique Pareto-optimal moving-phantom mechanism. A common concern with pure welfare maximization, however, is that it may disregard minority opinions.

To address this concern, \citet{freeman2021truthful} propose the \IM mechanism, which satisfies \emph{single-minded proportionality}. This proportionality notion requires that when voters are \emph{single-minded}—seeking to allocate the entire budget to a single alternative—the budget is split proportionally to the sizes of the corresponding supporter groups. Since single-minded proportionality is relatively weak, there exists a large class of truthful mechanisms satisfying it, motivating subsequent work that seeks to further distinguish among them, for example by bounding the distance to the \emph{average} report \cite{caragiannis2022truthful,freeman2024project}.

In this work, we pursue two orthogonal approaches to refine the comparison of single-minded proportional mechanisms and to reconnect the study of proportional mechanisms with the ideal of welfare maximization. First, we evaluate these mechanisms through the lens of welfare guarantees, which naturally leads to the study of the \emph{price of proportionality}, that is, the best worst-case welfare ratio achievable by a single-minded proportional mechanism. Related questions have been studied for other utility models, for example by \citet{MPS20a}. Beyond worst-case analysis, we also introduce a welfare dominance relation that enables direct welfare comparisons between mechanisms.
Second, we propose \emph{decomposability}, a notion of proportionality that is substantially stronger than single-minded proportionality in the sense that it constrains outcomes on all profiles rather than only single-minded ones. Overall, we present a systematic study of welfare, proportionality, and truthfulness in budget aggregation, uncovering two new mechanisms that strike a particularly favorable balance between welfare guarantees and proportionality. %

Our first mechanism, \UtP, achieves the optimal worst-case approximation of social welfare among all single-minded proportional mechanisms, with welfare ratio $\textstyle \frac{n}{2\sqrt{n}-1}$ and $\textstyle \frac{m}{2\sqrt{m}-2}$, where $n$ is the number of voters and $m$ is the number of alternatives. As it is a moving-phantom mechanism, \UtP is truthful. In addition to the worst-case guarantee, \UtP maximizes welfare in every instance among all single-minded proportional moving-phantom mechanisms, and hence among all known proportional and truthful mechanisms.
To establish this result, we introduce a dominance relation: mechanism $A$ dominates mechanism $B$ if, for every instance, $A$ returns an allocation with welfare at least as high as that of $B$. This induces a partial order over moving-phantom mechanisms and allows us to characterize the relationships among those studied in the literature. For example, perhaps surprisingly, \PiecewiseUniform \cite{caragiannis2022truthful} and \Ladder \cite{freeman2024project} welfare-dominate the more established \IM mechanism. We refer to \Cref{fig:phantom_domination} in \Cref{sec:phantom_domination} for an overview.

\begin{wrapfigure}{rt}{0.53\linewidth}
\vspace{-1.5cm}
\scalebox{.9}{
    \begin{tikzpicture}[
        mech/.style={draw=black,circle,fill,inner sep=2pt},
        every label/.append style={font=\footnotesize, fill=none, text opacity=1, inner sep=2pt},
        every node/.style={inner sep=0},
        dist/.style={very thick, dotted},
    ]
        \def\xdist{0.07}
        \def\ydist{0.03}
    
        \fill [fill=color1, opacity =0.5, rotate around = {300:(20*\xdist, 0)}] (0, 0) ellipse ({30*\xdist} and {40*\ydist});

        \fill [fill=color3,opacity =0.5, rotate around = {240:(20*\xdist, 0)}] (0, 0) ellipse ({30*\xdist} and {40*\ydist});

        \begin{scope}
         \clip [rotate around = {300:(20*\xdist, 0)}] (0, 0) ellipse ({30*\xdist} and {40*\ydist});

          \fill [fill=color4, opacity =0.5, rotate around = {240:(20*\xdist, 0)}] (0, 0) ellipse ({30*\xdist} and {40*\ydist});
        \end{scope}

        \begin{scope}
            \clip[rotate around = {240:(20*\xdist, 0)}] (-50*\xdist, -70*\ydist) rectangle (-12*\xdist,70*\ydist); %

            \fill [fill=color2, opacity=0.8, pattern=north east lines, rotate around = {240:(20*\xdist, 0)}] (0, 0) ellipse ({30*\xdist} and {40*\ydist});
        \end{scope}

        \node[color=color1] (phantoms) at (-15*\xdist,-5*\ydist) {moving-phantom};
        \node[color=color1] (phantoms) at (-15*\xdist,-15*\ydist) {mechanism};
        \node[color=color3,  align=center] (proportional) at (55*\xdist,-5*\ydist) {single-minded};
        \node[color=color3,  align=center] (proportional) at (55*\xdist,-15*\ydist) {proportional};
        \node[color=color2,  align=center] (decomposable) at (40*\xdist,115*\ydist) {decomposable};

        \node (utilprop)[myStar,fill=color1!80!color3!50!black,label={[anchor=north,fill=white, label distance=2mm]below:\UtP}] at (20*\xdist,0){};
        \node (util)[myStar,fill=color1,label={[anchor=east,fill=white, label distance=2mm]left:\Ut}] at (0*\xdist,80*\ydist){};
        \node (utildecomp)[myStar,fill=color2, label={[anchor=west, fill=white, label distance=2mm]right:\textsc{UtilDecomp}}] at (40*\xdist,70*\ydist){};
        \node (greedydecomp)[mech, draw = color2, fill=color2,label={[anchor=west,fill=white, label distance=2mm]right:\GrdDec}] at (40*\xdist,90*\ydist){};

        \draw[very thick] (util) to (utilprop);
        \draw[very thick] (util) to (utildecomp);
        \draw[very thick] (util) to (greedydecomp);
        \draw[dist] (utildecomp) to (greedydecomp);

    \end{tikzpicture}}

    \caption{Mechanisms that are instance-wise welfare-optimal within their respective class are marked with stars. The solid lines correspond to a worst-case welfare loss of $\frac{n}{2\sqrt{n}-1}$ compared to \Ut. The dotted line indicates a welfare loss of at most $2 - \frac{1}{n-1}$.}\label{fig:intro}
\end{wrapfigure}
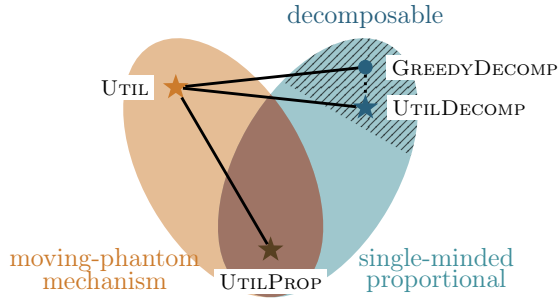

To strengthen single-minded proportionality, we draw inspiration from work on dichotomous preferences, where \emph{decomposability} has been proposed as a proportionality notion \cite{brandl2021distribution}. Decomposability requires the total budget to be partitioned into $n$ agent-specific allocations, each spent only on projects approved by the corresponding agent. 
We adapt this idea to the $\ell_1$-disutility model by requiring that the aggregate allocation can be decomposed into $n$ agent-specific pieces of size $1/n$, such that each agent contributes only to alternatives whose funding level does not exceed that agent's preferred level. Equivalently, no agent would prefer to reallocate their share of the budget given the contributions of the others. This condition mirrors a Nash equilibrium in an associated budget-allocation game and coincides with the \emph{equilibrium} notion introduced recently and independently by \citet{becker2025efficiently}.

We show, however, that decomposability is incompatible with truthfulness and neutrality, and therefore drop truthfulness in this part of the paper. We introduce a new decomposable mechanism, \GrdDec, which---like \UtP---minimizes worst-case welfare loss (with respect to $n$) within the class of single-minded proportional mechanisms. In particular, this demonstrates that strengthening proportionality beyond the single-minded case does not increase the price of proportionality. Although \GrdDec is worst-case optimal with respect to welfare loss, it is not optimal on every instance among decomposable mechanisms. In fact, we show that computing the outcome of \UtDec, the mechanism that maximizes welfare among all decomposable mechanisms, is NP-hard. On the positive side, we prove that \GrdDec approximates the welfare of \UtDec with a factor of $2$. We illustrate our approximation results in \Cref{fig:intro}.

\subsection{Related Work}

\paragraph{Budget Aggregation under $\ell_1$-Utilities} We contribute to the growing literature on budget aggregation under $\ell_1$-utilities. As mentioned above, \citet{lindner2008midpoint} and \citet{goel2019knapsack} initiate the study of welfare-maximizing mechanisms and \citet{freeman2021truthful} introduce the class of moving-phantom mechanisms, including \Ut. As noted before, $\Ut$ is not single-minded proportional, and in fact, \citet{brandt2024optimal} prove that no mechanism can simultaneously be truthful, Pareto optimal, and single-minded proportional, thereby strengthening an impossibility for the class of moving-phantom mechanisms by \citet{freeman2021truthful}. One may wonder whether there exist compelling truthful mechanisms outside the class of moving-phantom mechanisms. Indeed, \citet{berg2024truthful} demonstrate the existence of other truthful mechanisms that are also anonymous and neutral. However, none of these mechanisms satisfies single-minded proportionality, hence, they are not suitable for our purposes. In terms of fairness notions that go beyond single-minded proportionality, \citet{caragiannis2022truthful} introduce the idea of mean approximation for the budget aggregation problem, which has later been followed up by \citet{freeman2024project}. These two papers identify the two proportional moving-phantom mechanisms \PiecewiseUniform and \Ladder as particularly well-performing in terms of mean approximation. Recently, \citet{becker2025efficiently} introduced budget-aggregation games, in which agents control fractions of a public budget and allocate their shares to maximize their utility. They show that for $\ell_1$-utilities, Nash equilibria can be computed in polynomial time. While these equilibria correspond to our fairness notion of decomposability, their proposed algorithm is not neutral. In contrast, our algorithm \GrdDec is polynomial-time computable, neutral,  anonymous, and worst-case welfare-optimal. %
\citet{elkind2026settling} compare budget-aggregation mechanisms with respect to a broad range of axioms, including both inter-profile and fairness axioms. \citet{goyal2023low} study distortion and sample complexity in a setting similar to ours, except that each alternative’s funding is capped by a predefined cost.

\paragraph{Price of Proportionality in Social Choice}
While the price of proportionality~\citep{bertsimas2011price,caragiannis2012efficiency} has not been studied in our model, we discuss similar results in related settings. 
In budget division under approval preference, \citet{MPS20a} show that asking for the property of \emph{individual fair share (IFS)} (which is incomparable to single-minded proportionality in our setting) necessarily yields a worst-case welfare ratio of $\textstyle\frac{\sqrt{m}}{2}$, where $m$ is the number of projects, and show that three common algorithms achieve a worst-case welfare ratio between $\frac{\sqrt{m}}{2}$ and $\frac{m}{2\sqrt{m} -2}$.\footnote{To allow comparisons, we invert the bounds from the literature, which are reported as efficiency guarantees $\min(\text{ALG}/\text{OPT})$, where $\text{ALG}$ is the mechanism’s welfare and $\text{OPT}$ is the welfare of $\Ut$. Our objective instead is $\max(\text{OPT}/\text{ALG})$.} While we express most of our bounds in terms of $n$, the lower bound $\frac{\sqrt{m}}{2}$ applies for us well, and in \Cref{sec:mbound} we combine our techniques with those of \citet{MPS20a} to show that \UtP also achieves a $\frac{m}{2\sqrt{m} - 2}$ approximation, thereby providing an alternative (almost) tight bound on the price of single-minded proportionality, expressed in terms of $m$.
\citet{TWZ20b} study the price of fairness with respect to egalitarian welfare.

In approval-based committee voting, the study of the trade-off between proportionality and utilitarian welfare was initiated by \citet{LaSk20b}, who, among other results, showed that any rule satisfying the property of \emph{justified representation} must have a welfare approximation ratio of at least 
$\textstyle\frac{k}{2\sqrt{k}-1}$, where $k$ is the committee size. \citet{EFI+24b} later demonstrated that this lower bound is asymptotically tight by providing a matching upper bound. \citet{BrPe24b} further improved this result by showing that an approximation of $\textstyle\frac{k}{2\sqrt{k}-1}$ can be achieved by completing the \emph{Greedy Justified Candidate Rule} \cite{BrPe23a}.

\paragraph{Outline} In \Cref{sec:prelim}, we introduce our model and moving-phantom mechanisms. \Cref{sec:worst_case_price} defines the proportionality notions of single-minded proportionality and decomposability, and presents the algorithms \UtP and \GrdDec. To establish their optimal welfare guarantees, we first derive a lower bound and then prove a meta-theorem showing that any mechanism satisfying range respect 
\cite{elkind2026settling} and a newly defined property called \emph{proportional spending} achieves worst-case optimal welfare among all single-minded proportional mechanisms. \Cref{sec:phantom_domination} introduces our welfare dominance relation and characterizes the welfare ordering among eight moving-phantom mechanisms. In \Cref{sec:optimal_decomposable}, we show that computing \UtDec is NP-hard, while \GrdDec achieves a $2$-approximation in welfare. 
Finally, Section 6 concludes with a discussion of relevant extensions and directions for future work.

\section{Preliminaries}\label{sec:prelim}

For $n \in \N$, we let $[n] = \{k \in \N \mid 1 \le k \le n\}$ and $[n]_0 = \{k \in \N \mid 0 \le k \le n\}$. For $k \in \N$, we write $\Delta^{(k)} = \{ v \in [0,1]^{k+1} \mid \sum_{k' = 1}^{k+1} v_{k'} = 1 \}$ for the standard $k$-simplex. 

\paragraph{Budget Aggregation} In budget aggregation, $n$ \textit{voters} want to decide how to distribute a normalized budget of~$1$ over $m \ge 2$ \textit{alternatives}. Each voter $i \in [n]$ reports their favored allocation $\p_i \in \Delta^{(m-1)}$ as their \textit{vote}, and all votes together form the \textit{profile} $P = (\p_1, \dots, \p_n)$. We let $\mathcal{P}_{n,m} = (\Delta^{(m-1)})^n$ denote the set of all possible profiles with $n$ voters and $m$ alternatives. 
For a profile $P \in \mathcal{P}_{n,m}$ and $k \in [n]$, let the vector $\mu^k \in [0,1]^m$ be defined by $\mu^k_j$ being the $k$-th smallest element among $\{p_{i,j} \mid i \in [n]\}$ for each $j \in [m]$. Thus, $\mu^k$ generally does not correspond to the vote of any single agent; for instance, $\mu^1$ corresponds to the vector of minimum votes per alternative. 

A \emph{budget-aggregation mechanism} (often simply called \emph{mechanism}) is a family of functions $\A_{n,m}\colon \mathcal{P}_{n,m} \rightarrow \Delta^{(m-1)}$, one for each pair $n,m \in \N$ with $m \ge 2$, that each map a profile $P \in \mathcal{P}_{n,m}$ to an \emph{aggregate} $a = \A_{n,m}(P) \in \Delta^{(m-1)}$. When $n$ and $m$ are clear from the context, we frequently write $\mathcal{A}$ instead of $\mathcal{A}_{n,m}$.

\paragraph{Utilities and Social Welfare}

We define the \textit{overlap} of two vectors $v,w \in \mathbb{R}^m$ to be $\overlap{v}{w} = \sum_{j\in [m]} \min(v_j, w_j)$ and the utility of a voter $i$ with preference $p_i \in  \Delta^{(m-1)}$ for an allocation $a \in \Delta^{(m-1)}$ as $u_i(a) = \overlap{p_i}{a}$. This utility function has been studied previously~\citep{goel2019knapsack,nehring2019resource} and induces the same ordinal rankings over budget allocations as $\ell_1$-disutilities (i.e., the $\ell_1$ distance between the vote and the aggregate). However, our results on welfare approximations depend on the specific utility formulation. Later in the paper, we occasionally use the informal expression that a voter $i$ \emph{approves} increasing an allocation from $a_j$ to $a'_j > a_j$. This is merely shorthand for the condition $p_{i,j} \geq a'_j$. We denote by $w(a) = \sum_{i \in [n]} u_i(a)$ the (utilitarian) social welfare of an allocation $a$; we often refer to this value simply as the \textit{welfare} of an allocation, for conciseness. %
For a fixed profile $P \in \mathcal{P}_{n,m}$, we let $a^\opt\in \arg\max \{w(a)\mid a\in \Delta^{(m-1)}\}$ denote a welfare-optimal allocation and $w^\opt = w(a^\opt)$ denote the corresponding optimal welfare.
For $n\in \N$ and some function $\alpha\colon \mathbb{N} \rightarrow \mathbb{R}_{\geq 1}$, we say that a mechanism $\A$ achieves a welfare approximation of $\alpha$ if, for any $m\in \N$ and instance $P \in \mathcal{P}_{n,m}$, it holds that $\frac{w^\opt}{w(\A(P))} \le \alpha(n)$.

\paragraph{Axioms}
All mechanisms considered in this paper satisfy two basic symmetry notions with respect to voters and candidates.
A mechanism is \textit{anonymous} if its output is invariant under permutations of the voters and \textit{neutral} if a permutation of the alternatives in all individual reports permutes the outcome in the same way. 

For a profile $P \in \mathcal{P}_{n,m}$, we say that an allocation $a$ \textit{Pareto dominates} another allocation $a'$ if $u_i(a) \ge u_i(a')$ for all voters $i \in [n]$ and $u_i(a) > u_i(a')$ for at least one voter $i \in [n]$. We call an allocation \textit{Pareto optimal} if it is \textit{not} Pareto dominated by any other allocation.   

A budget-aggregation mechanism $\A$ is \emph{truthful} if, for all $n, m \in \N$ with $m \ge 2$, profile $P = (\p_1, \dots, \p_n) \in \mathcal{P}_{n,m}$, voter $i \in [n]$, and misreport $\p_i' \in \Delta^{(m-1)}$, we have $u_i(\A(P)) \ge u_i(\A(p'_i,P_{-i}))$, where $(p'_i,P_{-i})$ is a shorthand for $(\p_1, \dots, \p_{i-1}, \p_i', \p_{i+1}, \dots, \p_n)$.
\citet{freeman2021truthful} introduced a class of truthful mechanisms, which we define next. 

\paragraph{Moving-Phantom Mechanisms.}
A moving-phantom mechanism is defined by one \textit{phantom system} for each number of voters. 
For $n \in \N$, a \textit{phantom system} $\F_n = \{f_k \mid k=0,\dots,n\}$ is a collection of $(n+1)$ non-decreasing, continuous functions $f_k\colon [0,1] \rightarrow [0,1]$ with $f_k(0) = 0$, $f_k(1) \ge 1- \frac{k}{n}$ and $f_0(t) \ge \dots \ge f_n(t)$ for all $t \in [0,1]$.
A family of phantom systems $\F = \{\F_n \mid n \in \N\}$ defines the moving-phantom mechanism $\A^\F$ in the following way. 
For any given profile $P = (\p_1, \dots, \p_n)$ with $p_i = (p_{i,1}, \dots, p_{i,m})$ for each $i \in [n]$, let $t^\star \in [0,1]$ be a value for which
$$
    \sum_{j \in [m]} \med(f_0(t^\star), \dots, f_n(t^\star), \p_{1,j}, \dots, \p_{n,j}) = 1. 
$$
Note that the median is well-defined since its input consists of an odd number of values. The moving-phantom mechanism $\A^\F$ induced by $\F$ returns the allocation $a=(a_1,\ldots,a_m)$, where $a_j = \med(f_0(t^\star), \dots, f_n(t^\star), \p_{1,j}, \dots, \p_{n,j})$ for each $j \in [m]$. Although $t^\star$ is not always unique, the induced allocation is, and we refer to $t^\star$ as a \emph{time of normalization}. \citet{freeman2021truthful} showed that any moving-phantom mechanism is truthful, anonymous, and neutral.

In \Cref{app:mp-mechs}, we provide formal definitions and visualizations of all moving-phantom mechanisms introduced in previous literature and in this paper; we here give an intuitive description of those that are most relevant for our work.
The \Ut mechanism \citep{freeman2021truthful} 
is induced by moving each phantom from 0 to 1 consecutively, one after each other.
\IM~\citep{freeman2021truthful} is 
induced by moving all phantoms simultaneously but at different speeds, namely, phantom $k$ moves with speed $\textstyle \frac{n-k}{n}$.
The \PiecewiseUniform mechanism~\citep{caragiannis2022truthful} is 
induced by
a two-phase moving procedure: a first phase that spreads the uppermost half of the phantoms uniformly across the $[0,1]$ interval while the lower half remains at zero, followed by a second phase that spreads the lower half of the phantoms across $\big[0, \frac{1}{2}\big]$ while concentrating the upper half of the phantoms into $\big[\frac{1}{2}, 1\big]$.
The \Ladder mechanism~\citep{freeman2024project} is 
induced by moving each phantom at the same speed, but with phantom $k$ starting to move only when phantom $k-1$ reaches position $\frac{1}{n}$.
Finally, \GreedyMax~\citep{berg2024truthful} is 
induced by moving the top $n$ phantoms together from $0$ to $1$, while the last one remains at $0$.

\section{Price of Proportionality} \label{sec:worst_case_price}

In this section, we define two notions of proportionality and discuss the extent to which social welfare must be sacrificed to satisfy them. We quantify this welfare loss via the best welfare approximation function $\alpha(n)$ that a mechanism satisfying a certain proportionality notion can guarantee, which we refer to as the \textit{price} of the corresponding proportionality notion. %
After defining \smp and \dec in \Cref{subsec:prop-notions}, we present mechanisms satisfying them in \Cref{subsec:prop-mechanisms}, and analyze their welfare guarantees in \Cref{subsec:welfare_approximation,sec:mbound}. 

\subsection{Proportionality Notions}\label{subsec:prop-notions}

\paragraph{Single-minded proportionality} 
{Although there is no single “correct” definition of proportionality in budget aggregation, there are some preference profiles for which proportionality is arguably uncontroversial. In particular, when each voter wishes to allocate the entire budget to a single alternative, a proportional allocation assigns to each alternative a budget share equal to its fraction of supporters. \citet{freeman2021truthful} formalize this intuition with the property of \emph{\smp}.}\footnote{While \citet{freeman2021truthful} call this property \textit{proportionality}, we reserve the term proportionality as an umbrella term.}
A vote $p_i \in \Delta^{m-1}$ is called \textit{single-minded} if there is an alternative $j \in [m]$ with $p_{i,j} = 1$ and a profile $P \in \mathcal{P}_{n,m}$ is called single-minded if all votes are single-minded.
A budget aggregation mechanism $\A$ is \textit{single-minded proportional} if, for any single-minded profile $P=(p_1, \dots, p_n) \in \mathcal{P}_{n,m}$, it returns the average of the votes, i.e., $\A(P) = \frac{1}{n} \sum_{i \in [n]} p_i$.

Several mechanisms in the literature are single-minded proportional and truthful. \citet{freeman2021truthful} introduce the \IM mechanism; 
\citet{caragiannis2022truthful} and \citet{freeman2024project} introduce the \PiecewiseUniform and \Ladder mechanisms, respectively, and bound their distance from the average allocation. 
In \Cref{subsec:prop-mechanisms}, we introduce \UtP, which achieves best-possible welfare approximation among single-minded proportional mechanisms.

\paragraph{Decomposability}
The main issue with 
single-minded proportionality is that it only restricts the outcome of a mechanism for very specific profiles.
To obtain a stronger proportionality notion, 
we draw inspiration from the concept of \textit{decomposability}, defined by \citet{brandl2021distribution} for the approval preference setting.
It requires that the outcome can be decomposed into voter-specific partial allocations, called contributions, each summing to $\tfrac{1}{n}$, such that every voter is satisfied with their contribution. In the approval setting, this simply means that voters contribute only to alternatives that they approve. In our setting, however, satisfaction depends on the contributions of others. We interpret a voter as satisfied if they do not fund an alternative that receives more total funding than they voted for. Otherwise, the voter would prefer to redirect that portion of their contribution to an alternative that is currently underfunded from their perspective. Such an alternative must exist by normalization of the votes and the outcome.

Formally, an aggregate $a$ is \textit{decomposable} for a profile $P$ if there exist voter \textit{contributions} $c_i \in \big[0, \frac{1}{n}\big]^m$ with $\sum_{j \in [m]} c_{i,j} = \frac{1}{n}$ for each voter $i \in [n]$, such that
\begin{itemize}
    \item $\sum_{i \in [n]} c_{i,j} = a_j$ for each alternative $j \in [m]$, and
    \item $a_j \le p_{i,j}$ for all voters $i$ and alternatives $j$ with $c_{i,j} > 0$.
\end{itemize}
A mechanism $\A$ is called decomposable if it returns a decomposable aggregate for every profile. Another way to think about decomposability is as a Nash equilibrium of the game in which each voter can distribute their fair share as they like. \citet{becker2025efficiently} call these \textit{budget-aggregation games} and show that such an equilibrium always exists and is computable in polynomial time, implying the existence of a polynomial-time decomposable mechanism.  
In \Cref{subsec:prop-mechanisms}, we present a different polynomial-time decomposable mechanism, \GrdDec, that achieves the best-possible welfare approximation, even within the class of single-minded proportional mechanisms. Note that decomposability implies single-minded proportionality, since any single-minded voter can only contribute to their supported alternative.
However, when restricting to mechanisms satisfying our baseline property of neutrality, this strengthening comes at the cost of truthfulness. 

\begin{restatable}{proposition}{propDecompTruthfulness}\label{prop:decomp-truthfulness}
    No neutral and decomposable budget aggregation mechanism is truthful.
\end{restatable}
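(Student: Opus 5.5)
The plan is to prove the proposition by an explicit counterexample with $n=2$ voters and $m=3$ alternatives. The argument has three parts: use neutrality to pin down the outcome on a profile that is invariant under swapping two alternatives; use decomposability to restrict the outcome on a second profile where one voter misreports; then show that every decomposable outcome on the misreported profile gives the deviating voter strictly more true utility than the forced truthful outcome. The example is not yet verified to give a strict gain; the last paragraph says what I would try.

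\textbf{Step 1 (forcing the truthful outcome).} The candidate truthful profile is $p_1=(0.4,0.4,0.2)$ and $p_2=(\tfrac12,\tfrac12,0)$. Both votes are fixed by the transposition of alternatives $1$ and $2$. Neutrality then forces an outcome of the form $a=(x,x,1-2x)$.

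Decomposability pins down $x$ as follows.
\begin{itemize}
\item Voter~$1$ may fund alternative $3$ only if $1-2x\le 0.2$, and may fund alternatives $1,2$ only if $x\le 0.4$.
\item Voter~$2$ may fund only alternatives $1,2$, and only if $x\le\tfrac12$.
\item If voter~$1$ puts nothing on alternative $3$, then $a_3=0$ and $x=\tfrac12>0.4$. Voter~$1$'s whole contribution then lies on alternatives $1,2$, which is not allowed.
\item Hence voter~$1$ funds alternative $3$, so $x\ge 0.4$. Together with $x\le 0.4$ this gives $a=(0.4,0.4,0.2)$.
\end{itemize}
Voter~$2$'s true utility at this outcome is therefore exactly $0.8$.

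\textbf{Step 2 (the manipulated profile).} Voter~$2$ would misreport $p_2'=(1,0,0)$. This forces voter~$2$'s whole share onto alternative $1$, so $a_1\ge\tfrac12>0.4$, and voter~$1$ can no longer contribute there. Voter~$1$ must then split $\tfrac12$ between alternatives $2$ and $3$, subject to $a_2\le0.4$ and $a_3\le0.2$. This yields the outcomes $(\tfrac12,\beta,\tfrac12-\beta)$ with $\beta\in[0.3,0.4]$.

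Every such outcome gives voter~$2$ true utility $\tfrac12+\beta\ge 0.8$. This bound shows the intended mechanism for the gain: the misreport pushes voter~$1$'s contribution off alternative $1$ and onto alternative $2$, which voter~$2$ values. It is not yet a violation, however, because the boundary choice $\beta=0.3$ yields utility exactly $0.8$.

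\textbf{Step 3 (the main obstacle: strictness).} The hard part is making the gain strict for \emph{every} decomposable outcome of the manipulated profile. Any mechanism, in particular one that picks the least favourable decomposable outcome, must then be manipulable. First I would adjust the numbers so that the upper bound on $a_3$ coming from voter~$1$'s vote becomes strictly smaller after the manipulation than the value $0.2$ forced in Step~1. Concretely, I would make voter~$1$'s vote on alternative $3$ depend on the configuration by introducing a fourth alternative. Voter~$1$ would then spread over alternatives $3$ and $4$, and the symmetric forcing in Step~1 would leave slack there.

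Alternatively, I would let voter~$1$ be the manipulator. Here voter~$2$ would hold a vote symmetric in alternatives $1,2$, and voter~$1$'s misreport would make funding one of those alternatives illegal for voter~$2$. Decomposability would then push voter~$2$'s entire share onto the alternative voter~$1$ prefers.

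In either variant the checks are routine once the numbers are fixed:
\begin{itemize}
\item verify that the symmetric profile admits a unique neutral decomposable outcome;
\item enumerate the linear constraints on the contributions in the manipulated profile;
\item confirm that the deviator's true utility is strictly higher on the whole feasible set of outcomes.
\end{itemize}
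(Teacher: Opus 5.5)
\textbf{There is a genuine gap: your proposal never produces a counterexample.} What is missing is a profile and a misreport for which the deviator strictly gains under \emph{every} neutral decomposable mechanism, and that is the whole content of the proposition. Truthfulness only requires $u_i(\mathcal{A}(P)) \ge u_i(\mathcal{A}(p_i',P_{-i}))$. The weak gain in Step~2 is therefore fully consistent with truthfulness: a mechanism may simply output $(\tfrac12,0.3,0.2)$ on the manipulated profile.

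The tie is also not an accident of your numbers. Take $p_1=(y,y,1-2y)$ with $y\in(\tfrac14,\tfrac12)$, $p_2=(\tfrac12,\tfrac12,0)$ and the misreport $p_2'=(1,0,0)$. The outcome $(\tfrac12,\,2y-\tfrac12,\,1-2y)$ is decomposable and gives voter~2 exactly the truthful utility $2y$. The reason is that voter~2's true utility is $1-a_3$ whenever $a_1,a_2\le\tfrac12$, and the mechanism may keep $a_3$ at voter~1's vote $1-2y$, which is its truthful value. So ``adjusting the numbers'' within this family cannot help. Voter~1 cannot be the manipulator in your profile either, since the truthful outcome already equals $p_1$. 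Step~3 only lists directions (a fourth alternative, swapping the roles of the voters) without fixing any numbers, so nothing there is verified.

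\textbf{The missing idea is to make the manipulated profile symmetric too.} Then neutrality together with decomposability pins down its outcome uniquely, and strictness becomes a one-line computation rather than a claim about the whole feasible set. This works if the non-deviating voter's vote is invariant under two different transpositions. The paper uses the following construction.
\begin{itemize}
\item Take $p_1=(\tfrac12,\tfrac12,0)$ and $p_2=(\tfrac13,\tfrac13,\tfrac13)$. The profile is symmetric in alternatives $1,2$, so $a_1=a_2$. If $a_1=a_2>\tfrac13$, voter~2 must put all of $\tfrac12$ on alternative~3, which violates decomposability. If $a_1=a_2<\tfrac13$, then $a_3>\tfrac13$ exceeds both votes on alternative~3. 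Hence $a=(\tfrac13,\tfrac13,\tfrac13)$.
\item Voter~1 misreports $p_1'=(\tfrac5{12},\tfrac7{24},\tfrac7{24})$. The new profile is symmetric in alternatives $2,3$, and the same two-case argument forces $a'=p_1'$.
\item Voter~1's true utility rises from $\tfrac23$ to $\tfrac{17}{24}$, a strict gain.
\end{itemize}
Your ``alternatively'' variant in Step~3 is close to this. As you describe it, however, the misreport breaks the only symmetry ($1\leftrightarrow 2$) you planned to use, and you do not say what replaces it. You would again face a non-unique feasible set on the manipulated side.
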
\begin{figure}
    \centering
    \def\axispos{0.1}
    \def\coorddist{0.29}
    \def\coordwidth{0.06}
    \newcommand{\drawvoteannotated}[5]{
        \draw[ultra thick, #3] (#1*\coorddist-\coordwidth,#2) -- (#1*\coorddist+\coordwidth,#2)
        node[pos=0, anchor=east] {\large #4}
        node[pos=1, anchor=west] {\large #5};
    }
    \newcommand{\drawaggregate}[2]{
        \fill[pattern={Lines[angle=45*0.8,distance=4pt,line width=2pt]},pattern color=lightgray, draw=lightgray] (#1*\coorddist-\coordwidth,0) rectangle (#1*\coorddist+\coordwidth,#2);
    }
    \begin{subfigure}[t]{.48\textwidth}
        \centering
        \begin{tikzpicture}[yscale=3,xscale=6.5]
            \draw (\axispos,0) -- (\axispos,1);  
            \draw (\axispos-.02,0) -- (\axispos+0.02,0); 
            \draw (\axispos-.02,1) -- (\axispos+0.02,1); 
            \node[anchor=east, xshift=-5px] at (\axispos,0) {$0$}; 
            \node[anchor=east, xshift=-5px] at (\axispos,1) {$1$}; 
            \node[anchor=south, rotate=90] at (\axispos,0.5) {\small Budget}; 
            \filldraw[fill=alternativebarcolor,draw=none] (1*\coorddist-\coordwidth,0) rectangle (1*\coorddist+\coordwidth,1); 
            \filldraw[fill=alternativebarcolor,draw=none] (2*\coorddist-\coordwidth,0) rectangle (2*\coorddist+\coordwidth,1); 
            \filldraw[fill=alternativebarcolor,draw=none] (3*\coorddist-\coordwidth,0) rectangle (3*\coorddist+\coordwidth,1); 
            \node[anchor=north east, outer sep=4.4pt] at (1*\coorddist,0) {\small Alternative:};
            \node[anchor=north, outer sep=5pt] at (1*\coorddist,0) {$1$};
            \node[anchor=north, outer sep=5pt] at (2*\coorddist,0) {$2$};
            \node[anchor=north, outer sep=5pt] at (3*\coorddist,0) {$3$};
            \drawaggregate{1}{1/3}
            \drawaggregate{2}{1/3}
            \drawaggregate{3}{1/3}
            \def\shift{0.012}
            \drawvoteannotated{1}{1/2}{color1}{$\frac{1}{2}$}{}
            \drawvoteannotated{2}{1/2}{color1}{$\frac{1}{2}$}{}
            \drawvoteannotated{3}{0}{color1}{$0$}{}
            \drawvoteannotated{1}{1/3}{color2}{}{$\frac{1}{3}$}
            \drawvoteannotated{2}{1/3}{color2}{}{$\frac{1}{3}$}
            \drawvoteannotated{3}{1/3}{color2}{}{$\frac{1}{3}$}
        \end{tikzpicture}
        \caption{Profile $P$. The aggregate of any neutral and decomposable mechanism is $a = \big(\frac{1}{3}, \frac{1}{3}, \frac{1}{3}\big)$.}
        \label{fig:decomposability_truthfulness1}
    \end{subfigure}\hfill
    \begin{subfigure}[t]{.48\textwidth}
        \centering
        \begin{tikzpicture}[yscale=3,xscale=6.5]
            \draw (\axispos,0) -- (\axispos,1);  
            \draw (\axispos-.02,0) -- (\axispos+0.02,0); 
            \draw (\axispos-.02,1) -- (\axispos+0.02,1); 
            \node[anchor=east, xshift=-5px] at (\axispos,0) {$0$}; 
            \node[anchor=east, xshift=-5px] at (\axispos,1) {$1$}; 
            \node[anchor=south, rotate=90] at (\axispos,0.5) {\small Budget}; 
            \filldraw[fill=alternativebarcolor,draw=none] (1*\coorddist-\coordwidth,0) rectangle (1*\coorddist+\coordwidth,1); 
            \filldraw[fill=alternativebarcolor,draw=none] (2*\coorddist-\coordwidth,0) rectangle (2*\coorddist+\coordwidth,1); 
            \filldraw[fill=alternativebarcolor,draw=none] (3*\coorddist-\coordwidth,0) rectangle (3*\coorddist+\coordwidth,1); 
            \node[anchor=north east, outer sep=4.4pt] at (1*\coorddist,0) {\small Alternative:};
            \node[anchor=north, outer sep=5pt] at (1*\coorddist,0) {$1$};
            \node[anchor=north, outer sep=5pt] at (2*\coorddist,0) {$2$};
            \node[anchor=north, outer sep=5pt] at (3*\coorddist,0) {$3$};
            \drawaggregate{1}{5/12}
            \drawaggregate{2}{7/24}
            \drawaggregate{3}{7/24}
            \drawvoteannotated{1}{5/12}{color1}{$\frac{5}{12}$}{}
            \drawvoteannotated{2}{7/24}{color1}{$\frac{7}{24}$}{}
            \drawvoteannotated{3}{7/24}{color1}{$\frac{7}{24}$}{}
            \drawvoteannotated{1}{1/3}{color2}{}{$\frac{1}{3}$}
            \drawvoteannotated{2}{1/3}{color2}{}{$\frac{1}{3}$}
            \drawvoteannotated{3}{1/3}{color2}{}{$\frac{1}{3}$}
        \end{tikzpicture}
        \caption{Profile $P'$. The aggregate of any neutral and decomposable mechanism is $a' = \big(\frac{5}{12}, \frac{7}{24}, \frac{7}{24}\big)$.}
        \label{fig:decomposability_truthfulness2}
        \end{subfigure}
    \caption{Visualization of the profiles from the proof of \Cref{prop:decomp-truthfulness}. Voters are drawn as horizontal (blue and orange) lines and aggregates are drawn as (gray) hatched areas.}
    \label{fig:decomposability_truthfulness}
\end{figure}\begin{proof}
    Consider the profile $P \in \mathcal{P}_{2,3}$ with $p_1 = \big(\frac{1}{2}, \frac{1}{2}, 0\big)$ and $p_2 = \big(\frac{1}{3}, \frac{1}{3}, \frac{1}{3}\big)$ (see \Cref{fig:decomposability_truthfulness1}).
    Let $a$ be the output of a neutral, decomposable mechanism for $P$ and let $c_1$ and $c_2$ be voter contributions certifying decomposability of $a$. We claim that $a = \big(\frac{1}{3}, \frac{1}{3}, \frac{1}{3}\big)$. 
    Note that, by neutrality, we have $a_1 = a_2$. 
    Suppose first that $a_1 = a_2 > \frac{1}{3}$.
    Then, voter $2$ can not contribute to alternative $1$ and $2$, i.e., $c_{2,1} = c_{2,2} = 0$, which implies that this voter must put their whole contribution on alternative $3$, i.e., $c_{2,3} = \frac{1}{2}$. But then, we have $a_3 \ge \frac{1}{2} > \frac{1}{3} = p_{2,3}$, contradicting decomposability.
    On the other hand, if $a_1 = a_2 < \frac{1}{3}$, then we have $a_3 > \frac{1}{3} = p_{2,3} > p_{1,3}$, contradicting decomposability. 

    Now, consider the profile $P' \in \mathcal{P}_{2,3}$ in which voter $1$ misreports $p_1'= \big(\frac{5}{12}, \frac{7}{24}, \frac{7}{24}\big)$ (see \Cref{fig:decomposability_truthfulness2}). 
    Let $a'$ be the output of a neutral, decomposable mechanism for $P'$, and let $c_1'$ and $c_2'$ be voter contributions certifying decomposability of $a'$. 
    We claim that $a' = \big(\frac{5}{12}, \frac{7}{24}, \frac{7}{24}\big)$. 
    By neutrality, $a_2' = a_3'$. Suppose first that $a_2' = a_3' > \frac{7}{24}$. Then, voter $1$ cannot contribute to alternative $2$ and $3$, i.e., $c_{1,2} = c_{1,3} = 0$, which implies that this voter must put their whole contribution on alternative $1$, i.e., $c_{1,1} = \frac{1}{2}$. But then, we have $a_1' \ge \frac{1}{2} > \frac{5}{12} = p_{1,1}'$, contradicting decomposability.
    On the other hand if $a_2' = a_3' < \frac{7}{24}$, then we have $a_1' > \frac{5}{12} = p_{1,1}' > p_{2,1}$, contradicting decomposability.

    Since voter $1$ with preference $p_1$ profits from misreporting, i.e., $u_1(a) = \frac{2}{3} < \frac{17}{24} = u_1(a')$, we conclude from these profiles that any neutral and decomposable mechanism is not truthful.
\end{proof}

\smallskip

\subsection{Proportional Mechanisms}\label{subsec:prop-mechanisms}

We now introduce two single-minded proportional mechanisms, \UtP and \GrdDec, where \UtP is additionally truthful and \GrdDec satisfies decomposability.

\paragraph{\UtP}
The \UtP mechanism is the moving-phantom mechanism induced by
\begin{equation*}
    f_k(t) = \max\bigg(0, \min\bigg(\frac{n-k}{n}, t(n+1)-k\bigg)\bigg),
\end{equation*}
which corresponds to moving each phantom with index $k$ from $0$ to $\frac{n-k}{n}$ consecutively.

\citet{freeman2021truthful} show that, if there is a time such that each phantom with index $k$ is at position $\frac{n-k}{n}$, then the corresponding mechanism is single-minded proportional. We will show in \Cref{lem:proportional_phantoms_characterization} that in fact this condition is also necessary. For now, \Cref{prop:utilprop} is an immediate implication of their result (for time $t=1$) and the fact that \UtP is a moving-phantom mechanism.

\begin{proposition}
    \UtP is truthful and single-minded proportional. \label{prop:utilprop}
\end{proposition}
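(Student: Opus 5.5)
The plan has two parts: truthfulness follows from membership in the moving-phantom class, and single-minded proportionality from the criterion of \citet{freeman2021truthful} applied at time $t=1$.

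\emph{Truthfulness.} We check that the functions $f_k(t)=\max\bigl(0,\min\bigl(\tfrac{n-k}{n},\,t(n+1)-k\bigr)\bigr)$, $k\in[n]_0$, form a valid phantom system. Each $f_k$ is a composition of continuous non-decreasing maps (an affine function of $t$ with positive slope, then clipping by $\min$ and $\max$), so it is continuous and non-decreasing with values in $[0,1]$. For $k\ge 0$ we have $t(n+1)-k\le 0$ at $t=0$, hence $f_k(0)=0$. At $t=1$, $(n+1)-k\ge \tfrac{n-k}{n}$, so $f_k(1)=\tfrac{n-k}{n}\ge 1-\tfrac{k}{n}$. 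For the ordering $f_k(t)\ge f_{k+1}(t)$, both the cap $\tfrac{n-k}{n}$ and the affine part $t(n+1)-k$ are decreasing in $k$, and clipping preserves pointwise order. Hence \UtP is a moving-phantom mechanism, and the result of \citet{freeman2021truthful} gives truthfulness.

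\emph{Single-minded proportionality.} It suffices to exhibit a time at which phantom $k$ sits exactly at $\tfrac{n-k}{n}$ for every $k$; here $t=1$ works by the computation above. For completeness, the direct argument runs as follows. Take a single-minded profile where alternative $j$ has $s_j$ supporters. The median of the $2n+1$ values consists of $n-s_j$ zeros and $s_j$ ones among the votes, together with the phantoms $\tfrac{n-k}{n}$ for $k=0,\dots,n$. Of the $2n+1$ values, $n-s_j$ zeros plus the phantom $f_n=0$ lie at the bottom, and the phantoms are sorted. So the $(n+1)$-th smallest value is the phantom with value $\tfrac{s_j}{n}$. These medians sum to $\sum_j s_j/n=1$, so $t^\star=1$ is a time of normalization, and the output is the average of the votes.

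The only step requiring care is this median count. It is exactly the content of the cited result, so no real obstacle arises.
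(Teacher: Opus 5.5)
Your proof is correct and follows the paper's argument. \UtP is a valid moving-phantom mechanism, hence truthful by \citet{freeman2021truthful}, and at $t=1$ every phantom $f_k$ sits exactly at $\frac{n-k}{n}$, which by the same paper's criterion gives single-minded proportionality; your explicit check of the phantom-system axioms and your median count on single-minded profiles only spell out what the paper leaves implicit (the median count also appears in the backward direction of the proof of \Cref{lem:proportional_phantoms_characterization}).
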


In addition to truthfulness, \UtP satisfies neutrality and anonymity, since it is a moving-phantom mechanism.

\paragraph{\GrdDec}
The underlying idea of \GrdDec is to iteratively allocate the budget to alternatives
where the maximum number of voters approve of the spending. However, since we aim for decomposability, we need to ensure that any marginal spending is covered by the budget of a voter above it. To do this, we always charge spending to the highest-voting voter with a positive remaining budget. This choice is ``safe'', as it avoids later increasing the spending on some alternative above the vote of a voter who has already contributed to that alternative, which would violate decomposability.

As a warm-up, we describe a non-neutral, non-anonymous version of this mechanism. The algorithm starts with an empty allocation $\Tilde{a} = (0, \dots, 0)$ and voter budgets $b_i = \frac{1}{n}$ for every voter $i$, and it spends these budgets over $n$ iterations. In each iteration $k$, we go through the alternatives in some arbitrary order. For each alternative, we aim to increase $\Tilde{a}_j$ to $\mu^k_j$ (the $k^{\text{th}}$ lowest vote on alternative $j$), if the (at least $n+1-k$) voters approving this marginal increase can afford it with their own budget.
We first check if there are voters with a positive budget who vote higher than $\Tilde{a}_j$. If not, we skip this alternative. Otherwise, we find the highest-voting voter $i$ with positive budget $b_i > 0$ (subject to some tie-breaking) and increase $\Tilde{a}_j$ while charging voter $i$ for it. We do this until either voter $i$ runs out of budget, in which case we look for the next highest vote above $\Tilde{a}_j$, or until $\Tilde{a}_j$ reaches $\mu^k_j$, in which case we move on to the next alternative. 
Note that if $\Tilde{a}_j$ reaches the vote of voter $i$ (the current maximum voter with positive budget), then $\Tilde{a}_j$ also reaches $\mu^k_j = p_{i,j}$. Thus, there is no voter with positive budget voting higher than $\Tilde{a}_j$ anymore and we will not spend any more budget on alternative $j$ in further iterations, which is important for decomposability. This has the effect that in each iteration $k$ the algorithm will only consider increasing $\Tilde{a}_j$ to $\mu^k_j$, if we already spent $\Tilde{a}_j = \mu^{k-1}_j$ (where we denote $\mu^0_j = 0$) on alternative $j$.

While the procedure described above is already a decomposable mechanism, we formalize an anonymous and neutral version of it as \GrdDec in \Cref{alg:greedy_decomposable}. The conceptual differences are that, in each iteration $k$, (1) we spend the budget on all alternatives at the same time and (2) in case there are multiple voters with the maximum vote, we charge all of them equally when increasing $\Tilde{a}$.
To achieve this, in each iteration $k$, we first identify for every alternative $j$ the set $N_j^+$ of voters with the maximum vote on $j$, among those with votes higher than $\Tilde{a}_j$ that still have a positive budget. 
Given an alternative $j$ and some value $\tau \in [0,1]$, we charge each voter $i \in N_j^+$ a payment of $\pi_{i,j}(\tau) = \max\Big(0, \frac{\min(\mu^k_j, \tau) - \Tilde{a}_j}{|N_j^+|}\Big)$ and the other voters $i \in [n] \setminus N_j^+$ pay $\pi_{i,j}(\tau) = 0$. 
We can think of the value $\tau$ as a proposal to increase $\Tilde{a}$ on each alternative $j$ to $\tau$ whenever $\tau$ lies between $\Tilde{a}_j$ and $\mu^k_j$, that is, to $\max(\Tilde{a}_j, \min(\mu^k_j, \tau))$. 
Each proposed value $\tau$ then comes with payments $\pi_{i,j}$ for each voter $i \in N^+_j$ on each alternative $j$, and we need to make sure that no voter pays more than they have. Therefore, we compute the maximum value $\tau^\star \le 1$, such that $b_i \ge \sum_{j \in [m]} \pi_{i,j}(\tau^\star)$ for all voters $i \in [n]$.
For each alternative $j \in [m]$, we then update $\Tilde{a}_j$ to $\Tilde{a}_j + \sum_{i\in [n]} \pi_{i,j}(\tau^\star)$ and $b_i$ to $b_i - \pi_{i,j}(\tau^\star)$ for $i \in N_j^+$. We repeat this iteratively, until $\tau^\star = 1$, and then continue with the next iteration for $k$. 
Note that, for all alternatives with $N^+_j \neq \emptyset$, the update of $\Tilde{a}_j$ can equivalently be written as $\Tilde{a}_j = \max(\Tilde{a}_j, \min(\mu^k_j, \tau^\star))$. We thus never increase $\Tilde{a}_j$ above $\mu^k_j$ (and therefore never above $p_{i,j}$ for any $i \in N_j^+$) in iteration $k$ and never increase $\Tilde{a}_j$ at all if there are no voters with positive budget voting above it, i.e., $N^+_j = \emptyset$.

\begin{algorithm}
    \caption{\GrdDec}\label{alg:greedy_decomposable}
    $\Tilde{a} \gets (0)_{j \in [m]}$\;
    $b_i \gets \frac{1}{n}$ for each $i \in [n]$\;
    \For(\tcp*[f]{``outer iterations''}){$k = 1, \dots, n$}{
      $\tau^\star = 0$\;
      \While(\tcp*[f]{``inner iterations''}){$\tau^\star \neq 1$}{
        $N_j^+ \gets \argmax \{p_{i,j}: i \in [n] \text{ s.t. } b_{i} > 0 \text{ and } p_{i,j} > \Tilde{a}_j\}$ for each $j \in [m]$\; 
        choose $\tau^\star \le 1$ maximal such that $b_i \ge \sum_{j \in [m]} \pi_{i,j}(\tau^\star)$ for all voters $i \in [n]$, where $\pi_{i,j}(\tau^\star) = \max\Big(0, \frac{\min(\mu^k_j, \tau^\star) - \Tilde{a}_j}{|N_j^+|}\Big)$ if $i \in N_j^+$ and $\pi_{i,j}(\tau^\star) = 0$ otherwise\; %
        \For{$j \in [m], i \in [n]$}{
            $b_i \gets b_i - \pi_{i,j}(\tau^\star)$ \; \label{alg_line:update_voter_budget}
            $\Tilde{a}_j \gets \Tilde{a}_j + \pi_{i,j}(\tau^\star)$ \;
        }
        
      }
    }
\end{algorithm}

The following proposition, proven in \Cref{app:propGreedyDecomp}, summarizes the main properties of \GrdDec: It outputs a feasible aggregate and it satisfies decomposability.

\begin{restatable}{proposition}{propGreedyDecomp}
    \GrdDec is a decomposable budget aggregation mechanism and can be computed in polynomial time. \label{prop:greedydecomp_decomposable}
\end{restatable}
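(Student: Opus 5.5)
The plan is to verify three things about \Cref{alg:greedy_decomposable}: the final $\Tilde{a}$ is a valid aggregate (it sums to $1$), the payments define contributions certifying decomposability, and the number of inner iterations is polynomially bounded, with each one computable in polynomial time.

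\textbf{Certificate.} I would set $c_{i,j}$ to be the total of the payments $\pi_{i,j}(\tau^\star)$ charged to voter $i$ on alternative $j$ across all iterations. By construction $\sum_i c_{i,j} = \Tilde{a}_j$. Since budgets never become negative, $\sum_j c_{i,j} \le \frac1n$, with equality exactly when every budget is exhausted.

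For the second decomposability condition I would prove an invariant. Suppose voter $i$ is ever charged on $j$, at which point $i \in N_j^+$. Then every later value of $\Tilde{a}_j$ is at most $p_{i,j}$. At the moment of the charge, $\Tilde{a}_j$ rises to at most $\mu^k_j$. I would argue $\mu^k_j \le p_{i,j}$, because $\Tilde{a}_j \ge \mu^{k-1}_j$ holds whenever $N_j^+$ is nonempty in outer iteration $k$. Later increases on $j$ are charged to voters in later sets $N_j^+$. Either $i$ still has budget, in which case it remains the maximal positive-budget voter above $\Tilde{a}_j$ (or is tied with it), so the cap $\min(\mu^k_j,\tau)$ never exceeds $p_{i,j}$. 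Or $i$ has zero budget, in which case the new maximal voter votes at most $p_{i,j}$, because voters above $i$ with positive budget would have been charged instead. The delicate point is that once $\Tilde{a}_j$ reaches $p_{i,j}$, no voter with positive budget votes strictly above it among those charged earlier. I would formalize this as: the sequence of charged voters on $j$ has non-increasing votes, and $\Tilde{a}_j$ never exceeds the vote of any charged voter.

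\textbf{Feasibility (main obstacle).} I must show all budgets are spent by the end of iteration $k=n$. At that point $\mu^n_j = \max_i p_{i,j}$. Suppose some voter $i$ ends with $b_i > 0$. Then $\sum_j \Tilde{a}_j < 1 = \sum_j p_{i,j}$, so some $j$ has $\Tilde{a}_j < p_{i,j} \le \mu^n_j$, and therefore $N_j^+ \neq \emptyset$. However, the inner loop terminates only when $\tau^\star = 1$. With $\tau^\star = 1$, every alternative with $N_j^+ \neq \emptyset$ is raised to $\mu^n_j$, which would push $\Tilde{a}_j$ past $p_{i,j}$ unless some budget constraint binds. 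I need to rule this out. The termination of the loop with $\tau^\star=1$ means the final step raised every such $j$ to $\mu^n_j \ge p_{i,j}$, a contradiction. What remains is to show the inner loop actually reaches $\tau^\star = 1$. Whenever $\tau^\star < 1$, some voter's budget hits zero, so $N^+$ sets change. Each inner iteration either exhausts at least one voter or ends the outer iteration. This gives at most $n$ inner iterations per outer iteration, hence $O(n^2)$ inner iterations overall, which also yields termination.

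\textbf{Polynomial time.} Each inner step requires finding the maximal $\tau^\star$. Each voter's total payment $\sum_j \pi_{i,j}(\tau)$ is piecewise linear and nondecreasing in $\tau$, with breakpoints among $\{\Tilde{a}_j,\mu^k_j\}$. So I can sort the $O(m)$ breakpoints and solve a linear equation on the right segment for each voter, then take the minimum. Combined with the $O(n^2)$ bound on inner iterations, this gives a polynomial running time. The main care point is the feasibility argument, in particular showing that a leftover budget contradicts the $\tau^\star = 1$ termination of the last outer iteration.
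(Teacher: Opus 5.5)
Your proposal is correct and follows the paper's proof essentially step by step. Budget exhaustion comes from a contradiction with the final $\tau^\star=1$ inner iteration; you use outer iteration $n$ where the paper uses the $k$ with $\mu^k_j=p_{i,j}$, and both work. Decomposability rests on the invariant $\tilde a_j\ge\mu^{k-1}_j$ together with the fact that voters voting above a charged voter already have zero budget. Polynomial time comes from the piecewise-linear payments. The only difference is your $O(n^2)$ count of inner iterations: each voter can be exhausted at most once overall, so your own observation already gives the paper's sharper bound of at most $2n$ inner iterations.
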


\subsection{Welfare Approximation} \label{subsec:welfare_approximation}

Given the existence of single-minded proportional and decomposable mechanisms, we now turn to the question of how much welfare has to be sacrificed to satisfy these properties. 
The function $\alpha^\star\colon \N\to \R$, defined as 
\[
    \alpha^\star(n) = \max_{\ell\in [n]} \frac{n\ell}{n+\ell(\ell-1)}\quad \text{for all } n\in \N,
\]
plays a key role in what follows.
The following simple proposition, whose proof is deferred to \Cref{app:propAlphaStar}, provides a simple %
upper bound on the value of this function for each $n$.
\begin{restatable}{proposition}{propAlphaStar}\label{prop:alpha-star}
    For every $n\in \N$, we have
    \(
        \alpha^\star(n) 
        \leq \frac{n}{2\sqrt{n}-1}.
    \)
\end{restatable}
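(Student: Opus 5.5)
The plan is to bound each term of the maximum separately. Since $\alpha^\star(n)$ is a maximum over $\ell\in[n]$, it suffices to show that for every fixed $\ell\in[n]$,
\[
    \frac{n\ell}{n+\ell(\ell-1)} \le \frac{n}{2\sqrt{n}-1}.
\]
Both denominators are positive: $n+\ell(\ell-1)\ge n>0$, and $2\sqrt{n}-1\ge 1$ for $n\ge 1$. So I would cross-multiply and divide by $n$, which reduces the claim to
\[
    \ell(2\sqrt{n}-1) \le n+\ell^2-\ell .
\]

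After cancelling $-\ell$ on both sides, this is $2\ell\sqrt{n}\le n+\ell^2$. That is exactly $(\ell-\sqrt{n})^2\ge 0$, so the inequality holds for every real $\ell$, and in particular for every $\ell\in[n]$. Taking the maximum over $\ell$ gives the proposition.

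There is no real obstacle here. The only points needing care are the positivity of the denominators, which justifies cross-multiplying without flipping the inequality, and the observation that the bound is tight precisely when $\sqrt{n}$ is an integer, since then $\ell=\sqrt{n}$ is an admissible choice. That observation explains why the bound $\frac{n}{2\sqrt{n}-1}$ is the natural closed form to compare against.
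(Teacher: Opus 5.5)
Your proof is correct, but it takes a different and more elementary route than the paper's. The paper extends $h(\ell)=\frac{n\ell}{n+\ell(\ell-1)}$ to the real interval $[0,n]$ and computes $h'(\ell)=\frac{n(n-\ell^2)}{(n+\ell(\ell-1))^2}$. From this it concludes that $h$ increases on $[0,\sqrt{n})$ and decreases on $(\sqrt{n},n]$, and it evaluates $h(\sqrt{n})=\frac{n}{2\sqrt{n}-1}$.

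You instead check the inequality pointwise for each integer $\ell$, reducing it to $(\ell-\sqrt{n})^2\ge 0$. Equivalently, this is AM--GM applied to $\frac{n}{\ell}+\ell\ge 2\sqrt{n}$.

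Your argument is shorter and needs no derivative or continuous relaxation. It also gives the equality case $\ell=\sqrt{n}$ directly. The paper's calculus argument additionally shows that $h$ is unimodal, which places the discrete maximizer in the definition of $\alpha^\star(n)$ at $\lfloor\sqrt{n}\rfloor$ or $\lceil\sqrt{n}\rceil$. That is not needed for the stated bound, but it is useful if one wants the exact value of $\alpha^\star(n)$ when $n$ is not a perfect square.
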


We first show that $\alpha^\star(n)$ provides a lower bound on the welfare approximation achievable by any single-minded proportional mechanism.
We note that a closely related instance was used by \citet[Theorem 1]{MPS20a} to show that, under approval preferences, the welfare loss implied by individual fair share is at least $\frac{\sqrt{m}}{2}$. Since their construction is single-minded and thus falls within our model, and since individual fair share is weaker than single-minded proportionality on single-minded instances, the same lower bound of $\frac{\sqrt{m}}{2}$ carries over to our setting.

\begin{proposition} \label{prop:proportional_social_welfare_lower_bound}
    For every $n\in \N$ and every single-minded proportional budget aggregation mechanism $\A$, there exists $m\in \N$ and a profile $P\in \mathcal{P}_{n,m}$ such that $\frac{w^\opt}{w(\A(P))} \ge \alpha^\star(n)$. %
\end{proposition}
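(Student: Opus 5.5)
We fix $n$ and let $\ell\in[n]$ attain the maximum in the definition of $\alpha^\star(n)$. The plan is to build a single-minded profile on which single-minded proportionality pins down the output completely, since it must equal the average of the votes. The profile should be designed so that this forced average is far from welfare-optimal, while some other allocation collects welfare close to $n$ by exploiting the overlap utilities.

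The construction generalizes the instance of \citet[Theorem 1]{MPS20a}. We take $m=n-\ell+\ell$ alternatives, but organize them as follows:
\begin{itemize}
\item $\ell$ voters are single-minded on a common alternative $1$;
\item the remaining $n-\ell$ voters are each single-minded on their own distinct alternative.
\end{itemize}
Since the profile is single-minded, any single-minded proportional mechanism $\A$ must return $a=\frac1n\sum_i p_i$. This puts $\frac{\ell}{n}$ on alternative $1$ and $\frac1n$ on each of the other alternatives. Each voter's utility is just the amount spent on their alternative, so
\[
w(a)=\ell\cdot\frac{\ell}{n}+(n-\ell)\cdot\frac1n=\frac{\ell^2+n-\ell}{n}=\frac{n+\ell(\ell-1)}{n}.
\]
The allocation putting the whole budget on alternative $1$ has welfare $\ell$, hence $w^\opt\ge \ell$ and
\[
\frac{w^\opt}{w(\A(P))}\ge \frac{n\ell}{n+\ell(\ell-1)}=\alpha^\star(n).
\]
Here $m=n-\ell+1$. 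If $\ell=n$ we have $m=1$, but in that case the bound is $1$ and is trivial; we can add a dummy alternative to keep $m\ge2$. Similarly, when $\ell<n$ and $m\ge 2$, everything is fine.

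The main point to verify is that this simple instance really matches the $\ell(\ell-1)$ term. The computation above shows it does: the forced average wastes $\frac{n-\ell}{n}$ of the budget on singleton supporters, which is exactly what yields the ratio. No further obstacle is expected beyond checking that the choice of maximizing $\ell$ gives exactly $\alpha^\star(n)$.
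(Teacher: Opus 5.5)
Your proposal is correct and is essentially the paper's proof: the same single-minded profile with $\ell$ voters on one common alternative and $n-\ell$ voters on distinct singleton alternatives ($m=n-\ell+1$), where proportionality forces $(\frac{\ell}{n},\frac{1}{n},\dots,\frac{1}{n})$ while spending the whole budget on the common alternative gives welfare $\ell$, yielding the ratio $\frac{n\ell}{n+\ell(\ell-1)}$. Your dummy-alternative fix for $\ell=n$ (to keep $m\ge 2$) is a small point the paper leaves implicit. Two slips should be cleaned up: the stray ``$m=n-\ell+\ell$'', and the remark that the alternative allocation collects welfare ``close to $n$'' (it collects exactly $\ell$).
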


\begin{proof}
    Fix the values $n\in \N$ and $\ell\in [n]$ and the mechanism $\A$, and consider the following single-minded instance with $n$ voters and $m = n-\ell+1$ alternatives.
    The first $n-\ell$ voters each want to allocate their entire budget to a different alternative, i.e., for $i \in [n-\ell]$ we have $p_{i,j} = 1$ if $i = j$ and $p_{i,j} = 0$ otherwise. The remaining $\ell$ voters want to allocate their entire budget to the last alternative, i.e., for $i \in \{n-\ell+1, \dots, n\}$ we have $p_{i,j} = 1$ if $j = m$ and $p_{i,j} = 0$ otherwise. See \Cref{fig:proportional_social_welfare_lower_bound} for a visualization of this profile in the case where $\ell=\sqrt{n}\in \N$.

    Since $\A$ is single-minded proportional we have $\A(P) = \big(\frac{1}{n}, \dots, \frac{1}{n}, \frac{\ell}{n}\big)$, while an optimal outcome in terms of welfare (in fact, the unique one when $\ell>1$) is $a^\opt = (0, \dots, 0, 1)$. Thus,
    \[
        \frac{w(a^\opt)}{w(\A(P))} = \frac{\ell}{\frac{1}{n}(n-\ell) + \frac{\ell}{n} \ell} = \frac{n\ell}{n+\ell(\ell-1)}.
    \]
    Since this analysis is valid for any $\ell\in [n]$, it is valid in particular for the one that maximizes the last term, which finishes the proof.
\end{proof}
\begin{figure}
    \centering
    \def\coorddist{0.18}
    \def\coordwidth{0.06}
    \newcommand{\drawaggregatearea}[5]{
        \fill[pattern={Lines[angle=#4*0.8,distance=4pt,line width=2pt]},pattern color=#3!80!white, draw=#3!60!white] (#1*\coorddist-\coordwidth,0) rectangle (#1*\coorddist+\coordwidth,#2);
    }
    \newcommand{\drawaggregatelineannotated}[5]{
        \draw[#3] (#1*\coorddist-\coordwidth,#2) -- (#1*\coorddist+\coordwidth,#2) node[anchor=west] {#5};
    }
    \begin{tikzpicture}[yscale=3,xscale=6]
        \draw (0,0) -- (0,1);  
        \draw (-.02,0) -- (0.02,0); 
        \draw (-.02,1) -- (0.02,1); 
        \node[anchor=east, xshift=-5px] at (0,0) {$0$}; 
        \node[anchor=east, xshift=-5px] at (0,1) {$1$}; 
        \node[anchor=south, rotate=90] at (0,0.5) {\small Budget}; 
        \filldraw[fill=alternativebarcolor,draw=none] (1*\coorddist-\coordwidth,0) rectangle (1*\coorddist+\coordwidth,1); 
        \node at (2*\coorddist,0.5) {$\cdots$}; 
        \filldraw[fill=alternativebarcolor,draw=none] (3*\coorddist-\coordwidth,0) rectangle (3*\coorddist+\coordwidth,1); 
        \filldraw[fill=alternativebarcolor,draw=none] (5*\coorddist-\coordwidth,0) rectangle (5*\coorddist+\coordwidth,1); 
        \node[anchor=north east, outer sep=4.4pt] at (1*\coorddist,0) {\small Alternative:};
        \node[anchor=north, outer sep=5pt] at (1*\coorddist,0) {$1$};
        \node[anchor=north, outer sep=5pt] at (2*\coorddist,0) {$\cdots$\phantom{2}};
        \node[anchor=north, outer sep=5pt] at (3*\coorddist,0) {$n-\sqrt{n}$};
        \node[anchor=north, outer sep=5pt] at (5*\coorddist,0) {$n-\sqrt{n}+1$};
        \drawaggregatearea{5}{1}{lightgray}{45}{$1$}
        \drawaggregatearea{1}{0.1}{darkgray}{-45}{$\frac{1}{n}$}
        \drawaggregatearea{3}{0.1}{darkgray}{-45}{$\frac{1}{n}$}
        \drawaggregatearea{5}{0.35}{darkgray}{-45}{$\frac{1}{\sqrt{n}}$}
        \drawaggregatelineannotated{5}{1}{lightgray}{45}{$1$}
        \drawaggregatelineannotated{1}{0.1}{darkgray}{-45}{$\frac{1}{n}$}
        \drawaggregatelineannotated{3}{0.1}{darkgray}{-45}{$\frac{1}{n}$}
        \drawaggregatelineannotated{5}{0.35}{darkgray}{-45}{$\frac{1}{\sqrt{n}}$}
        \draw[ultra thick, color1] (1*\coorddist-\coordwidth,1) -- (1*\coorddist+\coordwidth,1) node[pos=0, anchor=east, inner sep=1pt] {\small $1\times$}; 
        \draw[ultra thick, color1] (3*\coorddist-\coordwidth,1) -- (3*\coorddist+\coordwidth,1) node[pos=0, anchor=east, inner sep=1pt] {\small $1\times$}; 
        \draw[ultra thick, color1] (5*\coorddist-\coordwidth,1) -- (5*\coorddist+\coordwidth,1) node[pos=0, anchor=east, inner sep=1pt] {\small $\sqrt{n}\,\times$}; 
    \end{tikzpicture}
    \caption{Example instance used for the proof of \Cref{prop:proportional_social_welfare_lower_bound}. Only non-zero votes are drawn. The numbers to the left of the votes (``$\sqrt{n} \, \times$'') indicate the number of voters voting that value. The aggregate of any single-minded proportional mechanism is drawn in dark gray (downwards hatching) and the welfare-optimal aggregate in light gray (upwards hatching).}
    \label{fig:proportional_social_welfare_lower_bound}
\end{figure}

In the remainder of this section, we show that the above lower bound on the welfare approximation that single-minded proportional mechanisms can guarantee is tight and can be attained by both \UtP and \GrdDec.
This implies that, on top of a best worst-case welfare guarantee, we can either add truthfulness or strengthen single-minded proportionality to decomposability.
To prove that both mechanisms meet the welfare bound from \Cref{prop:proportional_social_welfare_lower_bound}, we introduce two properties of budget aggregation mechanisms that, together, suffice to achieve this guarantee. We then show that \UtP and \GrdDec satisfy these properties. 

\paragraph{Range respect}
The first property states that a mechanism should never spend less than the minimum vote or more than the maximum vote on each alternative.
Formally, a mechanism $\A$ is called \emph{range respecting} if, for every profile $P \in \mathcal{P}_{n,m}$, the output of $\A$ lies between the minimum and maximum vote on each alternative $j$, i.e., $\mu^1_j = \min_{i \in [n]} p_{i,j} \le \A(P)_j \le \max_{i \in [n]} p_{i,j} = \mu^n_j$.

\paragraph{Proportional spending} The second property requires a mechanism to spend, for any $k \in [n]$, at least $\frac{n-k+1}{n}$ 
in total below the ceiling formed by the $k^\text{th}$ lowest votes on the alternatives (if possible), 
since that is the spending supported by $n-k+1$ voters.
Formally, a mechanism satisfies \emph{proportional spending} if, for every profile $P \in \mathcal{P}_{n,m}$ and $k \in [n]$, we have 
\[
    \overlap{\A(P)}{\mu^k} \ge \min\Bigg( \frac{n-k+1}{n}, \sum_{j \in [m]} \mu^k_j \Bigg).
\]

\begin{figure}
    \centering
    \def\coordwidth{0.055}
    \def\coorddist{0.22}
    \def\posone{0.22}
    \def\postwo{\posone + \coorddist}
    \def\posthr{\postwo + \coorddist}
    \def\posfou{\posthr + \coorddist}
    \newcommand{\drawvoteannotated}[5]{
        \draw[very thick, #3] (#1-\coordwidth,#2) -- (#1+\coordwidth,#2)
        node[pos=0, inner sep=1pt, anchor=east] {\small #4}
        node[pos=1, inner sep=2pt, anchor=west] {\small #5};
    }
    \newcommand{\drawVOTEannotated}[5]{
        \draw[very thick, #3] (#1-\coordwidth,#2) -- (#1+\coordwidth,#2)
        node[pos=0, inner sep=1pt, anchor=east] {\small #4}
        node[pos=1, inner sep=2pt, anchor=west] {\small #5};
    }
    \newcommand{\drawaggregate}[5]{
        \fill[pattern={Lines[angle=#5*0.8,distance=4pt,line width=2pt]},pattern color=#4!70!white, draw=#4!50!white] (#1-\coordwidth,#2) rectangle (#1-\coordwidth+\coordwidth*2,#3);
    }
    \newcommand{\drawarea}[5]{
        \fill[color=#4!30!white] (#1-\coordwidth,#2) rectangle (#1-\coordwidth+\coordwidth*2,#3);
    }
    \begin{subfigure}[t]{0.48\textwidth}
        \centering
        \begin{tikzpicture}[yscale=3,xscale=7]
            \draw (0,0) -- (0,1);  
            \draw (-.02,0) -- (0.02,0); 
            \draw (-.02,1) -- (0.02,1); 
            \node[anchor=east, xshift=-5px] at (0,0) {$0$}; 
            \foreach \x in {1,...,9}{\draw (-.01,\x/10) -- (0.01,\x/10);}
            \node[anchor=east, xshift=-5px] at (0,1) {$1$}; 
            \node[anchor=south, rotate=90] at (0,0.5) {\small Budget}; 
            \filldraw[fill=alternativebarcolor,draw=none] (\posone-\coordwidth,0) rectangle (\posone+\coordwidth,1); 
            \filldraw[fill=alternativebarcolor,draw=none] (\postwo-\coordwidth,0) rectangle (\postwo+\coordwidth,1); 
            \filldraw[fill=alternativebarcolor,draw=none] (\posthr-\coordwidth,0) rectangle (\posthr+\coordwidth,1); 
            \node[anchor=north, outer sep=5pt] (label1) at (\posone,0) {1};
            \node[anchor=north, outer sep=5pt] at (\postwo,0) {2};
            \node[anchor=north, outer sep=5pt] at (\posthr,0) {3};
            \node[anchor=east, outer sep=0pt] at (label1.west) {\small Alternative:};
            \drawarea{\posone}{0}{1/10}{color2}{45};
            \drawarea{\postwo}{0}{2/10}{color2}{45};
            \drawarea{\posthr}{0}{1/10}{color2}{45};
            \drawaggregate{\posone}{0}{1/10}{darkgray}{45};
            \drawaggregate{\postwo}{0}{2/10}{darkgray}{45};
            \drawaggregate{\posthr}{0}{1/10}{darkgray}{45};
            \drawvoteannotated{\posone}{9/10}{color1}{}{};
            \drawvoteannotated{\posone}{0/10}{color1}{}{};
            \drawVOTEannotated{\posone}{1/10}{color2}{$\mu^2_1$}{};
            \drawvoteannotated{\posone}{6/10}{color1}{}{};
            \drawvoteannotated{\postwo}{1/10}{color1}{}{};
            \drawvoteannotated{\postwo}{6/10}{color1}{}{};
            \drawVOTEannotated{\postwo}{2/10}{color2}{$\mu^2_2$}{};
            \drawvoteannotated{\postwo}{3/10}{color1}{}{};
            \drawvoteannotated{\posthr}{0/10}{color1}{}{};
            \drawvoteannotated{\posthr}{4/10}{color1}{}{};
            \drawvoteannotated{\posthr}{7/10}{color1}{}{};
            \drawVOTEannotated{\posthr}{1/10}{color2}{$\mu^2_3$}{};
        \end{tikzpicture}
    \end{subfigure}\hfill
    \begin{subfigure}[t]{0.48\textwidth}
        \begin{tikzpicture}[yscale=3,xscale=7]
            \draw (0,0) -- (0,1);  
            \draw (-.02,0) -- (0.02,0); 
            \draw (-.02,1) -- (0.02,1); 
            \node[anchor=east, xshift=-5px] at (0,0) {$0$}; 
            \foreach \x in {1,...,9}{\draw (-.01,\x/10) -- (0.01,\x/10);}
            \node[anchor=east, xshift=-5px] at (0,1) {$1$}; 
            \node[anchor=south, rotate=90] at (0,0.5) {\small Budget}; 
            \filldraw[fill=alternativebarcolor,draw=none] (\posone-\coordwidth,0) rectangle (\posone+\coordwidth,1); 
            \filldraw[fill=alternativebarcolor,draw=none] (\postwo-\coordwidth,0) rectangle (\postwo+\coordwidth,1); 
            \filldraw[fill=alternativebarcolor,draw=none] (\posthr-\coordwidth,0) rectangle (\posthr+\coordwidth,1); 
            \node[anchor=north, outer sep=5pt] (label1) at (\posone,0) {1};
            \node[anchor=north, outer sep=5pt] at (\postwo,0) {2};
            \node[anchor=north, outer sep=5pt] at (\posthr,0) {3};
            \node[anchor=east, outer sep=0pt] at (label1.west) {\small Alternative:};
            \drawarea{\posone}{0}{6/10}{color2}{45};
            \drawarea{\postwo}{0}{3/10}{color2}{45};
            \drawarea{\posthr}{0}{4/10}{color2}{45};
            \drawaggregate{\posone}{0}{2/10}{darkgray}{45};
            \drawaggregate{\postwo}{0}{2/10}{darkgray}{45};
            \drawaggregate{\posthr}{0}{1/10}{darkgray}{45};
            \drawvoteannotated{\posone}{9/10}{color1}{}{};
            \drawvoteannotated{\posone}{0/10}{color1}{}{};
            \drawvoteannotated{\posone}{1/10}{color1}{}{};
            \drawVOTEannotated{\posone}{6/10}{color2}{$\mu^3_1$}{};
            \drawvoteannotated{\postwo}{1/10}{color1}{}{};
            \drawvoteannotated{\postwo}{6/10}{color1}{}{};
            \drawvoteannotated{\postwo}{2/10}{color1}{}{};
            \drawVOTEannotated{\postwo}{3/10}{color2}{$\mu^3_2$}{};
            \drawvoteannotated{\posthr}{0/10}{color1}{}{};
            \drawVOTEannotated{\posthr}{4/10}{color2}{$\mu^3_3$}{};
            \drawvoteannotated{\posthr}{7/10}{color1}{}{};
            \drawvoteannotated{\posthr}{1/10}{color1}{}{};
        \end{tikzpicture}
    \end{subfigure}
    \caption{Illustration of the proportional spending property on an instance with $n = 4$ voters and $m = 3$ alternatives. For $k = 2$ (left figure), we have $\frac{4}{10} = \sum_{j \in [m]} \mu^k_j < \frac{n-k+1}{n} = \frac{3}{4}$, thus proportional spending requires that $a_j \ge \mu^k_j$ for each alternative $j$. For $k = 3$ (right figure), we have $\frac{13}{10} = \sum_{j \in [m]} \mu^k_j > \frac{n-k+1}{n} = \frac{1}{2}$, thus proportional spending requires that $a_j$ spends a total of at least $\frac{1}{2}$ below $\mu^3$.}
    \label{fig:proportionalSpending}
\end{figure}

These two properties together guarantee the best-possible welfare approximation constrained to single-minded proportionality, as stated in the following theorem.

\begin{theorem}\label{thm:optimal_social_welfare_approximation}
    Any range-respecting mechanism satisfying proportional spending attains a welfare approximation of $\alpha^\star(n)$ for every $n\in \N$.
\end{theorem}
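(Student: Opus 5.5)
The plan is to decompose welfare into ``layers'' indexed by $k\in[n]$ and compare the optimum and the mechanism layer by layer. The basic identity is that for any allocation $a$ and alternative $j$, the multiset $\{p_{i,j}\}_{i\in[n]}$ equals $\{\mu^k_j\}_{k\in[n]}$. Hence $\sum_i \min(p_{i,j},a_j)=\sum_k \min(\mu^k_j,a_j)$, and summing over $j$ gives
\[
w(a)=\sum_{k\in[n]} \overlap{a}{\mu^k}.
\]
Write $S_k=\sum_j \mu^k_j$ and $s_k=\min(1,S_k)$. Since $\mu^k$ is coordinatewise nondecreasing in $k$, the sequence $s_k$ is nondecreasing. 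Moreover $S_n\ge \sum_j p_{1,j}=1$, so $s_n=1$.

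\emph{Upper bound on the optimum.} Since $a^\opt\in\Delta^{(m-1)}$, we have $\overlap{a^\opt}{\mu^k}\le \min(1,S_k)=s_k$. Therefore $w^\opt\le \sum_k s_k$.

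\emph{Lower bound on the mechanism.} Let $a=\A(P)$ and $L_k=\overlap{a}{\mu^k}$. Three facts hold.
\begin{itemize}
\item $L_k$ is nondecreasing in $k$.
\item Proportional spending gives $L_k\ge \min\big(\frac{n-k+1}{n},S_k\big)=\min\big(\frac{n-k+1}{n},s_k\big)$.
\item Range respect gives $a\le \mu^n$ coordinatewise, so $L_n=\sum_j a_j=1$.
\end{itemize}

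\emph{Choice of threshold.} Let $r$ be the smallest index with $s_r\ge \frac{n-r+1}{n}$. Such an index exists because $s_n=1\ge \frac1n$. Set $\ell=n-r+1\in[n]$ and $X=\sum_{k<r}s_k\ge 0$.
\begin{itemize}
\item For $k<r$ we have $s_k<\frac{n-k+1}{n}$, so $L_k\ge s_k$.
\item For $r\le k\le n-1$, monotonicity gives $L_k\ge L_r\ge \frac{n-r+1}{n}=\frac{\ell}{n}$.
\item For $k=n$ we have $L_n=1$.
\end{itemize}
Summing, $w(a)\ge X+(\ell-1)\frac{\ell}{n}+1=X+\frac{n+\ell(\ell-1)}{n}$. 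On the other side, $w^\opt\le X+\sum_{k\ge r}s_k\le X+\ell$.

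\emph{Conclusion.} Note that $\ell\ge \frac{n+\ell(\ell-1)}{n}$, which is equivalent to $(n-\ell)(\ell-1)\ge 0$. Adding the same nonnegative $X$ to the numerator and denominator of a ratio that is at least $1$ can only decrease it. Hence
\[
\frac{w^\opt}{w(a)}\le \frac{X+\ell}{X+\frac{n+\ell(\ell-1)}{n}}\le \frac{n\ell}{n+\ell(\ell-1)}\le \alpha^\star(n).
\]

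The main obstacle is choosing how to aggregate the per-layer guarantees. A naive layer-wise comparison $s_k$ versus $\min\big(\frac{n-k+1}{n},s_k\big)$ loses too much, since that ratio can be as large as $n$. The key idea is to switch to the monotonicity bound $L_k\ge L_r$ exactly at the first layer $r$ where the proportional-spending cap $\frac{n-r+1}{n}$ becomes binding. Combined with $L_n=1$ from range respect, this reproduces exactly the extremal structure of the instance in \Cref{prop:proportional_social_welfare_lower_bound}.
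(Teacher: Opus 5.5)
Your proof is correct and follows essentially the same route as the paper. You use the same layer decomposition $w(a)=\sum_k \overlap{a}{\mu^k}$ and the same threshold: your $r$ coincides with the paper's $k^\star$, since $\min(1,S_r)\ge\frac{n-r+1}{n}$ holds iff $S_r\ge\frac{n-r+1}{n}$. The lower bound has the same three parts and the common term $X$ is absorbed in the same way; the one cosmetic difference is that you bound $w^\opt$ directly by $\sum_k \min(1,S_k)$, whereas the paper routes this through \Ut and \Cref{lem:Util_lower_bound}, which only yields an inequality that is trivial anyway.
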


To prove this theorem, we introduce one more lemma about the welfare-maximizing aggregate returned by \Ut. The lemma states that \Ut satisfies a stronger version of proportional spending: It always allocates budget to intervals approved by a high number of voters, as long as the total size of these intervals is at most 1.

\begin{lemma} \label{lem:Util_lower_bound}
    For any given instance, \Ut returns an aggregate $a^\Uts$ with the following property for any $k \in [n]$:
    $$ \overlap{a^\Uts}{\mu^k} = \min\Bigg( 1, \sum_{j \in [m]} \mu^k_j \Bigg).$$
\end{lemma}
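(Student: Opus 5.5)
The key fact is that welfare decomposes as a sum of layers. For any allocation $a$ and alternative $j$, we have $\sum_{i} \min(p_{i,j}, a_j) = \sum_{k=1}^n \min(\mu^k_j, a_j)$, since the multiset $\{p_{i,j}\}_i$ equals $\{\mu^k_j\}_k$. Define $\mu^0 = 0$. Then
\[
\min(\mu^k_j, a_j) = \sum_{k'=1}^{k} \bigl(\min(\mu^{k'}_j,a_j) - \min(\mu^{k'-1}_j,a_j)\bigr).
\]
So $w(a) = \sum_{k=1}^n (n-k+1)\, d_k(a)$, where $d_k(a) = \overlap{a}{\mu^k} - \overlap{a}{\mu^{k-1}} \ge 0$ is the amount spent in the $k$-th layer, the band between $\mu^{k-1}$ and $\mu^k$. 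Each unit in this band is approved by $n-k+1$ voters. Since $\mu^n$ is the coordinatewise maximum, $\overlap{a}{\mu^n}$ is the part of $a$ lying below the maximum votes.

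The plan is to argue that $a^\Uts$ maximizes $w$ by exchange. Welfare is a weighted sum of the $d_k$ with strictly decreasing weights $n-k+1$. The layer capacities are fixed: $d_k(a) \le \sum_j (\mu^k_j - \mu^{k-1}_j)$. Moreover, any allocation satisfies $\overlap{a}{\mu^k} \le \min(1, \sum_j \mu^k_j)$. The goal is to show that a welfare maximizer can be chosen to fill layers greedily from the bottom, which gives the equality for every $k$ simultaneously.

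Concretely, suppose some $k$ has $\overlap{a}{\mu^k} < \min(1,\sum_j\mu^k_j)$. Then some alternative $j$ has $a_j < \mu^k_j$, and since $\sum_j a_j = 1 > \overlap{a}{\mu^k}$, some alternative $j'$ has $a_{j'} > \mu^k_{j'}$. Move $\varepsilon$ of budget from $j'$ to $j$. The removed mass sits in a layer of index greater than $k$, with weight at most $n-k$. The added mass sits in a layer of index at most $k$, with weight at least $n-k+1$. Hence welfare strictly increases, so no welfare-optimal allocation can violate the equality.

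The remaining step, which is the main obstacle, is connecting this to the specific aggregate returned by \Ut. Welfare maximizers need not be unique; ties occur exactly in these layered terms. I would use the known fact that \Ut maximizes welfare (\citet{freeman2021truthful}): the exchange argument shows that every welfare-maximizing allocation satisfies the equality, so \Ut does as well.

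As a sanity check, I would also argue directly from the phantom description. At the time of normalization, \Ut has phantoms $0,\dots,k'-1$ at $1$, phantom $k'$ at some intermediate position $x$, and the rest at $0$. The median on alternative $j$ is then $\med$ of $\mu^{n-k'}_j$, $x$, and $\mu^{n-k'+1}_j$, clipped appropriately. Thus $a^\Uts_j \ge \mu^{k}_j$ for all $j$ whenever $k \le n-k'$, and $a^\Uts_j \le \mu^k_j$ for all $j$ whenever $k > n-k'$. In the first case $\overlap{a^\Uts}{\mu^k} = \sum_j \mu^k_j$, and this sum is at most $1$. In the second case $\overlap{a^\Uts}{\mu^k} = \sum_j a^\Uts_j = 1 \le \sum_j \mu^k_j$. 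Either way the equality holds. I would present this phantom argument as the main proof, since it avoids the tie issue entirely; the subtle point is verifying the median identity, taking care with the boundary index $k = n-k'+1$.
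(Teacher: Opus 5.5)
Your exchange argument is already a complete and correct proof, and it follows a genuinely different route from the paper. Suppose $\overlap{a}{\mu^k}<\min(1,\sum_j\mu^k_j)$. Then there is $j$ with $a_j<\mu^k_j$ and $j'$ with $a_{j'}>\mu^k_{j'}$. Shift $\varepsilon=\min(\mu^k_j-a_j,\,a_{j'}-\mu^k_{j'})$ from $j'$ to $j$. This gains at least $(n-k+1)\varepsilon$, since every voter with $p_{i,j}\ge\mu^k_j$ gains the full $\varepsilon$. It loses at most $(n-k)\varepsilon$, since only voters with $p_{i,j'}>\mu^k_{j'}$ can lose. So every welfare-optimal allocation satisfies the identity, and \Ut is welfare-optimal by \citet{freeman2021truthful}.

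Because the improvement is strict, non-uniqueness of maximizers is no obstacle at all. You therefore do not need the phantom argument to ``avoid the tie issue''.

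The paper instead argues directly from the phantom system. At the moment $f_{k-1}$ reaches $1$, every median equals $\mu^k_j$. Since medians are nondecreasing in time, the output lies coordinatewise above $\mu^k$ if $\sum_j\mu^k_j<1$ and coordinatewise below it otherwise. That route is self-contained. Moreover, because $w(a)=\sum_k\overlap{a}{\mu^k}\le\sum_k\min(1,\sum_j\mu^k_j)$ for every allocation $a$, it actually re-derives the welfare optimality of \Ut. Your route imports that optimality but yields a stronger statement: the identity holds for, and indeed characterizes, every welfare-optimal allocation.

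Your phantom argument, which is essentially the paper's, has an indexing error that you must fix before presenting it as the main proof. Take $f_0,\dots,f_{k'-1}$ at $1$, $f_{k'}$ at $x$, and the other $n-k'$ phantoms at $0$. The median is the $(n+1)$-th smallest of the $2n+1$ values. After the $n-k'$ zeros, this is the $(k'+1)$-th smallest of $\{\mu^1_j,\dots,\mu^n_j,x\}$, namely $\min(\max(x,\mu^{k'}_j),\mu^{k'+1}_j)$ with $\mu^0_j=0$ and $\mu^{n+1}_j=1$. It is not $\med(\mu^{n-k'}_j,x,\mu^{n-k'+1}_j)$.

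For a concrete counterexample, take $n=m=4$, with voter $i$ voting $0$ on alternative $i$ and $\frac13$ elsewhere. \Ut normalizes with $k'=1$ and $x=\frac14$, returning $(\frac14,\frac14,\frac14,\frac14)$. Your formula instead gives median $\frac13$ on every alternative, and your threshold claim asserts $a^\Uts_j\ge\mu^3_j=\frac13$.

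The correct threshold is $k'$: $a^\Uts_j\ge\mu^k_j$ for $k\le k'$, and $a^\Uts_j\le\mu^k_j$ for $k\ge k'+1$. The boundary index to watch is therefore $k'+1$, not $n-k'+1$. With this correction, your two-case conclusion goes through verbatim.
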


\begin{proof}
    Let $k \in [n]$. It is easy to see that, when 
    phantom $f_{k-1}$
    reaches $1$, the median on each alternative $j$ is exactly $\mu^k_j$.\footnote{This expression is not defined for the last phantom ($k = n+1$), but normalization is always attained before that phantom moves, since $\sum_{j\in[m]} \mu^n_j \ge 1$.} Thus, if $\sum_{j \in [m]} \mu^k_j < 1$, then normalization has not been reached and $a^\Uts_j \ge \mu^k_j$ for all $j \in [m]$. On the other hand, if $\sum_{j \in [m]} \mu^k_j \ge 1$, then normalization is attained before or when 
    $f_{k-1}$
    reaches $1$ and thus $a^\Uts_j \le \mu^k_j$ for all $j \in [m]$, implying $\overlap{a^\Uts}{\mu^k} = \sum_{j\in [m]} a^\Uts_j = 1$.
\end{proof}

Using this lemma, we can now show \Cref{thm:optimal_social_welfare_approximation}.

\begin{proof}[Proof of \Cref{thm:optimal_social_welfare_approximation}]
    Let $m,n\in \N$ and consider a profile $P\in \mathcal{P}_{n,m}$ and a range-respecting mechanism $\A$ satisfying proportional spending.
    We denote its aggregate by~$a=\A(P)$.
    We prove the theorem by first showing a lower bound on the welfare achieved by such a mechanism and then giving an upper bound on the optimal welfare achieved by \Ut. 
    
    For aggregate $a$, we can compute the welfare as $$w(a) = \sum_{i \in [n]} \sum_{j \in [m]} \min(a_j, p_{i,j}) = \sum_{k \in [n]} \sum_{j \in [m]} \min(a_j, \mu^k_j).$$
    Let $k^\star\in [n]$ be the minimum number such that $\sum_{j \in [m]} \mu^{k^\star}_j \ge \frac{n-k^\star+1}{n}$. This is well-defined, since by the normalization of the votes we have $\sum_{j \in [m]} \mu^n_j \ge 1 > \frac{1}{n}$. By the proportional spending property, we know that for each $k < k^\star$, it holds that $a_j \ge \mu^k_j$ for all $j \in [m]$. For $k \ge k^\star$, we get that $\sum_{j\in [m]} \min(a_j, \mu^k_j) \ge \sum_{j\in [m]} \min(a_j, \mu^{k^\star}_j) \ge \frac{n-k^\star+1}{n}$. We can thus bound the welfare of $a$ in the following way:

    \begin{align*}
        w(a) &= \sum_{k \in [n]} \sum_{j \in [m]} \min(a_j, \mu^k_j) \\
        &= \sum_{k = 1}^{k^\star-1} \sum_{j \in [m]} \min(a_j, \mu^k_j) + \sum_{k = k^\star}^{n-1} \sum_{j \in [m]} \min(a_j, \mu^k_j) + \sum_{j \in [m]} \min(a_j, \mu^n_j) \\
        &\ge \sum_{k = 1}^{k^\star-1} \sum_{j \in [m]} \mu^k_j + \sum_{k = k^\star}^{n-1} \frac{n-k^\star+1}{n} + \sum_{j \in [m]} a_j  = \left(\sum_{k = 1}^{k^\star-1} \sum_{j \in [m]} \mu^k_j \right) + (n-k^\star) \frac{n-k^\star+1}{n} + 1,
    \end{align*}
    where the inequality follows from the arguments above and the range respect of $\A$.
    
    By partitioning the welfare in a similar way for $\Ut$, we obtain
    \begin{align*}
        w(a^\Uts) &= \sum_{k \in [n]} \sum_{j \in [m]} \min(a^\Uts_j, \mu^k_j) = \sum_{k = 1}^{k^\star-1} \sum_{j \in [m]} \min(a^\Uts_j, \mu^k_j) + \sum_{k = k^\star}^{n} \sum_{j \in [m]} \min(a^\Uts_j, \mu^k_j) \\
        & \le \left( \sum_{k = 1}^{k^\star-1} \sum_{j \in [m]} \mu^k_j \right) + n - k^\star + 1.
    \end{align*}
     Here, the inequality follows by applying \Cref{lem:Util_lower_bound} for $k < k^\star$ and the normalization of $a^\Uts$.
     
     Combining the two bounds above, we can bound the welfare approximation as
     \begin{align*}
        \frac{w(a^\Uts)}{w(a)} 
        & \le \frac{\left(\sum_{k = 1}^{k^\star-1} \sum_{j \in [m]} \mu^k_j \right) + n - k^\star + 1}{\left( \sum_{k = 1}^{k^\star-1} \sum_{j \in [m]} \mu^k_j \right) + (n-k^\star) \frac{n-k^\star+1}{n} + 1}
        \le \frac{n - k^\star + 1}{(n-k^\star) \frac{n-k^\star+1}{n} + 1}\\
        & = \frac{n (n - k^\star + 1)}{(n-k^\star) (n-k^\star+1) + n}
        \leq \max_{\ell\in [n]} \frac{n\ell}{n+\ell(\ell-1)},
     \end{align*}
    where the last inequality holds because $k^\star\in [n]$, and thus $n-k^\star+1\in [n]$.
    Since \Ut returns the welfare-maximizing allocation~\citep{freeman2021truthful} and the last expression is precisely $\alpha^\star(n)$, this finishes the proof.
\end{proof}

We now show that \UtP satisfies proportional spending.

\begin{lemma} \label{lem:UtilProp_lower_bound}
    \UtP satisfies proportional spending.
\end{lemma}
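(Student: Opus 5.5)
The plan is to use the explicit phantom positions of \UtP and split into two cases, according to whether normalization happens before or after a specific moment in the phantom schedule.

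Fix a profile $P$ and $k \in [n]$, and let $a$ be the output of \UtP. Let $T_k$ be the time at which phantom $f_{k-1}$ reaches its final position $\frac{n-k+1}{n}$. Because the phantoms move consecutively, at time $T_k$ the phantoms $f_0,\dots,f_{k-1}$ sit at $1, \frac{n-1}{n}, \dots, \frac{n-k+1}{n}$, and the $n+1-k$ phantoms $f_k,\dots,f_n$ are still at $0$. Let $t^\star$ be a time of normalization. Since all phantom functions are non-decreasing, the median on each alternative is non-decreasing in $t$, so we may compare $a$ with the medians at $T_k$.

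\emph{Case $t^\star \le T_k$.} At every time $t \le T_k$, the $n+1-k$ phantoms $f_k,\dots,f_n$ are at $0$. The median is the $(n+1)$-th smallest of the $2n+1$ values. It is therefore at most the $(n+1)$-th smallest element of the multiset consisting of these $n+1-k$ zeros together with the $n$ votes, and that element is the $k$-th smallest vote $\mu^k_j$. Hence $a_j \le \mu^k_j$ for all $j$, so $\overlap{a}{\mu^k} = \sum_j a_j = 1$, which exceeds the required bound.

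\emph{Case $t^\star \ge T_k$.} By monotonicity, $a_j$ is at least the median at time $T_k$. After discarding the $n+1-k$ zero phantoms, that median is the $k$-th smallest element of $\{p_{1,j},\dots,p_{n,j}\} \cup U$, where $U = \{\frac{n-k+1}{n},\dots,1\}$. The $k$-th smallest element of a union is at least the minimum of the $k$-th smallest vote and the smallest element of $U$. Hence $a_j \ge \min\big(\mu^k_j, \frac{n-k+1}{n}\big)$, and therefore
\[
\overlap{a}{\mu^k} \ge \sum_{j\in[m]} \min\Big(\mu^k_j, \tfrac{n-k+1}{n}\Big).
\]
If some $\mu^k_j \ge \frac{n-k+1}{n}$, the right-hand side is at least $\frac{n-k+1}{n}$. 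Otherwise it equals $\sum_j \mu^k_j$. In either case the right-hand side is at least $\min\big(\frac{n-k+1}{n}, \sum_j \mu^k_j\big)$, which is proportional spending.

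The main obstacle is stating the order-statistic comparisons carefully. In particular, one must check that the $n+1-k$ zero phantoms shift the median to exactly the $k$-th smallest of the remaining values. One must also handle the fact that $t^\star$ may not be unique; since the allocation is the same for every time of normalization, the argument can fix any one of them.
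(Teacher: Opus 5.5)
Your proposal is correct and follows essentially the same route as the paper's proof. The paper splits on whether phantom $f_k$ has started moving at the normalization time (via the largest index $k^\star$ with $f_{k^\star}(t)>0$), whereas you compare $t^\star$ with $T_k$ and use that medians are monotone in time. The two order-statistic bounds are the same in both: $a_j \le \mu^k_j$ in one case, and $a_j \ge \min\bigl(\mu^k_j, \frac{n-k+1}{n}\bigr)$ in the other, followed by the same final inequality.
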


\begin{proof}
    Let $P \in \mathcal{P}_{n,m}$ be a profile. Let $t$ be a time of normalization of \UtP and let $k^\star \in [n]_0$ be the highest index such that $f_{k^\star}(t) > 0$. We observe for each $1 \le k \le k^\star$ that
    the median on each alternative $j$ is at least $\min\big(\frac{n-k+1}{n}, \mu^k_j\big)$. This is because we have $f_{k-1}(t) = \frac{n-k+1}{n}$ and therefore there are at most $n-k+1$ phantoms and at most $k-1$ voters strictly below $\min\big(\frac{n-k+1}{n}, \mu^k_j\big)$.
    Thus, for any $k \le k^\star$ we have $a_j \ge \min\big(\frac{n-k+1}{n}, \mu^k_j\big)$ for all $j \in [m]$, implying
    $$
        \overlap{a}{\mu^k} \ge \sum_{j\in [m]} \min \left(\frac{n-k+1}{n}, \mu^k_j \right) \ge  \min\Bigg( \frac{n-k+1}{n}, \sum_{j \in [m]} \mu^k_j \Bigg)
    .$$
    For the final inequality, note that if $\mu^k_j>\frac{n-k+1}{n}$ for some $j \in [m]$, then $\sum_{j\in [m]} \min \big(\frac{n-k+1}{n}, \mu^k_j \big) \ge \frac{n-k+1}{n}$. Otherwise, $\sum_{j\in [m]} \min \big(\frac{n-k+1}{n}, \mu^k_j \big) = \sum_{j \in [m]} \mu^k_j$.

    For $k > k^\star$, we know that $f_k(t) = 0$ and therefore the median on each alternative can be at most $\mu^k_j$ (there are at most $k$ phantoms and at most $n-k$ voters strictly above $\mu^k_j$).
    Thus, $\overlap{a}{\mu^k} = \sum_{j\in [m]} \min(a_j, \mu^k_j) = \sum_{j\in [m]} a_j = 1$ for each $k > k^\star$.
\end{proof}

\smallskip

\citet{freeman2021truthful} showed that a moving-phantom mechanism which first moves the first phantom to position $1$ while keeping all others at $0$, and keeps the last phantom at $0$ at all times, is range respecting. Since \UtP satisfies this property, \UtP is range respecting.
We obtain the following corollary by combining this observation, \Cref{thm:optimal_social_welfare_approximation,lem:UtilProp_lower_bound}.%

\smallskip

\begin{corollary} \label{cor:util_prop_welfare_approximation}
    \UtP attains a welfare approximation of $\alpha^\star(n)$. %
\end{corollary}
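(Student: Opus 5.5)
This corollary follows by plugging \UtP into the meta-result \Cref{thm:optimal_social_welfare_approximation}. That theorem needs two hypotheses: range respect and proportional spending. Proportional spending is exactly \Cref{lem:UtilProp_lower_bound}, so the only remaining task is range respect.

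For range respect, I rely on the criterion of \citet{freeman2021truthful} recalled just above. It says that a moving-phantom mechanism is range respecting if the last phantom $f_n$ stays at $0$ at all times and the first phantom $f_0$ reaches position $1$ while all other phantoms are still at $0$. I would check both conditions directly from the formula
\[
f_k(t)=\max\bigl(0,\min\bigl(\tfrac{n-k}{n},t(n+1)-k\bigr)\bigr).
\]
\begin{itemize}
\item For $k=n$, the cap $\tfrac{n-k}{n}$ is $0$, so $f_n\equiv 0$.
\item For $k=0$, we get $f_0(t)=\min(1,t(n+1))$, which reaches $1$ at $t=\tfrac{1}{n+1}$.
\item For every $k\ge 1$, $t(n+1)-k\le 0$ whenever $t\le\tfrac{k}{n+1}$, so $f_k(t)=0$ for all $t\le\tfrac{1}{n+1}$.
\end{itemize}
Hence the criterion applies and \UtP is range respecting.

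If one prefers not to rely on the cited criterion, range respect can also be argued directly from the median definition. At every time, $f_0$ lies in $[0,1]$ and $f_n=0$. There are $2n+1$ values in the median, of which $n+1$ are phantoms.

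For the upper bound, suppose $\med > \mu^n_j$. Then all $n$ votes are strictly below the median, and so is $f_n=0$. That puts at least $n+1$ values strictly below the median, so it cannot be the median.

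For the lower bound, I split on the time of normalization $t^\star$.
\begin{itemize}
\item If $t^\star\ge\tfrac{1}{n+1}$, then $f_0(t^\star)=1$, which is at least $\mu^1_j$. Together with the $n$ votes, this gives $n+1$ values that are at least $\mu^1_j$, so the median is at least $\mu^1_j$.
\item If $t^\star<\tfrac{1}{n+1}$, then $f_1,\dots,f_n$ are all $0$, so the median is $\min(f_0(t^\star),\mu^1_j)$. Summing over $j$, this is strictly below $\sum_j \mu^1_j\le 1$ unless it equals $\mu^1_j$ for every $j$. Since normalization forces the sum to be exactly $1$, we get $a_j=\mu^1_j$ in this case.
\end{itemize}
This second case, the early-normalization regime, is the only mildly delicate point.

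Combining range respect with \Cref{lem:UtilProp_lower_bound} and applying \Cref{thm:optimal_social_welfare_approximation} gives the welfare approximation $\alpha^\star(n)$ for \UtP. By \Cref{prop:proportional_social_welfare_lower_bound}, this approximation is best possible among single-minded proportional mechanisms.
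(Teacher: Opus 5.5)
Your proposal is correct and follows the paper's own route: you combine \Cref{thm:optimal_social_welfare_approximation} with \Cref{lem:UtilProp_lower_bound}, and you get range respect from the cited criterion (the first phantom reaches $1$ while all others are still at $0$, and the last phantom stays at $0$), which you check correctly against the formula for \UtP. Your alternative direct median argument for range respect is also correct, including the early-normalization case $t^\star<\tfrac{1}{n+1}$, and it makes this step self-contained.
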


\smallskip

Finally, we also show that \GrdDec satisfies proportional spending.

\smallskip

\begin{lemma} \label{lem:GrdDec_lower_bound}
    \GrdDec satisfies proportional spending.
\end{lemma}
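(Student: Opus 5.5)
We need to show that for each $k\in[n]$ the output $a$ of \GrdDec satisfies $\overlap{a}{\mu^k} \ge \min\big(\frac{n-k+1}{n}, \sum_j \mu^k_j\big)$. The plan is to track the state of the algorithm at the end of outer iteration $k$ and use two structural invariants of \GrdDec. The first is that spending is monotone: $\Tilde{a}$ only increases. The second is that in outer iteration $k$, $\Tilde{a}_j$ is never raised above $\mu^k_j$.

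Step 1 (the invariant on budgets). We will show that at the end of outer iteration $k$ one of two things holds. Either $\Tilde{a}_j \ge \mu^k_j$ for all $j$, or the total budget spent is at least $\frac{n-k+1}{n}$ (in fact, we aim to show that every voter whose $k$-th ceiling is not reached has run out of budget). The reason is that the inner loop in iteration $k$ stops only at $\tau^\star=1$. So for every alternative $j$ with $\Tilde{a}_j<\mu^k_j$ at termination, we must have $N^+_j=\emptyset$: no voter with positive budget votes above $\Tilde{a}_j$. But at least $n-k+1$ voters vote at least $\mu^k_j>\Tilde{a}_j$ on $j$. Hence all of these at least $n-k+1$ voters have exhausted their budget of $\frac1n$, so total spending is at least $\frac{n-k+1}{n}$.

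Step 2 (spending lies below $\mu^k$). We must ensure that the spending counted in Step 1 lies below the ceiling $\mu^k$. By the second invariant, after outer iteration $k$ we have $\Tilde{a}_j \le \mu^k_j$ for every $j$. This holds because in iteration $k'\le k$ updates are $\max(\Tilde{a}_j,\min(\mu^{k'}_j,\tau^\star))$ and $\mu^{k'}\le\mu^k$. So after iteration $k$, $\overlap{\Tilde a}{\mu^k}=\sum_j \Tilde{a}_j$. In the first case of Step 1 this equals $\sum_j\mu^k_j$. In the second case it is at least $\frac{n-k+1}{n}$. Monotonicity then gives $\overlap{a}{\mu^k}\ge\overlap{\Tilde a}{\mu^k}$ for the final $a$.

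Main obstacle. The delicate point is the termination claim in Step 1: when the inner loop ends with $\tau^\star=1$, every $j$ with $\Tilde{a}_j<\mu^k_j$ has $N^+_j=\emptyset$. If $N^+_j\neq\emptyset$, the update sets $\Tilde a_j$ to $\max(\Tilde a_j,\min(\mu^k_j,1))=\mu^k_j$. The subtlety is that $N^+_j$ is computed at the start of each inner iteration, while budgets change during it, and the inner loop might not terminate. For termination, we argue that each inner iteration with $\tau^\star<1$ exhausts some voter's budget by maximality of $\tau^\star$, or otherwise reaches a ceiling. This bounds the number of inner iterations by roughly $n+m$, which is the same argument as in \Cref{prop:greedydecomp_decomposable}, which we may assume. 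We also need that every alternative $j$ reaching $\mu^k_j$ is genuinely funded, which follows from the definition of the updates.
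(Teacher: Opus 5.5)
Your proof is correct and follows the paper's argument. After outer iteration $k$ one has $\Tilde{a}_j \le \mu^k_j$ for all $j$, and every $j$ with $\Tilde{a}_j < \mu^k_j$ has $N_j^+ = \emptyset$; hence the at least $n-k+1$ voters voting at least $\mu^k_j$ are exhausted and $\overlap{a}{\mu^k} \ge \sum_j \Tilde{a}_j \ge \frac{n-k+1}{n}$. The only difference is presentation: the paper argues by contradiction from an alternative with $a_j < \mu^k_j$, whereas you split into cases on the state at the end of iteration $k$.
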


\begin{proof}
    We first observe that, after the $k^\text{th}$ (outer) iteration of \GrdDec, it must either be the case that $N_j^+ = \emptyset$ or that $N_j^+ \neq \emptyset$ and $\Tilde{a}_j = \mu^k_j$ for each alternative $j$, since the only reason \GrdDec does not increase $\Tilde{a}_j$ to $\mu^k_j$ is if all voters voting above $\Tilde{a}_j$ have run out of budget. Also, in either case we have $\Tilde{a}_j \le \mu^k_j$.

    Now suppose the lemma is false and let $P \in \mathcal{P}_{n,m}$ be a profile for which there exists a $k \in [n]$ with $\overlap{a}{\mu^k} < \min\big( \frac{n-k+1}{n}, \sum_{j \in [m]} \mu^k_j \big)$. Then there must be some alternative $j$ with $a_j < \mu^k_j$. Together with the above observation it follows that $N_j^+$ is empty after the $k^\text{th}$ iteration of \GrdDec.
    By definition, there are at least $n-k+1$ voters voting above or equal to $\mu^k_j$ (and thus strictly above $\Tilde{a}_j \le a_j$). If $N_j^+$ is empty, then these $n-k+1$ voters must have zero budget, which implies that they have spent their total budget of $\frac{n-k+1}{n}$ in previous iterations, i.e., $$\overlap{a}{\mu^k} \ge \overlap{\Tilde{a}}{\mu^k} = \sum_{j\in [m]} \min(\Tilde{a}_j, \mu^k_j) = \sum_{j \in [m]} \Tilde{a}_j \ge \frac{n-k+1}{n},$$ contradicting the initial assumption.
\end{proof}

\smallskip

Note that decomposability implies that a mechanism can never spend more than the highest vote on an alternative. Also, proportional spending implies that we can never spend less than the lowest vote. Thus, \GrdDec is also range respecting, and we conclude the following corollary from \Cref{thm:optimal_social_welfare_approximation,lem:GrdDec_lower_bound}.

\begin{corollary} \label{cor:greedy_decomp_welfare_approximation}
    \GrdDec attains a welfare approximation of $\alpha^\star(n)$. %
\end{corollary}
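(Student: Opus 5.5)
This corollary follows by applying \Cref{thm:optimal_social_welfare_approximation} to \GrdDec. The theorem needs two hypotheses: range respect and proportional spending. \Cref{lem:GrdDec_lower_bound} gives proportional spending directly, so the only remaining work is to verify range respect carefully.

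\emph{Upper end.} By \Cref{prop:greedydecomp_decomposable}, the output $a$ of \GrdDec is decomposable. Fix an alternative $j$. If $a_j>0$, then $\sum_i c_{i,j}=a_j>0$, so some voter $i$ has $c_{i,j}>0$. The second condition of decomposability then gives $a_j\le p_{i,j}\le \mu^n_j$. If $a_j=0$, the bound $a_j\le\mu^n_j$ is trivial.

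\emph{Lower end.} I would apply proportional spending with $k=1$:
\[
\overlap{a}{\mu^1}\ge \min\Bigl(1,\sum_{j\in[m]}\mu^1_j\Bigr)=\sum_{j\in[m]}\mu^1_j,
\]
where the equality holds because $\mu^1\le p_1$ coordinatewise, so $\sum_j\mu^1_j\le 1$. On the other hand, each term satisfies $\min(a_j,\mu^1_j)\le\mu^1_j$. So if $a_j<\mu^1_j$ for some $j$, the sum $\overlap{a}{\mu^1}=\sum_j \min(a_j,\mu^1_j)$ would be strictly less than $\sum_j\mu^1_j$, a contradiction. Hence $a_j\ge\mu^1_j$ for every $j$.

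With both properties established, \Cref{thm:optimal_social_welfare_approximation} yields the approximation factor $\alpha^\star(n)$ for every $n$. I do not expect a real obstacle here; the only step needing care is the strict-inequality argument at the lower end, which must use that every term of the overlap is individually capped by $\mu^1_j$.
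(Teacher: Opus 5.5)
Your proposal is correct and follows the paper's route: range respect comes from decomposability (upper end) together with proportional spending (lower end), and \Cref{thm:optimal_social_welfare_approximation} is then applied with \Cref{lem:GrdDec_lower_bound}. Your contribution is spelling out what the paper leaves implicit, namely the $k=1$ instance of proportional spending with $\sum_j \mu^1_j \le 1$ and the termwise cap $\min(a_j,\mu^1_j)\le \mu^1_j$.
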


\smallskip

Together with the lower bound from~\Cref{prop:proportional_social_welfare_lower_bound}, the matching upper bounds in \Cref{cor:util_prop_welfare_approximation} and \Cref{cor:greedy_decomp_welfare_approximation} show that the price of single-minded proportionality is exactly $\alpha^\star(n)$. 
Notably, this price does not increase even under the additional requirement of truthfulness (\Cref{cor:util_prop_welfare_approximation}) and under the stronger fairness notion of decomposability (\Cref{cor:greedy_decomp_welfare_approximation}).

\clearpage

\subsection{Welfare Approximation in Terms of $m$} \label{sec:mbound}

Recall that \citet{MPS20a} express their welfare approximations for the related problem of budget allocation under approval utilities as functions of $m$. In particular, they establish an upper bound on the price of fairness (for a different fairness notion) of $\smash{\textstyle\frac{m}{2\sqrt{m}-2}}$, as well as a lower bound of $\smash{\textstyle\frac{\sqrt{m}}{2}}$. As discussed when introducing our own lower bound, their lower bound also applies to the price of single-minded proportionality in our setting. Thus, orthogonally to our welfare bounds in terms of $n$, it is natural to ask whether a comparable bound in terms of $m$ can be obtained here.

A key obstacle is that our techniques developed in this section provide guarantees for each level voter $\mu^k$ via the notion of \emph{proportional spending}. In contrast, the approach of \citet[Theorem 2]{MPS20a} builds upon guaranteeing welfare to supporters of individual alternatives. However, their approach does not transfer directly to our utility model, since voters in our model only “approve” an alternative up to a certain point. That said, a closer look at the proof of \cref{lem:UtilProp_lower_bound} reveals that \UtP in fact satisfies a stronger variant of proportional spending that holds \emph{per alternative}. Using this observation and combining our arguments from the proof of \Cref{thm:optimal_social_welfare_approximation} with the techniques of \citet{MPS20a}, we show that \UtP indeed achieves a welfare bound of $\smash{\textstyle\frac{m}{2\sqrt{m}-2}}$. We defer the proof to \Cref{app:thmmbound}. In the following statement, we say that a mechanism attains a welfare approximation of $\beta(m)$, if, for every $n,m \in \mathbb{N}$ and every $P \in \mathcal{P}_{n,m}$ it holds that $\frac{w^\opt}{w(\A(P))} \le \beta(m)$.

\begin{restatable}{theorem}{thmmbound}\label{thm:utilPropMbound}
    \UtP attains a welfare approximation of $\beta(m) = \frac{m}{2 \sqrt{m} - 2}$. \label{col:utilprop_alternative_bound}
\end{restatable}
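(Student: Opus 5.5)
The plan is to redo the proof of \Cref{thm:optimal_social_welfare_approximation} with the welfare written \emph{per alternative} instead of per level $\mu^k$, and then run the charging argument of \citet[Theorem~2]{MPS20a}. For alternative $j$ and height $x\in[0,1]$, let $s_j(x)=|\{i : p_{i,j}\ge x\}|$. Then for any allocation $b$,
\[
w(b)=\sum_{j\in[m]}\int_0^{b_j} s_j(x)\,dx .
\]
Fix a profile $P$, let $a$ be the output of \UtP, $t$ a time of normalization, and $k^\star$ the largest index with $f_{k^\star}(t)>0$, exactly as in the proof of \Cref{lem:UtilProp_lower_bound}. That proof gives a per-alternative statement:
\begin{itemize}
\item for every $j$ and every $k\le k^\star$, we have $a_j\ge \min\bigl(\frac{n-k+1}{n},\mu^k_j\bigr)$;
\item for $k>k^\star$, we have $a_j\le \mu^k_j$ for all $j$.
\end{itemize}

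The first step is to turn this into a statement about the curve $x\mapsto s_j(x)/n$. Define the crossing point $x_j=\sup\{x : s_j(x)\ge nx\}$. From the first bullet, $a_j\ge x_j$ whenever the relevant level index is at most $k^\star$. Two consequences follow.
\begin{itemize}
\item On $[0,a_j]$ the supporter count is large. For $x\le a_j$ below the crossing, $s_j(x)\ge nx$, so $\int_0^{a_j}s_j\ge n a_j^2/2$, and up to the exact constant $n a_j^2$. I will do the bookkeeping with step functions rather than the crude integral, and check which constant is needed.
\item Above $a_j$ the supporter count is small: for $x>a_j\ge x_j$ we have $s_j(x)<nx$. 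More usefully, $s_j(x)\le s_j(a_j)$, and this should be at most about $n a_j$ (or $n a_j$ plus the $k^\star$ correction).
\end{itemize}
The second step treats the layers $k>k^\star$ as in \Cref{thm:optimal_social_welfare_approximation}. By the second bullet, those layers receive full overlap $\sum_j\min(a_j,\mu^k_j)=1$ under $a$. Under \Ut they receive at most $1$ per layer, so they can only help the ratio. Thus I may subtract their common contribution and work with the lower layers, where the crossing-point description is valid.

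The third step is the MPS-style comparison with $a^\Uts$. Split $w(a^\Uts)$ per alternative into the part below $a_j$, which is bounded by $w(a)$, and the part above $a_j$. The part above $a_j$ is at most $\sum_j a^\Uts_j\, s_j(a_j)\le n\sum_j a^\Uts_j a_j\le n\max_j a_j$. This gives
\[
w(a^\Uts)\le w(a)+n\,a_{\max},
\]
and, by the first step,
\[
w(a)\ge n\sum_j a_j^2\ge n\Bigl(a_{\max}^2+\tfrac{(1-a_{\max})^2}{m-1}\Bigr),
\]
using $\sum_j a_j=1$ and convexity. The ratio is then at most $1+a_{\max}\big/\bigl(a_{\max}^2+(1-a_{\max})^2/(m-1)\bigr)$, and a one-variable optimization over $a_{\max}\in[1/m,1]$ should give exactly $\frac{m}{2\sqrt m-2}$, mirroring the MPS calculation. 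If the constant from the first step is only $n a_j^2/2$, I would instead use the discrete structure $s_j\in\mathbb{N}$ to recover the missing factor.

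The main obstacle is the first step: getting the right constant in both inequalities simultaneously from the level-wise guarantee. The issue is that the bound $a_j\ge\min(\frac{n-k+1}{n},\mu^k_j)$ is stated per integer level $k$, not per height $x$. So I need to check carefully that $s_j(x)\ge nx$ really holds for $x\le a_j$, and that $s_j(a_j^+)\le n a_j$ holds above it, including at the boundary level $k^\star$ where the phantom $f_{k^\star}$ is only partially raised. I would first try to handle $k^\star$ by treating its partial phantom position $f_{k^\star}(t)$ as an extra ceiling. If that fails, I would move the problematic level into the ``full overlap'' group of the second step.
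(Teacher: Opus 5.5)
\textbf{Gap 1: the final optimization does not give the target bound.} The problem is in your third step, not in the constants of the first. Grant both displayed inequalities and set $y=a_{\max}$. You then get a ratio bound of $1+y/\bigl(y^2+(1-y)^2/(m-1)\bigr)$. The quantity $\bigl(y^2+(1-y)^2/(m-1)\bigr)/y$ is minimized at $y=1/\sqrt m$ with value $\frac{2}{\sqrt m+1}$, so the optimization yields $\frac{\sqrt m+3}{2}$. Since $(\sqrt m+3)(\sqrt m-1)>m$ whenever $m\ge 3$, this is strictly weaker than $\frac{m}{2\sqrt m-2}$ for every $m\ge 3$. For $m=4$ it gives $5/2$ instead of $2$.

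The loss is the additive $1$. The bound $w(a^{\Uts})\le w(a)+n\,a_{\max}$ credits \Ut with all of $w(a)$, including the parts of $[0,a_j]$ on alternatives where \Ut spends less than \UtP. It also measures the excess by $a^{\Uts}_j$ instead of $a^{\Uts}_j-a_j$. Sharpening the first step cannot repair this. In fact the constant $na_j^2$ is already available: for \UtP one checks $s_j(a_j)\ge na_j$ on every alternative, and $s_j$ is non-increasing.

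\textbf{Gap 2: the claim $s_j(x)\le na_j$ for $x>a_j$ is false in general.} It does hold when $a_j\ge\frac{n-k^\star+1}{n}$, with your $k^\star$ being the last phantom with $f_{k^\star}(t)>0$. There the capped phantoms give $s_j(x)\le\lceil nx\rceil-1$ for $x>a_j$. On the other alternatives, however, $a_j$ is the clamp of $\tau=f_{k^\star}(t)$ to $[\mu^{k^\star}_j,\mu^{k^\star+1}_j]$. This can be arbitrarily small while up to $n-k^\star$ voters vote strictly above it. There $a_j\le x_j$, possibly strictly, so your crossing-point picture is wrong.

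What rescues the argument is a fact you did not state: \Ut never spends more than \UtP on those alternatives. Consider the configuration with phantoms $0,\dots,k^\star-1$ at $1$ and phantom $k^\star$ at $\tau$. Every \Ut phantom there lies weakly above the corresponding \UtP phantom at time $t$, so \Ut normalizes at or before this point of its path. In this configuration its median on such an alternative is exactly the clamp, i.e.\ $a_j$; hence $a^{\Uts}_j\le a_j$.

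\textbf{A repaired version of your route.} With both facts you can pair gains with losses:
\[
w(a^{\Uts})-w(a)\le n\sum_j(a^{\Uts}_j-a_j)a_j\le n\Bigl(a_{\max}-\sum_j a_j^2\Bigr).
\]
Together with $w(a)\ge n\sum_j a_j^2$, this gives a ratio of at most $a_{\max}/\sum_j a_j^2\le\frac{\sqrt m+1}{2}\le\frac{m}{2\sqrt m-2}$.

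The paper avoids per-alternative accounting altogether and keeps the level decomposition. It notes that the lowest levels contribute equally to both mechanisms, and bounds \Ut by one unit per remaining level. It lower-bounds \UtP by $\sum_j n_j^2/n$, where $\sum_j n_j=n$ and $\max_j n_j$ equals the number of remaining levels, via a split of the alternatives and Chebyshev's inequality. It then applies the inequality of \citet{MPS20a}.
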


Unfortunately, \GrdDec does not satisfy our stronger notion of proportional spending, and we leave open the question of whether a similar bound can be achieved for \GrdDec.%

\section{Social Welfare of Truthful and \\Single-Minded Proportional Mechanisms} \label{sec:phantom_domination}

In this section, we define a welfare dominance relation among mechanisms and characterize the relationship of eight moving-phantom mechanisms. In particular, we show that \UtP is instance-wise welfare-optimal among all single-minded proportional moving-phantom mechanisms, and thereby among all known single-minded proportional and truthful mechanisms.

The definition of our relation $\trianglerighteq$ is simple: For two mechanisms $\A$ and $\mathcal{B}$, $\A \trianglerighteq \mathcal{B}$ indicates that, for every input profile, $\A$ attains at least as much social welfare as $\mathcal{B}$. Moreover, we refer to a mechanism $\A$ that has $\smash{\A\trianglerighteq \mathcal{B}}$ (resp.\ $\smash{\mathcal{B}\trianglerighteq \A}$) for every $\mathcal{B}$ within a certain class of mechanisms as a \textit{welfare-maximizing} (resp.\ \textit{welfare-minimizing}) mechanism within this class.
Although not all moving-phantom mechanisms are comparable under $\trianglerighteq$, \Cref{lem:phantom_domination} enables us to order almost all moving-phantom mechanisms proposed in the literature, with only a single pair being incomparable.
\Cref{thm:phantom-domination} states these comparisons; \Cref{fig:phantom_domination} provides a visual aid.

Before stating the theorem, we define two additional moving-phantom mechanisms that we show to be welfare-minimizing within certain classes. First, \Constant is the trivial moving-phantom mechanism whose output is the uniform aggregate ($\frac{1}{m}$ for every alternative) regardless of the profile. Second, \Fan is the moving-phantom mechanism induced by the phantom system that moves all phantoms together from $0$ to $1$, but each phantom $f_k$ stops moving once it reaches its final position $\frac{n-k}{n}$. Again, we refer to \Cref{app:mp-mechs} for formal definitions and illustrations. 

\clearpage

\begin{restatable}{theorem}{thmPhantomDomination}\label{thm:phantom-domination}
    The following relations hold:
    \begin{enumerate}[label=(\roman*)]
        \item \Ut is welfare-maximizing and \Constant is welfare-minimizing among moving-phantom mechanisms.\label{item:welf-all}
        \item \UtP is welfare-maximizing and \Fan is welfare-minimizing among single-minded proportional moving-phantom mechanisms.\label{item:welf-smp}
        \item $\PiecewiseUniform\trianglerighteq \IM$ and $\Ladder\trianglerighteq \IM$, but \PiecewiseUniform and \Ladder are incomparable according to $\trianglerighteq$.\label{item:welf-pwu-lad}
        \item $\Fan \trianglerighteq \GreedyMax$.\label{item:welf-fan}
    \end{enumerate}
\end{restatable}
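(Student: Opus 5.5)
The plan is to reduce every comparison to a single structural criterion on phantom systems; this is what the forthcoming \Cref{lem:phantom_domination} should provide. By anonymity of the median, the aggregate of $\A^\F$ at time $t$ is the coordinatewise function $a_j = \med(f_0(t),\dots,f_n(t),\mu^1_j,\dots,\mu^n_j)$. Since $a_j$ lies between two consecutive level votes, its welfare contribution on alternative $j$ is $\sum_k \min(a_j,\mu^k_j)$, which is piecewise linear in $a_j$ with slope equal to the number of voters above $a_j$. I would therefore state the lemma as an exchange criterion: $\A^\F \trianglerighteq \A^\G$ if for every configuration $g$ on the trajectory of $\G$ there is a configuration $f$ on the trajectory of $\F$ with a single-crossing property. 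That is, there is an index $r$ with $f_k \ge g_k$ for $k < r$ and $f_k \le g_k$ for $k \ge r$.

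I would prove the lemma by comparing the two aggregates $a^\F$ and $a^\G$, both normalized to $1$. On alternatives where $a^\F_j > a^\G_j$, the median has been pushed up by upper phantoms, so the extra budget sits in an interval approved by at least $n-r+1$ voters. On alternatives where $a^\F_j < a^\G_j$, the removed budget sits in an interval approved by at most $n-r+1$ voters. Since both aggregates sum to $1$, the moved mass is equal on the two sides, so the welfare difference is a sum of slopes times moved mass, and it is nonnegative. This counting step, matching the slopes across the crossing index, is the main obstacle: medians with many phantoms below the voters behave differently, so I would do it carefully via the piecewise-linear form above.

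Given the lemma, the individual items are checked as follows.

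\begin{enumerate}[label=(\roman*)]
    \item \Ut moves phantoms one at a time to $1$, so every configuration of every system is single-crossing-dominated by a \Ut configuration with the same total. \Constant, in the reverse direction, is dominated by everything.
    \item I would first prove the characterization \Cref{lem:proportional_phantoms_characterization}: single-minded proportionality forces the trajectory through $f_k=\tfrac{n-k}{n}$. Continuity and monotonicity then confine each $f_k$ to $[0,\tfrac{n-k}{n}]$ on the part of the trajectory that matters. Within that box, \UtP (consecutive filling) is the upper extreme in the single-crossing order, and \Fan (simultaneous movement with caps) is the lower extreme.
    \item and (iv) I would verify the criterion directly from the explicit phantom formulas in \Cref{app:mp-mechs}, choosing for each $s$ the matching time $t$ by equal total phantom mass. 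For the incomparability of \PiecewiseUniform and \Ladder, I would exhibit two small profiles, say $n=2$ or $3$ with $m=3$, one favouring each mechanism.
\end{enumerate}
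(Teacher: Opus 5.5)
There is a genuine gap: your domination criterion has the wrong quantifiers, and this carries over to (i), (iii) and (iv). As you state it, for every configuration $g$ of $\G$ there is \emph{some} configuration $f$ of $\F$ crossing it once from above. That hypothesis is always satisfied. Every phantom system has $\F(0)=(0,\dots,0)$, and with $r=0$ the condition reduces to $0\le g_k$ for all $k$. Your lemma would therefore give $\A^\F\trianglerighteq\A^\G$ for every pair of moving-phantom mechanisms, for example $\Constant\trianglerighteq\Ut$, and it would contradict the incomparability in (iii).

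In your proof sketch you compare $a^\F$ and $a^\G$ ``both normalized to $1$''. That silently assumes $f$ is the configuration of $\F$ at a time of normalization for the \emph{same} profile as $g$, which the hypothesis does not give you. Matching times ``by equal total phantom mass'' (or ``with the same total'' in (i)) does not fix this. Each system's normalization time is dictated by the profile through $\sum_j \med(\cdot)=1$, and the two normalization configurations for a given $P$ generally carry different phantom mass. So checking single-crossing at mass-matched pairs says nothing about the two actual outputs. Your slope-counting argument is fine once the hypothesis is correct: on $M^+$ at least $n-r+1$ voters lie weakly above $a^\F_j$, and on $M^-$ at most $n-r$ lie strictly above it.

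Since the profile is arbitrary, what you need, and what the paper checks item by item, is that $\F(t)$ lies single-crossing above $\G(t')$ for \emph{every} pair $(t,t')\in[0,1]^2$. This holds in each case:
\begin{itemize}
\item A \Ut configuration (saturated phantoms, one moving phantom, zeros) lies single-crossing above any configuration, and a \UtP configuration above any capped one.
\item Every non-increasing vector lies single-crossing above the constant vector of \Constant.
\item For a capped system versus \Fan, and for \Fan versus \GreedyMax, take the largest $k$ with $f_k(t)\ge t'$.
\item For \Ladder versus \IM, take the largest $k$ with $f_k(t)>g_k(t')$ and propagate upward and downward using the spacings $\frac1n\ge\frac{t'}{n}$.
\item For \PiecewiseUniform versus \IM, compare the spacings of its lower and upper halves with $\frac{t'}{n}$.
\end{itemize}

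Two smaller corrections. First, \Cref{lem:proportional_phantoms_characterization} does not say that single-minded proportionality forces the original trajectory through $(\frac{n-k}{n})_k$; in general it need not pass through that point. It says that capping each $g_k$ at $\frac{n-k}{n}$ leaves every output unchanged, which is the nontrivial part. After capping, $f_k\le\frac{n-k}{n}$ holds at all times by monotonicity. Second, $n=2$ cannot give the incomparability. For $n=2$, \PiecewiseUniform traverses the same configurations as \UtP and therefore dominates \Ladder; the paper uses $n=4$, $m=3$.
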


\begin{figure}[h]
    \centering
    \small
    \begin{tikzpicture}[
        mech/.style={draw=black,circle,fill,inner sep=2pt},
        every label/.append style={font=\footnotesize, fill=none, text opacity=1, inner sep=2pt},
        every node/.style={inner sep=0},
        dom/.style={-Stealth, shorten >=3pt, shorten <=3pt},
    ]
        \def\xdist{0.13}
        \def\ydist{0.05}
    
        \node (left) at (0,0) {};
        \node (top) at (50*\xdist, 42*\ydist) {};
        \node (bottom) at (50*\xdist, -40*\ydist) {};
        \node (right) at (100*\xdist, 0) {};
        \fill [fill=color1!30!white] (50*\xdist, 0) ellipse ({50*\xdist} and {50*\ydist});
    
        \node (leftinner) at (15*\xdist,0) {};
        \node (topinner) at (42.5*\xdist, 27*\ydist) {};
        \node (bottominner) at (42.5*\xdist, -25*\ydist) {};
        \node (rightinner) at (70*\xdist, 0) {};
        \fill [fill=color2!30!white] (42.5*\xdist, 0) ellipse ({27.5*\xdist} and {35*\ydist});
    
        \node[color=color1!30!black] (phantoms) at (top) {moving-phantom mechanisms};
        \node[color=color2!60!black,  align=center] (proportional) at (topinner) {single-minded proportional\\[-2pt]moving-phantom mechanisms};
        
        \node[mech, color=color1!90!black ,label={[anchor=south]above:\Ut}] (util) at (left) {};
        \node[mech, color=color2!99!black ,label={[anchor=north]below:\UtP}] (utilprop) at (leftinner) {};
        \node[mech, color=color2!99!black ,label={[anchor=south]above:\PiecewiseUniform}] (piecewise) at (35*\xdist, 8*\ydist) {};
        \node[mech, color=color2!99!black ,label={[anchor=north]below:\Ladder}] (ladder) at (35*\xdist, -8*\ydist) {};
        \node[mech, color=color2!99!black ,label={[inner sep=4pt, anchor=north]below:\IM}] (im) at (55*\xdist, 0) {};
        \node[mech, color=color2!99!black ,label={[anchor=south]above:\Fan}] (greedymaxprop) at (rightinner) {};
        \node[mech, color=color1!90!black ,label={[anchor=north]below:\GreedyMax}] (greedymax) at (85*\xdist, 0) {};
        \node[mech, color=color1!90!black ,label={[anchor=south]above:\Constant}] (constant) at (right) {};
        
        \draw[dom] (util) to (utilprop);
        \draw[dom] (utilprop) to (piecewise);
        \draw[dom] (utilprop) to (ladder);
        \draw[dom] (piecewise) to (im);
        \draw[dom] (ladder) to (im);
        \draw[dom] (im) to (greedymaxprop);
        \draw[dom] (greedymaxprop) to (greedymax);
        \draw[dom] (greedymax) to (constant);
        
    \end{tikzpicture}
    \caption{Welfare domination relationships among known moving-phantom mechanisms. The mechanisms \UtP, \Fan, and the trivial \Constant mechanism are defined in this paper.}
    \label{fig:phantom_domination}
\end{figure}
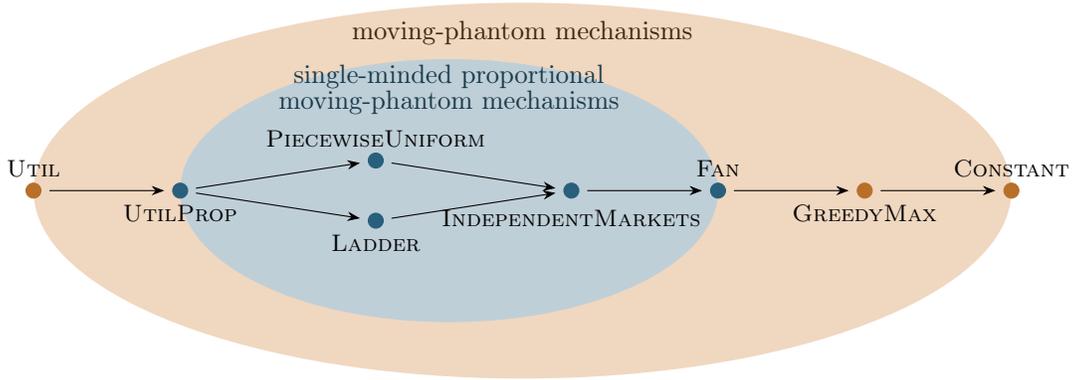

The rest of this section is devoted to proving \Cref{thm:phantom-domination}. The main technical ingredient to compare the welfare that two moving-phantom mechanisms provide for a particular instance is given by the following structural lemma: 
If, at the time of normalization, the phantom positions of one system can be obtained from those of another by shifting all phantoms above some threshold upward and all phantoms below the threshold downward, then the former system yields weakly higher social welfare.

\begin{restatable}{lemma}{lemPhantomDomination} \label{lem:phantom_domination}
    Let $n \in \N$ be fixed, $P = \{p_1, \dots, p_n\}$ be a profile and $f_0, \dots, f_n \in [0,1]$ and $g_0, \dots, g_n \in [0,1]$ the positions of the phantoms of two phantoms systems $\F_n$ and $\G_n$ at a time of normalization (respectively) for $P$, i.e., $$\sum_{j \in [m]} \med(f_0, \dots, f_n, \p_{1,j}, \dots, \p_{n,j}) = \sum_{j \in [m]} \med(g_0, \dots, g_n, \p_{1,j}, \dots, \p_{n,j}) = 1.$$
    Let $a^{\F}$ and $a^{\G}$ be the output allocations of the corresponding moving-phantom mechanisms.
    If there exists an index $k^\star \in \{-1,0,\ldots,n\}$ such that $f_i \ge g_i$ for all $i \in \{0, \dots, k^\star\}$ and $f_i \le g_i$ for all $i \in \{k^\star+1, \dots, n\}$, then $w(a^{\F}) \ge w(a^{\G})$.
\end{restatable}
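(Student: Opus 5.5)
The plan is to write the welfare difference as a sum of integrals of a per-alternative ``support count'', and then show that every unit of budget $\F$ adds is supported by at least $n-k^\star$ voters while every unit it removes was supported by at most $n-k^\star$ voters.

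\textbf{Rewriting welfare.} For an alternative $j$ and a level $x\in[0,1]$, let $s_j(x)=|\{i\in[n] : p_{i,j}\ge x\}|$ be the number of voters approving spending at level $x$ on $j$. Then $u_i(a)=\sum_j\int_0^{a_j}\mathbf{1}[p_{i,j}\ge x]\,dx$, so
\[
w(a^\F)-w(a^\G)=\sum_{j:\,a^\F_j>a^\G_j}\int_{a^\G_j}^{a^\F_j} s_j(x)\,dx\;-\;\sum_{j:\,a^\F_j<a^\G_j}\int_{a^\F_j}^{a^\G_j} s_j(x)\,dx .
\]
Both aggregates sum to $1$, so the total length of the ``increase'' intervals equals that of the ``decrease'' intervals. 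It therefore suffices to find a constant $c$ with $s_j(x)\ge c$ on every increase interval and $s_j(x)\le c$ on every decrease interval. I will take $c=n-k^\star$ and argue for all $x$ in the open intervals; endpoints have measure zero.

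\textbf{Increase intervals.} Fix $j$ with $a^\F_j>a^\G_j$ and $x\in(a^\G_j,a^\F_j)$. Since the median of the $2n+1$ values under $\F$ exceeds $x$, at least $n+1$ of those values are $\ge x$. Under $\G$ the median is below $x$, so at most $n$ values are $\ge x$. Because phantoms are ordered ($f_0\ge\dots\ge f_n$, likewise for $g$), the phantoms $\ge x$ form a prefix of indices in each system.

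I claim $\F$ has no phantom with index $>k^\star$ at or above $x$. Suppose $f_i\ge x$ for some $i>k^\star$, and let $0,\dots,i'$ be $\F$'s prefix of phantoms $\ge x$, so $i'\ge i$. For indices $l\in(k^\star,i']$ we have $g_l\ge f_l\ge x$. For $l\le k^\star<i$ we have $g_l\ge g_i\ge f_i\ge x$. Hence $\G$ has at least as many phantoms $\ge x$ as $\F$. The votes are identical in both systems, so $\G$ has at least as many values $\ge x$ as $\F$, i.e.\ at least $n+1$, which is a contradiction. Therefore $\F$ has at most $k^\star+1$ phantoms $\ge x$, and so $s_j(x)\ge n+1-(k^\star+1)=n-k^\star$.

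\textbf{Decrease intervals.} Fix $j$ with $a^\F_j<a^\G_j$ and $x\in(a^\F_j,a^\G_j)$. Now $\G$ has at least $n+1$ values $\ge x$ and $\F$ at most $n$. First I show $\G$ has at least $k^\star+2$ phantoms $\ge x$. If instead $\G$'s prefix of phantoms $\ge x$ were $0,\dots,i$ with $i\le k^\star$, then $f_l\ge g_l\ge x$ for all $l\le i$. So $\F$ would have at least as many values $\ge x$ as $\G$, a contradiction. In particular $g_l\ge x$ for all $l\le k^\star$, hence $f_l\ge g_l\ge x$ for all $l\le k^\star$, so $\F$ has at least $k^\star+1$ phantoms $\ge x$. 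Since $\F$ has at most $n$ values $\ge x$ in total, $s_j(x)\le n-(k^\star+1)<n-k^\star$.

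\textbf{Conclusion and main obstacle.} Combining the two bounds with the equal total lengths of the intervals gives $w(a^\F)-w(a^\G)\ge c\cdot L-c\cdot L=0$, where $L$ is that common length. The edge cases $k^\star=-1$ and $k^\star=n$ fall out of the same argument with empty prefixes. The main obstacle is the counting in the two interval cases, in particular using the monotone ordering of phantoms correctly so that ``phantoms $\ge x$'' is a prefix. I also need to handle ties at the median carefully; restricting to $x$ strictly inside the intervals should take care of this.
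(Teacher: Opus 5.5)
Your proof is correct and is essentially the paper's argument: it uses the same threshold $n-k^\star$, shows that every unit of budget added by $\F$ is approved by at least $n-k^\star$ voters and every unit removed by at most $n-k^\star-1$, and closes with normalization. The differences are only presentational: you count supporters level-wise via $s_j(x)$ rather than counting voters at the endpoint $a^{\F}_j$, and this lets the cases $k^\star\in\{-1,n\}$, which the paper treats separately, fall out of the same argument.
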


While the proof of \Cref{lem:phantom_domination} is deferred to \Cref{app:lemPhantomDomination}, the key intuition is the following: The phantom placement of $\mathcal{G}$ can be transformed into the phantom placement of $\mathcal{F}$ by moving up those phantoms with index $i \leq k^\star$ and then moving down those phantoms with index $i>k^\star$. The first movement increases the budget spent on intervals approved by at least $n-k^\star$ voters, the second movement decreases the budget spent on intervals approved by at most $n-k^\star-1$ voters.

\smallskip

For the relations stated in \Cref{thm:phantom-domination}, it suffices to show that we can apply \Cref{lem:phantom_domination} by showing that its condition is satisfied for any two points in time for the two corresponding phantom systems. We defer most of these proofs to \Cref{app:thmPhantomDomination}, where we also show the incomparability of \PiecewiseUniform and \Ladder by presenting two instances such that each mechanism strictly outperforms the other in terms of welfare in one of them.

The fact that \UtP is welfare-maximizing among single-minded proportional moving-phantom mechanisms, stated in \cref{item:welf-smp}, is particularly interesting and requires a deeper understanding of these mechanisms. 
To this end, we show that a condition that \citet{freeman2021truthful} showed to be sufficient for a moving-phantom mechanism to be single-minded proportional is also necessary:
A moving-phantom mechanism is single-minded proportional if and only if it is induced by a phantom system such that each phantom $k$ is at position $\frac{n-k}{n}$ at time $t = 1$. 

\begin{restatable}{lemma}{lemProportionalPhantoms}\label{lem:proportional_phantoms_characterization}
    A moving-phantom mechanism $\A$ is single-minded proportional if and only if there is a family of phantom systems, $\F_n = \{f_k \mid k \in [n]_0\}$ for each $n \in \N$, that produces the same outcome as $\A$ for every instance, with the property that for all $n \in \N$ we have $f_k(1) = \frac{n-k}{n}$ for each $k \in [n]_0$.
\end{restatable}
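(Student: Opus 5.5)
The ``if'' direction is the sufficient condition of \citet{freeman2021truthful} already quoted before \Cref{prop:utilprop}, so I would only prove ``only if''. Fix $n$ and a phantom system $\F_n=\{f_k\}$ inducing a single-minded proportional $\A$. The plan is to build a new system $\G_n$ with $g_k(1)=\frac{n-k}{n}$ that yields the same aggregate on every profile. The basic computation is the median on a single-minded profile. Suppose alternative $j$ has $s_j$ supporters, so its entries are $s_j$ ones, $n-s_j$ zeros and the $n+1$ phantoms. The median is the $(n+1)$-th smallest of these $2n+1$ values, which is the $(s_j+1)$-th smallest phantom, i.e.\ $a_j=f_{n-s_j}(t^\star)$. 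Single-minded proportionality therefore says: at a time of normalization $t^\star$, $f_{n-s}(t^\star)=\frac{s}{n}$ for every group size $s$ present in the profile.

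Next, set $T=\inf\{t: f_k(t)\ge \frac{n-k}{n}\ \forall k\}$. This is well defined since $f_k(1)\ge 1-\frac kn$, and by continuity $f_k(T)\ge\frac{n-k}{n}$ for all $k$, with equality for at least one $k$. The first step is to show that every profile normalizes by time $T$. For this I would use the standard fact behind Freeman et al.'s sufficiency proof: whenever $f_k\ge \frac{n-k}{n}$ for all $k$, the sum of medians is at least $1$. Monotonicity of the phantoms in $t$ and continuity then give a normalization time in $[0,T]$. So rescaling time by $t\mapsto tT$ does not change the outcome.

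The second step, which I expect to be the main obstacle, is to show $f_k(T)\le\frac{n-k}{n}$ for every $k$, or more generally that capping $g_k(t)=\min\bigl(f_k(tT),\frac{n-k}{n}\bigr)$ never changes a median at a normalization time. I would argue by contradiction using cleverly chosen single-minded profiles. Suppose $f_k(t_0)>\frac{n-k}{n}$ at some $t_0<T$, so some $k'$ has $f_{k'}(t_0)<\frac{n-k'}{n}$. Consider a profile with one group of size $n-k$ and the remaining voters arranged so that group size $n-k'$ also appears (padding with singletons or a further group, using $m$ as large as needed). Proportionality forces a normalization time $t'$ with $f_k(t')=\frac{n-k}{n}$ and $f_{k'}(t')=\frac{n-k'}{n}$. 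Monotonicity in $t$ then places $t'$ both before and after $t_0$, which is a contradiction; the case where $n-k$ and $n-k'$ cannot coexist in one profile (sizes summing above $n$) needs separate handling.

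Once $f_k$ does not exceed its target before $T$, the capped system $\G_n$ is a valid phantom system. It is continuous and nondecreasing, starts at $0$, satisfies $g_0\ge\dots\ge g_n$ because the caps are ordered, and has $g_k(1)=\frac{n-k}{n}$. Moreover it coincides with $f_k(tT)$ on the whole relevant interval, so it induces the same aggregates as $\A$. If the contradiction argument only yields the weaker statement that capping does not alter the median at normalization, I would instead verify directly that at $t^\star$ no phantom above its cap lies in median position.
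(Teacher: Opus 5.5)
\textbf{There is a genuine gap.} Your second step is false as you primarily state it, and the case you set aside (group sizes summing above $n$) is exactly where it fails. Take $n=2$ and $f_2\equiv 0$. Let $f_0=f_1$ rise together from $0$ to $\frac{9}{10}$ on $[0,\frac12]$. Then keep $f_1=\frac{9}{10}$ while $f_0$ rises to $1$ on $[\frac12,1]$. This is a valid phantom system, and it is single-minded proportional:
\begin{itemize}
\item the profile with two singleton groups has sum of medians $2f_1(t)$ and normalizes exactly when $f_1=\frac12$;
\item the profile with one group of size $2$ has sum $f_0(t)$ and normalizes at $t=1$.
\end{itemize}
Yet $T=1$ and $f_1(T)=\frac{9}{10}>\frac12$. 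In fact $f_1$ overshoots on all of $(\frac{5}{18},1]$ while $f_0<1$. Here $k=1$ and $k'=0$, so the sizes $n-k=1$ and $n-k'=2$ cannot coexist in one profile.

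Your two-group argument is correct, but it only proves the implication: if $f_k(t)>\frac{n-k}{n}$, then $f_{k'}(t)\ge\frac{n-k'}{n}$ for all $k'\ge n-k$. The paper uses exactly this observation. Phantoms with $k'<n-k$ may legitimately lag behind, so no separate handling can rescue the claim. Consequently the capped system does not coincide with $f_k(tT)$ on $[0,1]$, and your concluding paragraph does not go through.

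Everything therefore rests on your fallback, that capping never changes a median at a normalization time. This is the real content of the lemma, and you give no argument for it. Your formulation of it is also off: ``no phantom above its cap lies in median position at $t^\star$'' is not sufficient. Lowering a phantom that sits strictly above the median to a cap below it also moves the median.

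The missing idea is a profile surgery that turns an arbitrary profile into a single-minded one.
\begin{enumerate}
\item Fix a profile normalized at $t$ and lower the phantoms to their caps one at a time, starting from the bottom.
\item If some median changes, take the first responsible phantom $k$. Pick a position $\tau$ strictly between $\frac{n-k}{n}$ and $f_k(t)$, avoiding all votes on alternative $j$, at which phantom $k$ is the median of $j$. Then exactly $n-k$ voters vote above $\tau$ on $j$.
\item Make those $n-k$ voters single-minded on $j$. Make each remaining voter single-minded on some alternative where it votes above the current median. Medians only decrease, so their sum stays at most $1$.
\item In the resulting single-minded profile, the median on $j$ is $\tau>\frac{n-k}{n}$. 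Every other alternative has $s\le k$ supporters, so its median is phantom $n-s\ge n-k$. By the implication above, that phantom sits at or above $\frac{s}{n}$.
\item The medians therefore sum to more than $1$, which contradicts step 3.
\end{enumerate}
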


The proof, deferred to \Cref{app:lemProportionalPhantoms}, works by modifying the phantom system of an arbitrary single-minded proportional moving-phantom mechanism, in such a way that no phantom $k$ moves beyond position $\frac{n-k}{n}$. We show that this modification does not change the output for any instance. Using \Cref{lem:phantom_domination,lem:proportional_phantoms_characterization}, we can now prove that \UtP is welfare-maximizing among single-minded proportional moving-phantom mechanisms.

\begin{proof}[Proof of \Cref{thm:phantom-domination} \textnormal{\ref{item:welf-smp}} (first part)]
    
    Let $n$ be fixed and $\F_n$ be the phantom system of \UtP. By \Cref{lem:proportional_phantoms_characterization}, for any single-minded proportional moving-phantom mechanism $\A$, we can find a phantom system $\G_n$ inducing it such that for any phantom $k$ of $\G_n$, we have $g_k(1) = 1 - \frac{k}{n}$.
    We show that for any two times $t, t' \in [0,1]$, we can find a threshold index $k^\star\in \{-1,\ldots,n\}$, such that all phantoms with index $k \le k^\star$ from $\F_n$ lie above the ones of $\G_n$ at times $t$ and $t'$, respectively, and all the phantoms with index $k > k^\star$ lie below. %
    
    Let $k'$ be the index of the lowest non-zero phantom of $\F_n$ at time $t$. Note that, by the definition of \UtP, we have $f_{k}(t) = 1 - \frac{k}{n} \ge g_{k}(t')$ for all $k < k'$ and $f_{k}(t) = 0 \le g_{k}(t')$ for all $k > k'$. Now, we choose $k^\star = k'$ if $f_{k'}(t) \ge g_{k'}(t')$ and $k^\star = k'-1$ if $f_{k'}(t) < g_{k'}(t')$. In both cases, $k^\star$ satisfies the conditions from \Cref{lem:phantom_domination}. Applying the lemma yields that \UtP always achieves at least the welfare that $\A$ achieves, completing the proof.
\end{proof}

\section{Social Welfare of Decomposable Mechanisms} \label{sec:optimal_decomposable}

In the previous sections, we analyzed the social welfare of single-minded proportional mechanisms and, in particular, showed that \UtP both guarantees the best-possible approximation ratio in the worst case and maximizes the welfare among single-minded proportional moving-phantom mechanisms for every instance.
Here, we show that the situation is different for decomposable mechanisms: While \GrdDec gives the best-possible welfare approximation in the worst case, it does \textit{not} maximize welfare among decomposable mechanisms on every instance and, in fact, no computationally efficient mechanism can do so, unless P=NP. 

To make this statement formal, let \UtDec be a mechanism that outputs, on every instance, some decomposable allocation maximizing social welfare, using an arbitrary but fixed tie-breaking rule. To gain intuition, we begin with a simple example showing that \GrdDec 
may return an allocation with strictly lower welfare than \UtDec.\footnote{
We further show in \Cref{app:paretoGrdDecUtDec} that there are instances on which \GrdDec and \UtDec output the same allocation, and that allocation is Pareto-dominated by a non-decomposable one.}

\begin{example}\label{ex:greedydecomp_social_welfare_small}
    Consider the instance with $n = 4$ voters and $m=5$ alternatives and profile
    \[
        P = \begin{pmatrix}
            \sfrac{3}{4} & 0 & \sfrac{1}{4} & 0 & 0 \\
            0 & \sfrac{3}{4} & 0 & \sfrac{1}{4} & 0 \\
            0 & 0 & \sfrac{1}{3} & \sfrac{1}{3} & \sfrac{1}{3} \\
            0 & 0 & \sfrac{1}{3} & \sfrac{1}{3} & \sfrac{1}{3} \\
        \end{pmatrix},
    \]
    which we illustrate in \Cref{fig:greedydecomp_social_welfare_small}.

    \GrdDec first distributes the budget from voters $3$ and $4$ over the alternatives $3$ and $4$ and then has to spend the budget of voters $1$ and $2$  on alternatives $1$ and $2$  (compare \Cref{fig:greedydecomp_social_welfare_small_grdydecomp}), leading to the allocation $a = \big(\frac{1}{4}, \frac{1}{4}, \frac{1}{4}, \frac{1}{4}, 0\big)$ with welfare $w(a) = 2 \cdot \frac{1}{4} + 2 \cdot 3 \cdot \frac{1}{4} = 2$.

    \UtDec outputs $a' = \big(\frac{1}{12}, \frac{1}{12}, \frac{1}{4}, \frac{1}{4}, \frac{1}{3}\big)$ with welfare $w(a') = 2 \cdot \frac{1}{12} + 2 \cdot 3 \cdot \frac{1}{4} + 2 \cdot \frac{1}{3} = \frac{7}{3} > 2$. To see, that $a'$ is decomposable, consider the voter contributions (compare \Cref{fig:greedydecomp_social_welfare_small_opt})
    \[
        c = \begin{pmatrix}
            \sfrac{1}{12} & 0 & \sfrac{1}{6} & 0 & 0 \\
            0 & \sfrac{1}{12} & 0 & \sfrac{1}{6} & 0 \\
            0 & 0 & \sfrac{1}{12} & 0 & \sfrac{1}{6} \\
            0 & 0 & 0 & \sfrac{1}{12} & \sfrac{1}{6} \\
        \end{pmatrix}.
    \]
\end{example}

\begin{figure}
    \centering
    \def\coordwidth{0.045}
    \def\coorddist{0.21}
    \def\posone{0.1}
    \def\postwo{\posone + \coorddist}
    \def\posthr{\postwo + \coorddist}
    \def\posfou{\posthr + \coorddist}
    \def\posfiv{\posfou + \coorddist}
    \newcommand{\drawvoteannotated}[5]{
        \draw[ultra thick, #3] (#1-\coordwidth,#2) -- (#1+\coordwidth,#2)
        node[pos=0, inner sep=0pt, anchor=east] {\small #4}
        node[pos=1, inner sep=0pt, anchor=west] {\small #5};
    }
    \newcommand{\drawcontribution}[7]{
        \fill[pattern={Lines[angle=#5*0.8,distance=4pt,line width=2pt]},pattern color=#4!80!white, draw=#4!60!white] (#1-\coordwidth+#6*\coordwidth*2,#2) rectangle (#1-\coordwidth+#7*\coordwidth*2,#3);
    }
    \begin{subfigure}[t]{.49\textwidth}
        \centering
        \begin{tikzpicture}[yscale=3,xscale=6.3]
            \draw (0,0) -- (0,1);  
            \draw (-.02,0) -- (0.02,0); 
            \draw (-.02,1) -- (0.02,1); 
            \node[anchor=east, xshift=-5px] at (0,0) {$0$}; 
            \node[anchor=east, xshift=-5px] at (0,1) {$1$}; 
            \node[anchor=south, rotate=90] at (0,0.5) {\small Budget}; 
            \filldraw[fill=alternativebarcolor,draw=none] (\posone-\coordwidth,0) rectangle (\posone+\coordwidth,1); 
            \filldraw[fill=alternativebarcolor,draw=none] (\postwo-\coordwidth,0) rectangle (\postwo+\coordwidth,1); 
            \filldraw[fill=alternativebarcolor,draw=none] (\posthr-\coordwidth,0) rectangle (\posthr+\coordwidth,1); 
            \filldraw[fill=alternativebarcolor,draw=none] (\posfou-\coordwidth,0) rectangle (\posfou+\coordwidth,1); 
            \filldraw[fill=alternativebarcolor,draw=none] (\posfiv-\coordwidth,0) rectangle (\posfiv+\coordwidth,1); 
            \node[anchor=north, outer sep=5pt] (label1) at (\posone,0) {$1$};
            \node[anchor=north, outer sep=5pt] at (\postwo,0) {$2$};
            \node[anchor=north, outer sep=5pt] at (\posthr,0) {$3$};
            \node[anchor=north, outer sep=5pt] at (\posfou,0) {$4$};
            \node[anchor=north, outer sep=5pt] at (\posfiv,0) {$5$};
            \node[anchor=east, outer sep=0pt] at (label1.west) {\small Alt.:};
            \drawcontribution{\posone}{0}{1/4}{color2}{45}{0}{1};
            \drawcontribution{\postwo}{0}{1/4}{color3}{45}{0}{1};
            \drawcontribution{\posthr}{0}{1/4}{color1}{-45}{0}{1};
            \drawcontribution{\posfou}{0}{1/4}{color1}{-45}{0}{1};
            \drawvoteannotated{\posone}{3/4}{color2}{}{$\frac{3}{4}$};
            \drawvoteannotated{\postwo}{3/4}{color3}{}{$\frac{3}{4}$};
            \drawvoteannotated{\posthr}{1/3}{color1}{$2\times$}{};
            \drawvoteannotated{\posthr}{1/4}{color2}{}{$\frac{1}{4}$};
            \drawvoteannotated{\posfou}{1/3}{color1}{$2\times$}{};
            \drawvoteannotated{\posfou}{1/4}{color3}{}{$\frac{1}{4}$};
            \drawvoteannotated{\posfiv}{1/3}{color1}{$2\times$}{$\frac{1}{3}$};
        \end{tikzpicture}
        \caption{Output of \GrdDec.} \label{fig:greedydecomp_social_welfare_small_grdydecomp}
    \end{subfigure}
    \begin{subfigure}[t]{.49\textwidth}
        \centering
        \begin{tikzpicture}[yscale=3,xscale=6.3]
            \draw (0,0) -- (0,1);  
            \draw (-.02,0) -- (0.02,0); 
            \draw (-.02,1) -- (0.02,1); 
            \node[anchor=east, xshift=-5px] at (0,0) {$0$}; 
            \node[anchor=east, xshift=-5px] at (0,1) {$1$}; 
            \node[anchor=south, rotate=90] at (0,0.5) {\small Budget}; 
            \filldraw[fill=alternativebarcolor,draw=none] (\posone-\coordwidth,0) rectangle (\posone+\coordwidth,1); 
            \filldraw[fill=alternativebarcolor,draw=none] (\postwo-\coordwidth,0) rectangle (\postwo+\coordwidth,1); 
            \filldraw[fill=alternativebarcolor,draw=none] (\posthr-\coordwidth,0) rectangle (\posthr+\coordwidth,1); 
            \filldraw[fill=alternativebarcolor,draw=none] (\posfou-\coordwidth,0) rectangle (\posfou+\coordwidth,1); 
            \filldraw[fill=alternativebarcolor,draw=none] (\posfiv-\coordwidth,0) rectangle (\posfiv+\coordwidth,1); 
            \node[anchor=north, outer sep=5pt] (label1) at (\posone,0) {$1$};
            \node[anchor=north, outer sep=5pt] at (\postwo,0) {$2$};
            \node[anchor=north, outer sep=5pt] at (\posthr,0) {$3$};
            \node[anchor=north, outer sep=5pt] at (\posfou,0) {$4$};
            \node[anchor=north, outer sep=5pt] at (\posfiv,0) {$5$};
            \node[anchor=east, outer sep=0pt] at (label1.west) {\small Alt.:};
            \drawcontribution{\posone}{0}{1/12}{color2}{45}{0}{1};
            \drawcontribution{\postwo}{0}{1/12}{color3}{45}{0}{1};
            \drawcontribution{\posthr}{0}{1/6}{color2}{45}{0}{1};
            \drawcontribution{\posthr}{1/6}{1/4}{color1}{-45}{0}{1};
            \drawcontribution{\posfou}{0}{1/6}{color3}{45}{0}{1};
            \drawcontribution{\posfou}{1/6}{1/4}{color1}{-45}{0}{1};
            \drawcontribution{\posfiv}{0}{1/3}{color1}{-45}{0}{1};
            \drawvoteannotated{\posone}{3/4}{color2}{}{$\frac{3}{4}$};
            \drawvoteannotated{\postwo}{3/4}{color3}{}{$\frac{3}{4}$};
            \drawvoteannotated{\posthr}{1/3}{color1}{$2\times$}{};
            \drawvoteannotated{\posthr}{1/4}{color2}{}{$\frac{1}{4}$};
            \drawvoteannotated{\posfou}{1/3}{color1}{$2\times$}{};
            \drawvoteannotated{\posfou}{1/4}{color3}{}{$\frac{1}{4}$};
            \drawvoteannotated{\posfiv}{1/3}{color1}{$2\times$}{$\frac{1}{3}$};
        \end{tikzpicture}
        \caption{Optimal decomposable allocation.} \label{fig:greedydecomp_social_welfare_small_opt}
    \end{subfigure}
    \caption{Visualization of the profile and aggregates from \Cref{ex:greedydecomp_social_welfare_small} (Blue) upwards hatched areas represent contributions of voters $1$ and $2$, (orange) downwards hatched areas are contributions of voters $3$ and $4$.}
    \label{fig:greedydecomp_social_welfare_small}
\end{figure}

{%
In the following, we prove that computing a welfare-optimal decomposable allocation (and hence the outcome of \UtDec) is not possible in polynomial time unless $\mathrm{P}=\mathrm{NP}$. Nevertheless, a welfare-optimal decomposable allocation can be found via integer linear programming; see~\Cref{app:IPDecomp}. To formalize our hardness result, we introduce the following decision problem.}

\begin{definition}[\textsc{DecomposableWelfareThreshold (DWT)}]
An instance consists of a profile $P$ and a rational number $C \in \mathbb{Q}$. The decision problem asks whether there exists a decomposable allocation $a$ and voter contributions $(c_{i,j})_{i \in [n], j \in [m]}$ such that $a$ has social welfare $w(a) \ge C$.
\end{definition}

Clearly, \textsc{DWT} is contained in NP. We show that \textsc{DWT} is NP-complete, which implies that computing the outcome of \UtDec is NP-hard. While the proof of this result is deferred to \Cref{app:thmNPcomplete}, we sketch here the main ideas involved.

\begin{restatable}{theorem}{thmNPcomplete}
\label{thm:welfare-optimal-decomposable-hardness}
    \textsc{DecomposableWelfareThreshold} is NP-complete.
    In particular, computing the outcome of \UtDec is NP-hard.
\end{restatable}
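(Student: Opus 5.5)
Membership in NP is immediate: given $a$ and contributions $c$ with rational entries polynomially bounded in the input size, the certificate is checked by verifying the linear constraints of decomposability and evaluating $w(a)$. We could also argue that for a fixed support pattern (which pairs $(i,j)$ have $c_{i,j}>0$) the problem is a linear program. Then an optimal solution exists at a vertex with polynomial encoding length, so the pattern together with the LP solution forms a succinct certificate. For hardness, we plan a polynomial reduction from a strongly NP-hard packing problem. The natural candidate is \textsc{Exact Cover by 3-Sets} (X3C), or alternatively \textsc{Independent Set}, since the combinatorial difficulty in Example~\ref{ex:greedydecomp_social_welfare_small} stems from a binary choice. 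There, a voter either funds a "private" alternative at a high level or joins a group funding shared alternatives up to a lower ceiling. Funding a shared alternative above a voter's vote forbids that voter from contributing to it.

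The gadget design is as follows. For X3C with universe $U$ of size $3q$ and sets $S_1,\dots,S_r$, we introduce one "element" alternative per element of $U$ and one "set" voter per $S_\ell$. Each set voter places a moderate vote $\delta$ on the three element alternatives of its set and a large vote on a private alternative. We also add a block of "element" voters, or a large group of identical voters, whose high votes on the element alternatives make funding them at level $\delta$ very valuable. Their own budget must be forced elsewhere, for instance onto a common alternative whose ceiling they all share, so that they cannot cover the element alternatives themselves. Decomposability then implies that each element alternative funded at level $\delta$ must be paid by voters whose vote is at least $\delta$ there, namely by set voters covering it. Because any alternative funded at a level above a set voter's vote cannot be paid by that voter, a set voter that is "selected" spends its whole $\frac1n$ on its three elements (choosing $\delta=\frac{1}{3n}$). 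An unselected voter spends it on its private alternative. The threshold $C$ is chosen so that welfare at least $C$ is achievable if and only if all $3q$ element alternatives are funded at $\delta$ by exactly $q$ selected voters with no wasted budget, which means an exact cover.

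The main obstacle is the soundness direction, because contributions are fractional. A set voter could split its budget between elements and its private alternative, or several voters could partially fund the same element. We would handle this by first computing the welfare as a function of the funding levels, using the decomposition $w(a)=\sum_k\sum_j\min(a_j,\mu^k_j)$ from the proof of Theorem~\ref{thm:optimal_social_welfare_approximation}. We would then calibrate the vote values so that each element alternative yields a large marginal welfare up to $\delta$ and nothing beyond. We would also show that any slack, whether an element funded below $\delta$ or budget diverted to an element that is already full, loses strictly more than the private-alternative gain. A convexity/averaging argument then forces every element to reach exactly $\delta$ using exactly $3q\delta=q/n$ of set-voter budget. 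Since each voter's three elements each need $\delta$ and voters may only partially contribute, we still must rule out fractional covers. For this we would make the private alternatives' ceilings all-or-nothing: a voter whose private alternative receives funding strictly between $0$ and its full share lowers the welfare threshold. If that proves insufficient, we would instead reduce from a problem whose fractional relaxation is already hard, or exploit the constraint $a_j\le p_{i,j}$ to make partial spending on a private alternative infeasible, e.g., by giving each private alternative a second voter with vote exactly $\frac1n$. Finally, NP-hardness of \UtDec follows, since its output allows deciding \textsc{DWT} by comparing $w(\UtDec(P))$ with $C$.
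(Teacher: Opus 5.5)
Your NP-membership argument and the last step (deciding \textsc{DWT} via $w(\UtDec(P))$) are fine. The hardness reduction, however, has a genuine gap exactly where you flag the main obstacle, and neither proposed fix closes it.

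With sets as voters, set voter $\ell$ can spend $x_\ell/n$ on its three element alternatives and the rest on its private alternative. So any \emph{fractional} exact cover ($\sum_{\ell:\, e\in S_\ell} x_\ell = 1$ for all $e$, hence $\sum_\ell x_\ell = q$) gives a decomposable allocation with every element alternative at exactly $\delta$. Such covers exist on instances with no exact cover, e.g.\ $\{1,2,3\},\{1,4,5\},\{2,4,6\},\{3,5,6\}$ with all $x_\ell=\frac12$.

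An ``all-or-nothing'' penalty through the welfare threshold is impossible. Write $g_\ell(y)$ for the welfare generated on voter $\ell$'s private alternative at funding $y$. It is a sum of terms $\min(y,p_{i,j})$, hence concave with $g_\ell(0)=0$. So $g_\ell\bigl(\frac{1-x_\ell}{n}\bigr)\ge(1-x_\ell)\,g_\ell\bigl(\frac1n\bigr)$, and with identical gadgets a fractional solution is at least as good as an integral one. Your fallback (a second voter with vote $\frac1n$ on each private alternative) is not worked out, and that is where the whole proof would have to live.

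Separately, your ``large group'' with high votes on the element alternatives may itself fund them up to $\delta$. Decomposability only forbids contributing to an alternative funded above one's own vote; it can never push a voter's budget elsewhere. So the claim that only covering set voters can pay for an element alternative is false as stated.

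The missing idea is to get integrality from the discontinuity of the constraint $a_j\le p_{i,j}$ rather than from welfare, with the roles reversed. The paper reduces from RX3C and makes the following choices:
\begin{itemize}
\item Subsets become alternatives $S_j$, each receiving $\frac1n$ from a single-minded voter.
\item Elements become voters who vote $\frac1n+\varepsilon$ on the subsets containing them.
\item Each $S_j$ also has a toggle voter voting $1-\frac qn$ on $S_j$ and $\frac qn$ on a shared alternative $Y$.
\end{itemize}
Since $Y$ can absorb at most $\frac qn$, at least $\frac{k-q}{n}$ of toggle money lands on subset alternatives, at most $\frac1n$ on each. For $\varepsilon<\frac{1}{n(q+1)}$, at least $k-q$ of them therefore receive more than $\varepsilon$ and are pushed above $\frac1n+\varepsilon$, locking out every element voter there. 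Hence at most $q$ subsets remain usable, an integral count. Element voters only need \emph{some} usable subset containing them, so their fractional splitting is harmless. Without an exact cover, some element voter is locked out entirely and must put its whole budget on its dummy alternative $D_i$. The helper voters calibrate this to cost $\frac{\varepsilon}{3}$ of welfare relative to the cover allocation.
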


\begin{proof}[Proof Sketch]
The proof reduces from \textsc{ExactCoverBy3Sets (X3C)}, which asks whether a set of elements $U$ of size $3q$ admits a set cover of size $q$ from among a family of 3-element subsets $\mathcal{S} = \{ S_1, \ldots, S_k \}$ of $U$. 
It includes one \emph{subset alternative} for each $S_j$ and $3q$ \emph{element voters} whose preferences encode the incidence structure: for each element $e \in U$, the corresponding voter assigns a small positive vote\footnote{For technical reasons that are consistent with the intuition provided here, the full proof actually uses $\varepsilon+\frac{1}{n}$, not $\varepsilon$.} $\varepsilon$ to those $S_j$ that contain $e$ and zero to all other subset alternatives. A collection of auxiliary \emph{toggle voters} each vote highly for a single subset alternative as well as for a commonly liked alternative $Y$. Due to decomposability constraints, $Y$ can absorb the spending of exactly $q$ toggle voters. This allows exactly $q$ subset alternatives to be ``turned on'', while the remaining $k-q$ subset alternatives remain ``turned off'' because toggle voter spending crowds out potential element voter contributions. Finally, additional auxiliary voters ensure that each element voter has $\varepsilon/3$ of their budget that yields higher welfare when spent on a subset alternative than on any other alternative permitted by decomposability.

Hardness follows from showing that, subject to decomposability, social welfare is strictly higher on X3C instances that admit an exact cover than those that do not. If an exact cover exists, then it is possible to turn on $q$ subset alternatives such that each receives $\varepsilon/3$ contribution from exactly three element voters, yielding high welfare. If no exact cover exists, then for any $q$ turned-on subset alternatives there must exist an element voter who assigns zero weight to all of them; decomposability then forces this voter to contribute their entire budget to an alternative with lower marginal contribution to welfare, thus reducing total welfare relative to the exact-cover case.
\end{proof}

Given the hardness of computing the optimal decomposable allocation for a given instance, approximating it is, in a sense, the best we can hope for.
We finish with a positive result for \GrdDec in this regard: For every instance, it returns an allocation whose welfare is at least half of the welfare of any other decomposable allocation.
The proof is deferred to \Cref{app:thmGreedyDecompTwo}, but we give some intuition in a proof sketch.

\begin{restatable}{theorem}{thmGreedyDecompTwo}\label{thm:greedy-decomp-two}
    For any profile with $n \ge 2$ voters, \UtDec achieves at most $2-\frac{1}{n-1}$ times the welfare that \GrdDec achieves. 
\end{restatable}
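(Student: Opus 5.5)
We compare the layer decompositions of the two welfares, as in the proof of \Cref{thm:optimal_social_welfare_approximation}. Fix a profile, let $a$ be the output of \GrdDec and $a'$ any decomposable allocation with contributions $c'$. Write
\[
w(a) = \sum_{k\in[n]} \overlap{a}{\mu^k}, \qquad w(a') = \sum_{k \in [n]} \overlap{a'}{\mu^k}.
\]
We use two facts about $a$. By \Cref{lem:GrdDec_lower_bound}, $\overlap{a}{\mu^k} \ge \min\big(\frac{n-k+1}{n}, \sum_j \mu^k_j\big)$ for every $k$. By range respect, the top layer gives $\overlap{a}{\mu^n}=1$.

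For $a'$ we need an upper bound per layer. Decomposability says that each voter $i$ contributes only where $a'_j\le p_{i,j}$. Budget placed at heights that lie below $\mu^k_j$ on alternative $j$ is therefore paid by voters whose vote is at least $a'_j$. Consider the portion of $a'$ lying above $\mu^{k-1}$, that is, the amount $\sum_j (\min(a'_j,\mu^k_j)-\mu^{k-1}_j)^+$. Any such positive part requires $a'_j>\mu^{k-1}_j$. Only voters with $p_{i,j}\ge a'_j>\mu^{k-1}_j$ may contribute to $j$, and there are at most $n-k+1$ of them. Their total contribution is at most $\frac{n-k+1}{n}$, but the same voters are reused across layers. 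So the natural bound is
\[
\overlap{a'}{\mu^k} \le \min\Big(1,\ \sum_j\mu^{k}_j\Big),
\]
which is trivial. The sharper statement we aim for is
\[
\overlap{a'}{\mu^k} \le \sum_j \min(\mu^{k^\star-1}_j, \mu^k_j) + \frac{n-k+1}{n}\ \text{(roughly)}.
\]
It is obtained by charging the spending above $\mu^{k^\star-1}$ on alternative $j$ to voters voting above $a'_j$.

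The plan is then as follows. Let $k^\star$ be the minimal index with $\sum_j\mu^{k^\star}_j\ge \frac{n-k^\star+1}{n}$. For $k<k^\star$ both allocations cover $\mu^k$ fully (for $a$ by proportional spending; for $a'$ use $\le\sum_j\mu^k_j$). The common term $S=\sum_{k<k^\star}\sum_j\mu^k_j$ therefore cancels favorably, exactly as in \Cref{thm:optimal_social_welfare_approximation}. For $k\ge k^\star$, we lower bound $w(a)$ by $(n-k^\star)\frac{n-k^\star+1}{n}+1$. For $w(a')$ we need an upper bound better than $n-k^\star+1$, using decomposability of $a'$. Voter $i$ obtains utility $\overlap{p_i}{a'}$. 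Split it as the part of $a'$ funded by voter $i$ (at most $\frac1n$) plus the part funded by the others on alternatives where $p_{i,j}\ge a'_j$. The key inequality to prove is
\[
w(a') \le 1 + \sum_i \sum_{j:\,p_{i,j}\ge a'_j} a'_j \cdot[\ldots],
\]
and ideally
\[
w(a')\le 2\,w(a) - \tfrac{1}{n-1}w(a).
\]
A cleaner route is to compare directly with the greedy structure. \GrdDec spends each voter's budget on the highest layers available. This suggests an exchange argument: map each unit of $a'$ funded by voter $i$ on alternative $j$ at height $h\le a'_j\le p_{i,j}$ to spending that \GrdDec made with voter $i$'s budget. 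The supporters of the $a'$ unit number $|\{l: p_{l,j}\ge h\}|$. We must show this is at most twice the supporter count of the unit charged to $i$ in \GrdDec, up to an additive amount controlled by the top layer.

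The main obstacle is this charging step. We must show that whenever \GrdDec spends voter $i$'s budget at a layer supported by $s$ voters, no decomposable allocation can have voter $i$ funding a unit supported by more than roughly $2s$ voters. At the same time we must account for the loss incurred when \GrdDec runs voters out of budget early, as in \Cref{ex:greedydecomp_social_welfare_small}. First, we would prove the cruder factor $2$ by combining two bounds. The first is $w(a')\le 1+\sum_{k<n}\overlap{a'}{\mu^k}$. The second is $\overlap{a'}{\mu^k}\le \overlap{a}{\mu^k}+\frac{n-k+1}{n}$, obtained by noting that whatever $a'$ places below $\mu^k$ beyond $a$ must be funded by the $n-k+1$ top voters, who in \GrdDec already spent their budgets within layer $\ge k$. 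Summing, the extra term is at most $\sum_{k}\frac{n-k+1}{n}$, which is itself at most $w(a)-1$ by proportional spending in the regime where it is active. The $-\frac1{n-1}$ refinement would come from exploiting the exact top-layer term $\overlap{a}{\mu^n}=1$ and tracking the $k=1$ layer separately.
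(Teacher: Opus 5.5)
\textbf{There is a genuine gap: the one step you actually carry out is false.}

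Most of the proposal states targets rather than arguments. The ``sharper statement'' is marked ``roughly'', the key inequality contains $[\ldots]$, and $w(a')\le 2w(a)-\frac{1}{n-1}w(a)$ is just the theorem. The per-voter exchange argument, which is close to what the paper does, you leave as ``the main obstacle''. The step that fails is the per-layer bound $\overlap{a'}{\mu^k}\le\overlap{a}{\mu^k}+\frac{n-k+1}{n}$.

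\emph{Counterexample.} Take the instance of \Cref{prop:lbGrdDec} with $n=6$. \GrdDec returns $a=(\frac16,\dots,\frac16,0)$. The allocation $a'=(\frac{\varepsilon}{6},\frac{\varepsilon}{6},\frac{\varepsilon}{6},\frac16,\frac16,\frac16,\frac{1-\varepsilon}{2})$ is decomposable. For $k=5$ we have $\mu^5=(0,0,0,\frac{1+\varepsilon}{6},\frac{1+\varepsilon}{6},\frac{1+\varepsilon}{6},\frac{1-\varepsilon}{2})$. Hence $\overlap{a'}{\mu^5}-\overlap{a}{\mu^5}=\frac{1-\varepsilon}{2}>\frac13=\frac{n-k+1}{n}$.

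\emph{Why the justification fails.} A voter who funds alternative $j$ in $a'$ only satisfies $p_{i,j}\ge a'_j$. There can be many more than $n-k+1$ such voters. This happens with ties, as here: all three voters at $\mu^4_7=\mu^5_7=\mu^6_7$ fund alternative $7$. It also happens whenever $a'_j\le\mu^{k-1}_j$. The quantity $\frac{n-k+1}{n}$ is a valid bound \emph{per voter}: it caps the welfare generated by one voter's share of the excess, which is how the paper uses it. It does not bound the total excess below $\mu^k$.

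\textbf{How to repair it within your layer-by-layer plan.}
\begin{enumerate}[label=(\roman*)]
\item Let $J_k=\{j : a_j<\min(a'_j,\mu^k_j)\}$, and let $V_k$ be the set of voters with $c'_{i,j}>0$ for some $j\in J_k$. Each $i\in V_k$ has $p_{i,j}\ge a'_j>a_j$.
\item By the observation in the proof of \Cref{lem:GrdDec_lower_bound}, such a voter is out of budget at the end of outer iteration $k$ of \GrdDec. All of its spending therefore lies below $\mu^k$.
\item Consequently the layer-$k$ excess is at most $\sum_{j\in J_k}a'_j\le |V_k|/n\le\overlap{a}{\mu^k}$.
\item The excess vanishes for $k=1$, because $a\ge\mu^1$. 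The top layer cancels as you intended, since $\overlap{a'}{\mu^n}\le 1=\overlap{a}{\mu^n}$.
\item Summing gives $w(a')\le w(a)+\sum_{k=2}^{n-1}\overlap{a}{\mu^k}=2w(a)-1-\overlap{a}{\mu^1}$.
\item Since $w(a)\le\overlap{a}{\mu^1}+n-1$, this is at most $(2-\frac{1}{n-1})w(a)$.
\end{enumerate}
So your hint about the bottom and top layers does deliver the refinement, once the per-layer bound is replaced by $\overlap{a}{\mu^k}$.

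\textbf{Comparison with the paper.} The paper charges per voter rather than per layer. The welfare of the part of $A^{\optd}\setminus A^{\GrdDecs}$ funded by voter $i$ is at most $\frac{n-k+1}{n}\le w(C^{\GrdDecs}_i)$, so $w(A^{\optd}\setminus A^{\GrdDecs})\le w(a^{\GrdDecs})$. It then obtains the $-\frac{1}{n-1}$ from two facts. First, $A^{\optd}\setminus A^{\GrdDecs}$ and $A^{\GrdDecs}\setminus A^{\optd}$ have equal measure. Second, their units have support at most $n-1$ and at least $1$, respectively.
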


\begin{proof}[Proof Sketch]
    
    Given a profile $P$, we compare the allocation $a^\GrdDecs$ of \GrdDec to a welfare-optimal decomposable allocation $a^\optd$. We consider voter contributions certifying the decomposability of $a^\optd$ and imagine that voters spend their contributions in some arbitrary fixed order. This way, every marginal amount that $a^\optd$ spends on some $j \in [m]$ can be assigned to a specific voter.
    
    Our goal is to show that the welfare advantage of $a^\optd$ over $a^\GrdDecs$ on alternatives where $a^\optd > a^\GrdDecs$ is bounded by the welfare that $a^\GrdDecs$ achieves. To this end we %
    define $a^{\optd \cap \GrdDecs} = (\min(a^{\optd}_j, a^{\GrdDecs}_j))_{j \in [m]}$ as the overlapping budget spending and consider an alternative $j$ on which $a^\optd_j > a^\GrdDecs_j$. Let $i$ be a voter who contributes to buying more of alternative $j$ than \GrdDec does. By the definition of decomposability, that voter's vote exceeds $a^\GrdDecs_j$ on $j$. Thus, during the execution of \GrdDec, voter $i$ has already run out of budget when the algorithm considers spending more than $a^\GrdDecs_j$ on alternative $j$. In turn, this means that voter $i$ must have spent their entire budget where it is approved by at least as many voters as approve the marginal increase from $a^\GrdDecs_j$ to $a^\optd_j$. By generalizing this argument over all voters and alternatives $j$ with $a^\optd_j > a^\GrdDecs_j$, we can show that indeed the welfare achieved by $a^\optd$ over $a^\GrdDecs$ is bounded by the welfare of $a^\GrdDecs$, i.e., $w(a^\optd) - w(a^{\optd \cap \GrdDecs}) \le w(a^\GrdDecs)$. Then,
    \begin{align*}
        \frac{w(a^\optd)}{w(a^\GrdDecs)} = \frac{w(a^\optd) - w(a^{\optd \cap \GrdDecs}) + w(a^{\optd \cap \GrdDecs})}{w(a^\GrdDecs)} \le \frac{w(a^\GrdDecs) + w(a^\GrdDecs)}{w(a^\GrdDecs)} = 2.
    \end{align*}
    The additional term $-\frac{1}{n-1}$ in the approximation factor is achieved by observing that, by the range-respect property, \GrdDec never spends budget where no voter approves it. This gives a lower bound on the welfare that \GrdDec achieves over $a^\optd$ on alternatives where it spends more, which can be used in the inequality above to slightly improve the bound.
\end{proof}

In \Cref{app:lbGrdDec}, we generalize the construction from \Cref{ex:greedydecomp_social_welfare_small} and give instances on which \UtDec has $2-\Theta(1/n)$ times the welfare that \GrdDec achieves, showing that the analysis above is asymptotically tight. Since the profile used in the proof has $m = n+1$, the instance also yields a lower bound of $2-\Theta(1/m)$ for the welfare approximation.
Beyond this specific mechanism, the question of the best approximation of the welfare provided by \UtDec that can be guaranteed by a decomposable, polynomial-time mechanism remains open.

\section{Discussion}\label{sec:discussion}

We have studied budget aggregation under $\ell_1$ utilities from the dual perspectives of proportionality and social welfare. By deriving a tight bound on the price of single-minded proportionality, our results quantify the welfare cost of enforcing fairness in this setting. Moreover, this cost (in terms of $n$) does not increase even when single-minded proportionality is combined with truthfulness or strengthened to decomposability. Beyond worst-case guarantees, by introducing an instance-wise welfare dominance relation, we identify \UtP as welfare-optimal among single-minded proportional moving-phantom mechanisms. Finally, we have shown that while the welfare-optimal decomposable outcome is NP-hard to compute, a simple greedy mechanism \GrdDec achieves a 2-approximation. We discuss extensions and open questions below. %

\paragraph{Weighted Setting} A natural generalization of the budget aggregation setting is \textit{donor coordination} \citep{brandl2021distribution,brandt2023balanced}, where voters contribute different amounts to a pool of money that is to be distributed among several charities. We can extend our model to capture such entitlements by assigning each voter $i$ an integer weight $\omega_i$. In \Cref{app:weights} we discuss how to generalize the mechanisms and axioms studied in this paper to the weighted setting and demonstrate that all results of this paper extend. In particular, we show that the class of weighted moving-phantom mechanisms is truthful, that all our welfare-approximation bounds still hold when replacing the parameter $n$ by the total weight of all voters $b$, and that all instance-wise welfare domination results remain true for the weighted versions of the moving-phantom mechanisms.

\paragraph{Welfare Approximation for Other Proportional Mechanisms} 
In \Cref{sec:worst_case_price}, we showed that \UtP attains the optimal welfare approximation among single-minded proportional mechanisms, namely $\Theta(\sqrt{n})$.
This guarantee does not hold for \PiecewiseUniform, \IM, or \Fan.
Indeed, in \Cref{app:lb-im-pu} we establish a lower bound, linear in $n$, on the welfare approximation of \PiecewiseUniform; the same bound applies to \IM and \Fan by \Cref{thm:phantom-domination}.
For \Ladder, we do not have a definitive answer. However, we can show that it does not satisfy proportional spending (see \Cref{app:lb-im-pu}).
Hence, a fundamentally different proof technique would be required to establish worst-case optimality for this mechanism, if it holds at all.
Regardless, the instance-wise welfare domination of \UtP over \Ladder establishes the optimality of \UtP in a stronger sense among single-minded proportional mechanisms.

\paragraph{Phantom Lattice} 
Consider the welfare-dominance relation defined in \Cref{sec:phantom_domination}. Interestingly, for both the class of moving-phantom mechanisms and the class of single-minded moving-phantom mechanisms, we can identify a welfare-maximizing and a welfare-minimizing moving-phantom mechanism. Since our dominance relation is, by construction, a partial order, the existence of such extremal elements was not a priori guaranteed and may therefore be viewed as surprising. Their existence naturally leads to the following open question: Does the dominance relation form a \emph{lattice}, either within the set of all moving-phantom mechanisms or within the set of single-minded proportional moving-phantom mechanisms? While we do not settle this question in full generality, we show in \Cref{app:latticeTwoAlternatives} that both statements hold in the special case of two alternatives.

\section*{Acknowledgements}
We thank Golnoosh Shahkarami for discussions during the initial stages of this project. We thank Sietske Benschop for conducting experiments on moving-phantom mechanisms as part of her master’s thesis. In particular, her empirical observation regarding the social welfare of moving-phantom mechanisms inspired our work on the dominance relation in \Cref{sec:phantom_domination}. We also thank Felix Brandt and Matthias Greger for discussions on the topic.

This work was partially done while Javier Cembrano was affiliated with the Max Planck Institute for Informatics. Ulrike Schmidt-Kraepelin was supported by the Dutch Research Council (NWO) under project number VI.Veni.232.254.

\clearpage
\bibliography{literature}

@article{caragiannis2022truthful,
  title={Truthful aggregation of budget proposals with proportionality guarantees},
  author={Caragiannis, Ioannis and Christodoulou, George and Protopapas, Nicos},
  journal={Artificial Intelligence},
  volume={335},
  pages={104178},
  year={2024},
  publisher={Elsevier},
  doi={10.1016/j.artint.2024.104178}
}

@article{freeman2021truthful,
  title={Truthful aggregation of budget proposals},
  author={Freeman, Rupert and Pennock, David M and Peters, Dominik and Vaughan, Jennifer Wortman},
  journal={Journal of Economic Theory},
  volume={193},
  number={3},
  pages={105234},
  year={2021},
  publisher={Academic Press Inc.},
  doi={10.1016/j.jet.2021.105234}
}

@article{gonzalez1985clustering,
title = {Clustering to minimize the maximum intercluster distance},
journal = {Theoretical Computer Science},
volume = {38},
pages = {293-306},
year = {1985},
author = {Teofilo F. Gonzalez},
doi={10.1016/0304-3975(85)90224-5}
}

@techreport{nehring2019resource,
  author =        {Nehring, Klaus and Puppe, Clemens},
  institution =   {Karlsruhe Institute of Technology (KIT)},
  number =        {131},
  type =          {Working Paper Series in Economics},
  title =         {Resource allocation by frugal majority rule},
  year =          {2019},
}

@article{goel2019knapsack,
  title={Knapsack voting for participatory budgeting},
  author={Goel, Ashish and Krishnaswamy, Anilesh K and Sakshuwong, Sukolsak and Aitamurto, Tanja},
  journal={ACM Transactions on Economics and Computation},
  volume={7},
  number={2},
  pages={1--27},
  year={2019},
  publisher={ACM New York, NY, USA},
  doi={10.1145/3340230}
}

@inproceedings{lindner2008midpoint,
	title = {Allocating Public Goods via the Midpoint Rule},
	author = {Tobias Lindner and Klaus Nehring and Clemens Puppe},
	booktitle = {Proceedings of the 9th International Meeting of the Society for Social Choice and Welfare},
	year = {2008}
}

@article{bogomolnaia2005collective,
  title={Collective choice under dichotomous preferences},
  author={Bogomolnaia, Anna and Moulin, Herv{\'e} and Stong, Richard},
  journal={Journal of Economic Theory},
  volume={122},
  doi={10.1016/j.jet.2004.05.005},
  number={2},
  pages={165--184},
  year={2005},
  publisher={Elsevier}
}

@inproceedings{brandl2021distribution,
  title={Distribution Rules Under Dichotomous Preferences: Two Out of Three Ain't Bad},
  author={Brandl, Florian and Brandt, Felix and Peters, Dominik and Stricker, Christian},
  booktitle={Proceedings of the 22nd ACM Conference on Economics and Computation},
  pages={158--179},
  year={2021},
  doi={10.1145/3465456.3467653}
}

@inproceedings{MPS20a,
  title={Price of Fairness in Budget Division and Probabilistic Social Choice},
  author={Michorzewski, Marcin and Peters, Dominik and Skowron, Piotr},
  booktitle={Proceedings of the 34th AAAI Conference on Artificial Intelligence},
  pages={2184--2191},
  year={2020},
  doi={10.1609/aaai.v34i02.5594}
}

@article{elkind2026settling,
  title={Settling the score: Portioning with cardinal preferences},
  author={Elkind, Edith and Greger, Matthias and Lederer, Patrick and Suksompong, Warut and Teh, Nicholas},
  journal={Artificial Intelligence},
  pages={104487},
  year={2026},
  publisher={Elsevier},
  doi={10.1016/j.artint.2026.104487}
}

@InProceedings{goyal2023low,
  author =	{Goyal, Mohak and Sakshuwong, Sukolsak and Sarmasarkar, Sahasrajit and Goel, Ashish},
  title =	{{Low Sample Complexity Participatory Budgeting}},
  booktitle =	{Proceedings of the 50th International Colloquium on Automata, Languages, and Programming},
  pages =	{70:1--70:20},
  year =	{2023},
  doi={10.4230/LIPIcs.ICALP.2023.70}
}

@inproceedings{brandt2024optimal,
  title={Optimal Budget Aggregation with Single-Peaked Preferences},
  author={Brandt, Felix and Greger, Matthias and Segal-Halevi, Erel and Suksompong, Warut},
  booktitle={Proceedings of the 25th ACM Conference on Economics and Computation},
  pages={49},
  year={2024},
  doi={10.1145/3670865.3673512}
}

@inproceedings{freeman2024project,
	title = {Project-Fair and Truthful Mechanisms for Budget Aggregation},
	booktitle = {Proceedings of the 38th {AAAI} Conference on Artificial Intelligence},
	author = {Freeman, Rupert and Schmidt-Kraepelin, Ulrike},
    pages={9704--9712},
    year={2024},
    doi={10.1609/aaai.v38i9.28828}
}

@inproceedings{brandt2023balanced,
  title = {Balanced {{Donor Coordination}}},
  author = {Brandt, Felix and Greger, Matthias and {Segal-Halevi}, Erel and Suksompong, Warut},
  year = 2023,
  month = jul,
  pages = {299},
  booktitle = {Proceedings of the 24th {{ACM Conference}} on {{Economics}} and {{Computation}}},
  doi={10.1145/3580507.3597729}
}

@article{BBG+22a,
  title = {Funding Public Projects: {{A}} Case for the {{Nash}} Product Rule},
  author = {Brandl, Florian and Brandt, Felix and Greger, Matthias and Peters, Dominik and Stricker, Christian and Suksompong, Warut},
  year = 2022,
  journal = {Journal of Mathematical Economics},
  volume = {99},
  pages = {102585},
  doi={10.1016/j.jmateco.2021.102585},
  booktitle = {Journal of {{Mathematical Economics}}}
}

@article{becker2025efficiently,
  title={Efficiently computing equilibria in budget-aggregation games},
  author={Becker, Patrick and Fries, Alexander and Greger, Matthias and Segal-Halevi, Erel},
  journal={arXiv preprint arXiv:2509.08767},
  year={2025},
  doi = {10.48550/arXiv.2509.08767},
}

@inproceedings{BrPe24b,
  title = {Completing {{Priceable Committees}}: {{Utilitarian}} and {{Representation Guarantees}} for {{Proportional Multiwinner Voting}}},
  author = {Brill, Markus and Peters, Jannik},
  year = 2024,
  month = mar,
  pages = {9528--9536},
  doi = {10.1609/aaai.v38i9.28808},
  booktitle = {Proceedings of the 38th {{AAAI Conference}} on {{Artificial Intelligence}}}
}

@inproceedings{berg2024truthful,
  title = {Truthful {{Budget Aggregation}}: {{Beyond Moving Phantom Mechanisms}}},
  booktitle = {Proceedings of the 20th {{Conference}} on {{Web}} and {{Internet Economics}}},
  author = {{de Berg}, Mark and Freeman, Rupert and {Schmidt-Kraepelin}, Ulrike and Utke, Markus},
  year = 2024,
}

@article{EFI+24b,
  title = {The {{Price}} of {{Justified Representation}}},
  author = {Elkind, Edith and Faliszewski, Piotr and Igarashi, Ayumi and Manurangsi, Pasin and {Schmidt-Kraepelin}, Ulrike and Suksompong, Warut},
  year = 2024,
  journal = {ACM Transactions on Economics and Computation},
  volume = {12},
  number = {3},
  pages = {1--27},
  doi = {10.1145/3676953},
  booktitle = {{{ACM Transactions}} on {{Economics}} and {{Computation}}}
}

@article{bertsimas2011price,
  title={The Price of Fairness},
  author={Bertsimas, Dimitris and Farias, Vivek F and Trichakis, Nikolaos},
  journal={Operations Research},
  volume={59},
  number={1},
  pages={17--31},
  year={2011},
  publisher={INFORMS},
  doi = {10.1287/opre.1100.0865}
}

@article{caragiannis2012efficiency,
  title={The Efficiency of Fair Division},
  author={Caragiannis, Ioannis and Kaklamanis, Christos and Kanellopoulos, Panagiotis and Kyropoulou, Maria},
  journal={Theory of Computing Systems},
  volume={50},
  number={4},
  pages={589--610},
  year={2012},
  publisher={Springer},
  doi = {10.1007/s00224-011-9359-y}
}

@article{LaSk20b,
  title = {Utilitarian Welfare and Representation Guarantees of Approval-Based Multiwinner Rules},
  author = {Lackner, Martin and Skowron, Piotr},
  year = 2020,
  journal = {Artificial Intelligence},
  volume = {288},
  pages = {103366},
  issn = {00043702},
  doi = {10.1016/j.artint.2020.103366},
  urldate = {2026-01-14},
  langid = {english},
  booktitle = {Artificial {{Intelligence}}}
}

@inproceedings{TWZ20b,
  title={Price of Fairness in Budget Division for Egalitarian Social Welfare},
  author={Tang, Zhongzheng and Wang, Chenhao and Zhang, Mengqi},
  booktitle={Proceedings of the 14th International Conference on Combinatorial Optimization and Applications},
  pages={594--607},
  year={2020},
  doi={10.1007/978-3-030-64843-5_40}
}

@inproceedings{BrPe23a,
  title = {Robust and Verifiable Proportionality Axioms for Multiwinner Voting},
  author = {Brill, Markus and Peters, Jannik},
  year = 2023,
  pages = {301},
  booktitle = {Proceedings of the 24th {{ACM Conference}} on {{Economics}} and {{Computation}}},
  doi={10.1145/3580507.3597785}
}
\bibliographystyle{abbrvnat} 

\appendix

\pagebreak

\section{Overview of the Moving-Phantom Mechanisms \\ from the Literature}
\label{app:mp-mechs}

\newcommand{\plotPhantom}[2]{
    \draw[phantom] (0,#1) to (100,#1) node[phlabel]{#2};
}
\newcommand{\plotPhantoms}[2]{
    \draw[phantoms] (0,#1) to (100,#1) node[phlabel]{#2};
}
\newcommand{\plotMovingPhantom}[3]{
    \plotPhantom{#1}{#2}
    \draw[phantom, -{Latex[length=3,width=3]}, opacity=#3] (50,#1) -- (50,#1+2+5*#3); 
}
\newcommand{\plotMovingPhantoms}[3]{
    \plotPhantoms{#1}{#2}
    \draw[phantom, -{Latex[length=3,width=3]}, opacity=#3] (50,#1) -- (50,#1+2+5*#3); 
}

\newcommand{\plotPhantomSystemChart}[1]{
    \adjustbox{valign=c}{\begin{tikzpicture}[xscale=0.005, yscale=0.035]
        \draw (-5,0) -- (-5,100);  
        \draw (-15,0) -- (5,0); 
        \draw (-15,100) -- (5,100); 
        \node[anchor=east, xshift=0px] at (0,0) {$0$}; 
        \node[anchor=east, xshift=0px] at (0,100) {$1$}; 
        \phantom{\plotPhantoms{100}{$f_0$--$f_2$};}
        #1
    \end{tikzpicture}}
}
\newcommand{\plotPhantomSystemChartNoLabels}[1]{
    \adjustbox{valign=c}{\begin{tikzpicture}[xscale=0.005, yscale=0.035]
        \draw (-5,0) -- (-5,100);  
        \draw (-15,0) -- (5,0); 
        \draw (-15,100) -- (5,100); 
        \phantom{\plotPhantoms{100}{$f_0$--$f_2$};}
        #1
    \end{tikzpicture}}
}

\begin{longtable}{c@{\hspace{5pt}}c@{\hspace{8pt}}m{4.7cm}}
    \caption{Overview over the moving-phantom mechanisms from the literature.} \\
    \toprule
    \begin{tabular}{@{}c@{}}\textbf{Name} \\ \textbf{Definition}\end{tabular} & \textbf{Visualization} & \textbf{Description} \\
    \midrule
    \endhead
    
    \begin{tabular}{@{}c@{}}\textbf{\Constant} \\[10pt]
        $\displaystyle f_k(t) = t$
    \end{tabular} & 
    \plotPhantomSystemChart{
        \plotMovingPhantoms{100/6}{$f_0$--$f_6$} {1};
    }\hspace*{-3pt}
    \plotPhantomSystemChartNoLabels{
        \plotMovingPhantoms{45}{$f_0$--$f_6$}{1};
    }&
    All phantoms move simultaneously. Normalization is always reached when all phantoms reach position $\frac{1}{m}.$\\

    \begin{tabular}{@{}c@{}}\textbf{\GreedyMax} \\[0pt]
        \citep{berg2024truthful} \\[10pt]
        $\displaystyle f_k(t) = t \cdot \min(1,n-k)$
    \end{tabular} & 
    \plotPhantomSystemChart{
        \plotPhantom{0}{$f_6$}
        \plotMovingPhantoms{100/6}{$f_0$--$f_5$}{1};
    }\hspace*{-3pt}
    \plotPhantomSystemChartNoLabels{
        \plotPhantom{0}{$f_6$}
        \plotMovingPhantoms{45}{$f_0$--$f_5$}{1};
    }&
    The top $n$ phantoms move simultaneously, while the last one stays at zero. \\

    \begin{tabular}{@{}c@{}}\textbf{\Fan} \\[10pt]
        $\displaystyle f_k(t) = \min\Big(\frac{n-k}{n}, t\Big)$
    \end{tabular} & 
    \plotPhantomSystemChart{
        \plotPhantom{0}{$f_6$}
        \plotPhantom{1*100/6}{$f_5$};
        \plotMovingPhantoms{1*100/6+10}{$f_0$--$f_4$}{1};
    }\hspace*{-3pt}
    \plotPhantomSystemChartNoLabels{
        \plotPhantom{0*100/6}{$f_6$};
        \plotPhantom{1*100/6}{$f_5$};
        \plotPhantom{2*100/6}{$f_4$};
        \plotPhantom{3*100/6}{$f_3$};
        \plotMovingPhantoms{3*100/6+10}{$f_0$--$f_2$}{1};
    }&
    All phantoms move simultaneously with each phantom stopping at position $\frac{n-k}{n}$. \\
    
    \begin{tabular}{@{}c@{}}\textbf{\IM} \\[0pt]
        \citep{freeman2021truthful} \\[10pt]
        $\displaystyle f_k(t) = t \cdot \frac{n-k}{n}$
    \end{tabular} & 
    \plotPhantomSystemChart{
        \def\posone{7}
        \plotPhantom{0}{$f_6$}
        \plotMovingPhantom{\posone*1}{$f_5$}{0.5};
        \plotMovingPhantom{\posone*2}{$f_4$}{0.6};
        \plotMovingPhantom{\posone*3}{$f_3$}{0.7};
        \plotMovingPhantom{\posone*4}{$f_2$}{0.8};
        \plotMovingPhantom{\posone*5}{$f_1$}{0.9};
        \plotMovingPhantom{\posone*6}{$f_0$}{1.0};
    }\hspace*{-3pt}
    \plotPhantomSystemChartNoLabels{
        \def\postwo{14}
        \plotPhantom{0}{$f_6$}
        \plotMovingPhantom{\postwo*1}{$f_5$}{0.5};
        \plotMovingPhantom{\postwo*2}{$f_4$}{0.6};
        \plotMovingPhantom{\postwo*3}{$f_3$}{0.7};
        \plotMovingPhantom{\postwo*4}{$f_2$}{0.8};
        \plotMovingPhantom{\postwo*5}{$f_1$}{0.9};
        \plotMovingPhantom{\postwo*6}{$f_0$}{1.0};
    }&
    All phantoms move simultaneously but at different speeds, until each phantom $k$ reaches its final position $\frac{n-k}{n}$ at time $t=1$. \\

    \begin{tabular}{@{}c@{}}\textbf{\Ladder} \\[0pt]
        \citep{freeman2024project} \\[10pt]
        $\displaystyle f_k(t) = \max\Big(t - \frac{k}{n},0\Big)$
    \end{tabular} & 
    \plotPhantomSystemChart{
        \def\dist{100/6}
        \plotPhantoms{0}{$f_2$--$f_6$};
        \plotMovingPhantom{8}{$f_1$}{1};
        \plotMovingPhantom{8+1*\dist}{$f_0$}{1};
    }\hspace*{-3pt}
    \plotPhantomSystemChartNoLabels{
        \def\dist{100/6}
        \plotPhantoms{0}{$f_5$--$f_6$};
        \plotMovingPhantom{10+0*\dist}{$f_4$}{1};
        \plotMovingPhantom{10+1*\dist}{$f_3$}{1};
        \plotMovingPhantom{10+2*\dist}{$f_2$}{1};
        \plotMovingPhantom{10+3*\dist}{$f_1$}{1};
        \plotMovingPhantom{10+4*\dist}{$f_0$}{1};
    }&
    All phantoms move at the same speed, but start at different times, such that each phantom $k$ reaches its final position $\frac{n-k}{n}$ at time $t=1$. \\

    \begin{tabular}{@{}c@{}}\textbf{\PiecewiseUniform} \\[0pt]
        \citep{caragiannis2022truthful} \\[10pt]
        \multicolumn{1}{l}{For $t < \frac{1}{2}$}\\[10pt]
        $
            f_k(t) = 
            \begin{cases}
                \frac{4t(n-k)}{n}-2t  & \text{ if } \frac{k}{n} \le \frac{1}{2}, \\
                0 & \text{ if } \frac{k}{n} > \frac{1}{2}, 
            \end{cases}
        $\\[20pt]
        \multicolumn{1}{l}{and for $t \ge \frac{1}{2}$} \\[10pt]
        $
            f_k(t) = 
            \begin{cases}
                \frac{(n-k) (3-2t)}{n} -2 + 2t \!\!\!\!  & \text{ if } \frac{k}{n} \le \frac{1}{2}, \\
                \frac{(n-k) (2t-1)}{n} & \text{ if } \frac{k}{n} > \frac{1}{2}.
            \end{cases}
        $ 
    \end{tabular} & 
    \plotPhantomSystemChart{
        \def\distone{30}
        \plotPhantoms{0}{$f_3$--$f_6$};
        \plotMovingPhantom{1*\distone}{$f_2$}{0.6};
        \plotMovingPhantom{2*\distone}{$f_1$}{0.8};
        \plotMovingPhantom{3*\distone}{$f_0$}{1};
    }\hspace*{-3pt}
    \plotPhantomSystemChartNoLabels{
        \def\distone{25.333}
        \def\distwo{8}
        \plotPhantom{0}{$f_6$};
        \plotMovingPhantom{1*\distwo}{$f_5$}{0.6};
        \plotMovingPhantom{2*\distwo}{$f_4$}{0.8};
        \plotMovingPhantom{3*\distwo}{$f_3$}{1};
        \plotMovingPhantom{3*\distwo+1*\distone}{$f_2$}{0.8};
        \plotMovingPhantom{3*\distwo+2*\distone}{$f_1$}{0.6};
        \plotPhantom{3*\distwo+3*\distone}{$f_0$};
    }&
    In the first phase the uppermost half of the phantoms are spread uniformly across $[0,1]$. In the second phase the lower half of the phantoms are spread across $[0, \frac{1}{2}]$ while concentrating the upper phantoms into $[\frac{1}{2}, 1]$. \\

    \begin{tabular}{@{}c@{}}\textbf{\Ut} \\[0pt]
        \citep{freeman2021truthful} \\[10pt]
        $
            f_k(t) = \max(0, \min(1, t(n+1)-k))
        $
    \end{tabular} & 
    \plotPhantomSystemChart{
        \plotPhantoms{0}{$f_1$--$f_6$};
        \plotMovingPhantom{70}{$f_0$}{1};
    }\hspace*{-3pt}
    \plotPhantomSystemChartNoLabels{
        \plotPhantoms{0}{$f_4$--$f_6$};
        \plotMovingPhantom{45}{$f_3$}{1};
        \plotPhantoms{100}{$f_0$--$f_2$};
    }&
    The phantoms move consecutively from zero to one. \\

    \begin{tabular}{@{}c@{}}\textbf{\UtP} \\[10pt]
        $
            f_k(t) = \max\big(0, \min\big(\frac{n-k}{n}, t(n+1)-k\big)\big)
        $
    \end{tabular} & 
    \plotPhantomSystemChart{
        \plotPhantoms{0}{$f_1$--$f_6$};
        \plotMovingPhantom{70}{$f_0$}{1};
    }\hspace*{-3pt}
    \plotPhantomSystemChartNoLabels{
        \plotPhantoms{0}{$f_4$--$f_6$};
        \plotMovingPhantom{28}{$f_3$}{1};
        \plotPhantom{4*100/6}{$f_2$};
        \plotPhantom{5*100/6}{$f_1$};
        \plotPhantom{100}{$f_0$};
    }&
    The phantoms move consecutively from zero to $\frac{n-k}{n}$. \\
    
    \bottomrule
    \label{tab:phantom_overview}
\end{longtable}

\section{Omitted Material from \Cref{sec:worst_case_price}} 

\subsection{Proof of \Cref{prop:greedydecomp_decomposable}}\label{app:propGreedyDecomp}

\propGreedyDecomp*

\begin{proof}
    We first show that after termination of \GrdDec, each voter $i$ has a remaining budget of $b_i = 0$. Since each voter starts with a budget of $\frac{1}{n}$ and every amount that is subtracted from the voters budget is added to the allocation, this then implies that \GrdDec returns a valid allocation after it terminates.
    Suppose there is some voter $i$ with $b_i > 0$ and therefore $\sum_{j \in [m]} a_j < 1$ after termination. Let $j$ be an alternative with $p_{i,j} > a_j$ (such an alternative must exist, because votes are normalized). Choose $k$ such that $\mu_j^k = p_{i,j}$. Then in the $k^\text{th}$ outer iteration and after the last inner iteration with $\tau^\star = 1$, we have $p_{i,j} > a_j \ge \Tilde{a}_j$, which means $N_j^+$ is non-empty. But then the payment for some voter $i' \in N^+_j$ and alternative $j$ is $\pi_{i',j} = \max \Big(0, \frac{\min(\mu^k_j, \tau^\star) - \Tilde{a}_j}{|N_j^+|}\Big) \ge \frac{\min(\mu^k_j, \tau^\star) - \Tilde{a}_j}{|N_j^+|} = \frac{p_{i,j} - \Tilde{a}_j}{{|N_j^+|}} > 0$ and thus $\Tilde{a}_j$ would have been increased to some value strictly above $a_j$.

    We now show that \GrdDec is decomposable. We can decompose the output of \GrdDec by tracking the amount of budget each voter pays for each alternative during the computation of \GrdDec. More formally, we can initialize $c_{i,j} = 0$ for all voters $i \in [n]$ and alternatives $j \in [m]$ and insert $c_{i,j} \gets c_{i,j} + \pi_{i,j}(\tau^\star)$ after \Cref{alg_line:update_voter_budget}. Since the budget of each voter $i$ is fully exhausted (as shown above), we know that $\sum_{j \in [m]} c_{i,j} = \frac{1}{n}$ and it only remains to show that no voter $i$ is charged for an alternative $j$ with $a_j > p_{i,j}$. 
    Let $i \in [n]$ be a voter with $c_{i,j} > 0$ for some alternative $j$. Consider an inner iteration in which $c_{i,j}$ is increased and let $k$ be the index of the current outer iteration. In that iteration voter $i$ has a positive payment $\pi_{i,j}(\tau^\star) > 0$, which implies $i \in N^+_j$. 
    When $i \in N^+_j$ in outer iteration $k$, we know that $p_{i,j} > \Tilde{a}_j$ and $\Tilde{a}_j \ge \mu^{k-1}_j$ (by the invariant of the algorithm explained in \Cref{subsec:prop-mechanisms}). By the definition of $\mu$, there is no vote between $\mu^{k-1}_j$ and $\mu^k_j$ and therefore $p_{i,j} \ge \mu^k_j$.
    By the definition of $N^+_j$, there can be no voter with positive budget voting above $p_{i,j}$ on alternative $j$, which means that $\Tilde{a}_j$ can never be increased above $p_{i,j}$.

    Finally, we show that \GrdDec always terminates and that the output can be computed in polynomial time.
    In every inner iteration of the algorithm, at least (i) one voter's budget is exhausted ($b_i = \sum_{j \in [m]} \pi_{i,j}(\tau^\star)>0$ for some voter $i$), or (ii) $\tau^\star = 1$ and the next outer iteration starts. 
    Also note that each voter can only exhaust their budget in a single iteration. Thus, we can have no more than $2n$ inner iterations in total.
    To show that \GrdDec is computable in polynomial time, it only remains to show that, in each iteration, we can compute the next value of $\tau^\star$ efficiently. To do so, we iteratively try out different values $\tau$, until a voter overpays or $\tau = 1$ is reached. 
    Say we are in the $k^\text{th}$ outer iteration with a current allocation $\Tilde{a}$, sets $N_j^+$ for each $j \in [m]$, and voter budgets $b_i$ for each $i \in [n]$. 
    We compute the payments for all voters for the values $\tau \in \{0, 1\} \cup \{\Tilde{a}_j \mid j \in [m]\} \cup \{ \mu^k_j \mid j\in[m] \}$. If we have $b_i \ge \sum_{j \in [m]} \pi_{i,j}(1)$ for all voters $i \in [n]$, then we set $\tau^\star = 1$, as no voter overpays. 
    Otherwise, we identify the two lowest consecutive values $\tau_1, \tau_2$ from the set with $\sum_{j \in [m]} \pi_{i,j}(\tau_1) < b_i \le \sum_{j \in [m]} \pi_{i,j}(\tau_2)$ for some voter $i$ with $b_i > 0$. 
    Note that, for all $j \in [m]$ and all such voters $i$, the payment $\pi_{i,j}(\tau) = \max \Big(0, \frac{\min(\mu^k_j, \tau^\star) - \Tilde{a}_j}{|N_j^+|}\Big)$ is linear within the interval $[\tau_1, \tau_2]$, as we do not cross $\Tilde{a}_j$ or $\mu^k_j$. 
    Therefore, we can compute the value $\tau^\star_i$ for which each such voter $i$ runs out of budget as
    $$\tau^\star_i = \tau_1 + \frac{b_i - \sum_{j \in [m]} \pi_{i,j}(\tau_1)}{\sum_{j \in [m]} \pi_{i,j}(\tau_2) - \sum_{j \in [m]} \pi_{i,j}(\tau_1)} (\tau_2 - \tau_1).$$
    Setting $\tau^\star = \min_{i \in [n]} \tau^\star_i$ yields the minimum value for which some voter runs out of budget.
\end{proof}

We observe that, in addition to decomposability, \GrdDec satisfies neutrality and anonymity because it treats voters and alternatives symmetrically when iteratively allocating the budget.

\subsection{Proof of \Cref{prop:alpha-star}}\label{app:propAlphaStar}

\propAlphaStar*

\begin{proof}
We fix $n\in \N$ and define the function $h\colon [0,n]\to \R_+$ by 
    \[
        h(\ell) = \frac{n\ell}{n+\ell(\ell-1)}\quad \text{for every } \ell\in [0,n].
    \]
    Its derivative is given by 
    \[
         h'(\ell) = \frac{n(n+\ell(\ell-1)) - n\ell(2\ell-1)}{(n+\ell(\ell-1))^2} = \frac{n(n-\ell^2)}{(n+\ell(\ell-1))^2},
    \]
    so $h$ is increasing for $\ell\in [0,\sqrt{n})$, decreasing for $\ell\in (\sqrt{n},n]$, and attains its maximum at $\ell^\star=\sqrt{n}$, with
    \[
        h(\sqrt{n}) = \frac{n\sqrt{n}}{n+\sqrt{n}(\sqrt{n}-1)} = \frac{n}{2\sqrt{n}-1}.\qedhere
    \]
\end{proof}

\subsection{Proof of \Cref{thm:utilPropMbound}}\label{app:thmmbound}

\begin{figure}
    \centering
    \def\coordwidth{0.055}
    \def\coorddist{0.25}
    \def\posone{0.47}
    \def\postwo{\posone + \coorddist}
    \def\posthr{\postwo + \coorddist}
    \def\posfou{\posthr + \coorddist + 0.1}
    \def\posfiv{\posfou + \coorddist}
    \def\possix{\posfiv + \coorddist}
    \newcommand{\drawvoteannotated}[5]{
        \draw[very thick, #3] (#1-\coordwidth,#2) -- (#1+\coordwidth,#2)
        node[pos=0, inner sep=1pt, anchor=east] {\small #4}
        node[pos=1, inner sep=2pt, anchor=west] {\small #5};
    }
    \newcommand{\drawaggregate}[5]{
        \fill[pattern={Lines[angle=#5*0.8,distance=4pt,line width=2pt]},pattern color=#4!70!white, draw=#4!50!white] (#1-\coordwidth,#2) rectangle (#1-\coordwidth+\coordwidth*2,#3);
    }
    \newcommand{\drawphantomannotated}[3]{
        \draw[very thick, color=color1] (\posone-\coordwidth-0.05,#1) -- (\possix+\coordwidth+0.05,#1)
        node[pos=0, inner sep=1pt, anchor=east] {\small #2}
        node[pos=1, inner sep=2pt, anchor=west] {\small #3};
    }
    \newcommand{\drawproplevelannotated}[3]{
        \draw[very thick, dotted, color=lightgray] (\posone-\coordwidth-0.08,#1) -- (\possix+\coordwidth+0.02,#1)
        node[pos=0, inner sep=1pt, anchor=east] {\small #2}
        node[pos=1, inner sep=2pt, anchor=west] {\small #3};
    }
    \begin{tikzpicture}[yscale=6,xscale=6]
        \draw (0,0) -- (0,1);  
        \draw (-.02,0) -- (0.02,0); 
        \draw (-.02,1) -- (0.02,1); 
        \node[anchor=east, xshift=-5px] at (0,0) {$0$}; 
        \node[anchor=east, xshift=-5px] at (0,1) {$1$}; 
        \node[anchor=south, rotate=90] at (0,0.5) {\small Budget}; 
        \filldraw[fill=alternativebarcolor,draw=none] (\posone-\coordwidth,0) rectangle (\posone+\coordwidth,1); 
        \filldraw[fill=alternativebarcolor,draw=none] (\postwo-\coordwidth,0) rectangle (\postwo+\coordwidth,1); 
        \filldraw[fill=alternativebarcolor,draw=none] (\posthr-\coordwidth,0) rectangle (\posthr+\coordwidth,1); 
        \filldraw[fill=alternativebarcolor,draw=none] (\posfou-\coordwidth,0) rectangle (\posfou+\coordwidth,1); 
        \filldraw[fill=alternativebarcolor,draw=none] (\posfiv-\coordwidth,0) rectangle (\posfiv+\coordwidth,1); 
        \filldraw[fill=alternativebarcolor,draw=none] (\possix-\coordwidth,0) rectangle (\possix+\coordwidth,1); 
        \node[anchor=north, outer sep=5pt] (label1) at (\posone,0) {$\in I$};
        \node[anchor=north, outer sep=5pt] at (\postwo,0) {$\in I$};
        \node[anchor=north, outer sep=5pt] at (\posthr,0) {$\in I$};
        \node[anchor=north, outer sep=5pt] at (\posfou,0) {$\in J$};
        \node[anchor=north, outer sep=5pt] at (\posfiv,0) {$\in J$};
        \node[anchor=north, outer sep=5pt] at (\possix,0) {$\in J$};
        \node[anchor=east, outer sep=0pt] at (label1.west) {\small Alternative:};
        \drawaggregate{\posone}{0}{4/6}{lightgray}{45};
        \drawaggregate{\postwo}{0}{3/6}{lightgray}{45};
        \drawaggregate{\posthr}{0}{0.62}{lightgray}{45};
        \drawaggregate{\posfou}{0}{0.43}{lightgray}{45};
        \drawaggregate{\posfiv}{0}{1.5/6}{lightgray}{45};
        \drawaggregate{\possix}{0}{0.20}{lightgray}{45};
        \drawphantomannotated{1}{}{$f_0(t)$}
        \drawphantomannotated{5/6}{}{\,\,\,\,$\vdots$\phantom{$\frac{I}{I}$}}
        \drawphantomannotated{4/6}{}{\,\,\,\,$\vdots$\phantom{$\frac{I}{I}$}}
        \drawphantomannotated{3/6}{}{$f_{\hat{k}-1}(t)$}
        \drawphantomannotated{1.5/6}{}{$f_{\hat{k}}(t) = \tau$}
        \drawphantomannotated{0}{}{$f_{\hat{k}+1}(t), \dots, f_n(t)$}
        \drawproplevelannotated{1}{}{}
        \drawproplevelannotated{5/6}{}{}
        \drawproplevelannotated{4/6}{$\frac{n-k^\star}{n}$}{}
        \drawproplevelannotated{3/6}{}{}
        \drawproplevelannotated{2/6}{}{}
        \drawproplevelannotated{1/6}{}{}
        \drawproplevelannotated{0}{}{}
        \drawvoteannotated{\posone}{0.95}{color2}{}{};
        \drawvoteannotated{\posone}{0.90}{color2}{}{};
        \drawvoteannotated{\posone}{0.87}{color2}{}{};
        \drawvoteannotated{\posone}{0.77}{color2}{$\mu^{k_j+1}_j = \mu^{k^\star+1}_j$}{};
        \drawvoteannotated{\posone}{0.3}{color2}{}{};
        \drawvoteannotated{\posone}{0.1}{color2}{}{};
        \drawvoteannotated{\postwo}{0.98}{color2}{}{};
        \drawvoteannotated{\postwo}{0.65}{color2}{}{};
        \drawvoteannotated{\postwo}{0.55}{color2}{$\mu^{k_j+1}_j$}{};
        \drawvoteannotated{\postwo}{0.30}{color2}{}{};
        \drawvoteannotated{\postwo}{0.20}{color2}{}{};
        \drawvoteannotated{\postwo}{0.05}{color2}{}{};
        \drawvoteannotated{\posthr}{0.89}{color2}{}{};
        \drawvoteannotated{\posthr}{0.87}{color2}{}{};
        \drawvoteannotated{\posthr}{0.75}{color2}{$\mu^{k_j+1}_j$}{};
        \drawvoteannotated{\posthr}{0.62}{color2}{}{};
        \drawvoteannotated{\posthr}{0.07}{color2}{}{};
        \drawvoteannotated{\posthr}{0.02}{color2}{}{};
        \drawvoteannotated{\posfou}{0.95}{color2}{}{};
        \drawvoteannotated{\posfou}{0.73}{color2}{$\mu_j^{\hat{k}+1}$}{};
        \drawvoteannotated{\posfou}{0.43}{color2}{$\mu_j^{\hat{k}}$}{};
        \drawvoteannotated{\posfou}{0.20}{color2}{}{};
        \drawvoteannotated{\posfou}{0.10}{color2}{}{};
        \drawvoteannotated{\posfou}{0.05}{color2}{}{};
        \drawvoteannotated{\posfiv}{0.80}{color2}{}{};
        \drawvoteannotated{\posfiv}{0.56}{color2}{$\mu_j^{\hat{k}+1}$}{};
        \drawvoteannotated{\posfiv}{0.12}{color2}{$\mu_j^{\hat{k}}$}{};
        \drawvoteannotated{\posfiv}{0.075}{color2}{}{};
        \drawvoteannotated{\posfiv}{0.06}{color2}{}{};
        \drawvoteannotated{\posfiv}{0.04}{color2}{}{};
        \drawvoteannotated{\possix}{0.35}{color2}{}{};
        \drawvoteannotated{\possix}{0.20}{color2}{$\mu_j^{\hat{k}+1}$}{};
        \drawvoteannotated{\possix}{0.10}{color2}{$\mu_j^{\hat{k}}$}{};
        \drawvoteannotated{\possix}{0.05}{color2}{}{};
        \drawvoteannotated{\possix}{0.035}{color2}{}{};
        \drawvoteannotated{\possix}{0.02}{color2}{}{};
        \draw [black, very thick] (\posthr+\coordwidth+0.01,3/6) -- (\posthr+\coordwidth+0.01,0.62) node[midway, anchor=west] {$x_j$};
    \end{tikzpicture}
    \caption{Illustration of the situation in the proof of \Cref{thm:utilPropMbound}. For this example, we have $n = 6$ voters. Phantom $f_{\hat{k}} = f_4$ is the last one that moved and the index $k^\star$ is $2$, since on the leftmost alternative the third lowest voter votes above $\frac{n-2}{n} = \frac{4}{6}$. For illustration purposes, the votes and aggregate are not normalized in this example, which has the side effect that $q = \sum_{j \in I} n_j > n$, which is not possible for a valid profile.}
    \label{fig:utilPropMbound}
\end{figure}

\thmmbound*

\begin{proof}
    We refer to \Cref{fig:utilPropMbound} for a running illustration of this proof. However, note that for illustration purposes, votes and aggregate are overnormalized in this example, which is not possible for a valid profile. Let $\{f_k\}_{k\in [n]_0}$ be the phantom system of \UtP and $\{g_k\}_{k\in [n]_0}$ be the phantom system of \Ut. Fix a preference profile $P \in \mathcal{P}_{n,m}$ and let $t$ be the smallest point of normalization of \UtP. Let $a$ be the outcome of \UtP and $a^\Uts$ be the outcome of \Ut. Now, let $\hat{k} \in [n]_0$ be the maximum index such that $f_{\hat{k}}(t)>0$. 
    Intuitively, this is the last phantom that started moving before normalization is reached (compare \Cref{fig:utilPropMbound}). Moreover, let $k^\star$ be the smallest $k^\star \in [\hat{k}-1]$ such that $\mu^{k^\star + 1}_j \geq \frac{n-k^\star}{n}$ for any $j \in [m]$ (in \Cref{fig:utilPropMbound}, $k^\star = 2$, because of the third lowest voter on the leftmost alternative). 
    The interpretation of $k^\star$ is that it is the first phase of the phantom system for which \Ut and \UtP (may) behave differently. This is because if phantom $g_{k^\star}$ moves to $1$ in \Ut, this corresponds to $\mu^{k^\star + 1}$ being selected at this point in time. Now, recall that \UtP moves phantoms $f_{k}$ to $\frac{n-k}{n}$. 
    Hence, for every $k<k^\star$, since $\mu^{k+1}_j \leq \frac{n-k}{n} \leq 1$ for all $j \in [m]$, \UtP spends at least $\mu^{k + 1}_j$, just as \Ut does.
    We remark that, if such an index does not exist, then $a = a^\Uts$ and there is nothing to be shown.
    Therefore, our first observation is that the contribution to social welfare derived from votes in the levels $\mu^1,\dots,\mu^{k^\star}$ is, for both \Ut and \UtP, exactly
    \begin{equation}\label{eq:mboundone}
        \sum_{k = 1}^{k^\star}\sum_{j \in [m]} \mu_j^k,
    \end{equation}
    which holds because when phantom $k^\star -1$ reaches its final position it holds that the median on each coordinate $j$ is at least $\mu^{k^\star}_j$, and therefore $a_j \ge \mu^{k^\star}_j$. 
    We will make use of this bound later. Note that the above expression only makes use of utility derived from votes in the levels $\mu^1,\dots,\mu^{k^\star}$. To avoid double counting, in the following we are going to make use of utility generated from votes in the levels $\mu^{k^\star + 1}, \dots, \mu^{n}$. %

    \smallskip 

    Now, we divide the alternatives into two subgroups. First, $I \subseteq [m]$ includes all alternatives $j$ such that $\mu^{k+1}_j \geq \frac{n-k}{n}$ for some $k \in \{k^\star, \dots, \hat{k}-1\}$; we define $k_j$ as the smallest such index for each $j\in I$. We also define $\ell = |I|$ and $J = [m] \setminus I$ for the other group.

    \subsubsection*{Bound for alternatives in $I$} %
    We claim that $\mu^{k_j+1}_j \geq a_j \geq \frac{n-k_j}{n}$ for any $j \in I$. This holds because, at time $t$, (i) $\mu^{k_j+1}_j$ has at most $n-k_j - 1$ voters strictly above and at most $k_j$ phantoms strictly above, and (ii) $\frac{n-k_j}{n}$ has $k_j+1$ phantoms weakly above and at least $n-k_j$ voters weakly above. We define $n_j = n-k_j$ and $x_j \geq 0$ such that $a_j = \frac{n_j}{n} + x_j$ (compare third alternative in \Cref{fig:utilPropMbound}). 
    Now, note that there are $n_j$ voters within the levels $\mu^{k_j + 1}_j, \dots, \mu^n_j$ and all of these voters vote weakly above $a_j$. Hence, the contribution to their utility from alternatives in $I$ is at least
    \begin{equation}\label{eq:mboundtwo}
        \sum_{j \in I} n_j a_j = \sum_{j \in I} \frac{(n_j)^2}{n}+ \sum_{j \in I}n_j x_j.
    \end{equation}

    Before we continue, we define $q = \sum_{j \in I} n_j$ and claim that $q < n$. 
    To see why, note that $\sum_{j \in I} \frac{n_j}{n} = \frac{q}{n}$ corresponds to the sum of the medians on alternatives in $I$ at the time when phantom $\hat{k}-1$ reaches its final position. At this point, normalization is not yet reached and hence $1 > \frac{q}{n}$.

    \subsubsection*{Bound for alternatives in $J$}

    For the group $J$, we remark that $a_j \in [\mu_j^{\hat{k}}, \mu_j^{\hat{k}+1}]$, which holds because none of the final phantom positions of the phantoms in $\{0, 1,\dots,\hat{k}-1\}$ were restrictive for these alternatives, and on all alternatives in $J$ we could move up all phantoms in $\{0, 1,\dots,\hat{k}-1\}$ to $1$ without changing the outcome (illustrated on the last three alternatives in \Cref{fig:utilPropMbound}). 
    
    For every $j \in J$, we define $N_j$ as the set of voters satisfying the following two criteria: (i) the voter appears in $\mu^{\hat{k}}_j, \dots, \mu^n_j$ and (ii) votes weakly above $a_j$. For any $j \in J$, let $\hat{n}_j = |N_j|$. Moreover, define $\tau = f_{\hat{k}}(t)$, i.e., the position of the last moving phantom at the point of normalization. Note that $a_j = \min(\max(\tau, \mu_j^{\hat{k}}),\mu_j^{\hat{k}+1})$ and  $$\hat{n}_j = \begin{cases} n-\hat{k} + 1 & \text{if } \tau \leq \mu_{j}^{\hat{k}} \\ n - \hat{k} & \text{if } \tau > \mu_{j}^{\hat{k}}.\end{cases}$$

    As a direct consequence, all alternatives $j\in J$ with \emph{high} $\hat{n}_j$ (i.e., $\hat{n}_j = n-\hat{k}+1$) have the property that $a_j \geq \tau$, while all the alternatives with \emph{low} $\hat{n}_j$ (i.e., $\hat{n}_j = n-\hat{k}$) have the property that $a_j \leq \tau$. Hence, we can relabel the alternatives in $J$ such that $\hat{n}_1 \geq \dots \geq \hat{n}_{m-\ell}$ and  $a_1 \geq \dots \geq a_{m-\ell}$. Now, applying Chebyshev's sum inequality, we obtain the following lower bound for the utility contribution of agents in $N_j$ for any $j \in J$: 
    \begin{equation}
        \sum_{j \in J} \hat{n}_j a_j \geq \frac{\sum_{j \in J} \hat{n}_j}{m- \ell} \sum_{j \in J} a_j. 
    \end{equation}

    We now claim that $\sum_{j \in J} \hat{n}_j \geq n-q$. To see why, note that all agents that were not counted in any of the $n_j$ for $j\in I$ vote weakly below $a_j$ for all $j \in I$. Hence, such a voter votes at least $1-\sum_{j \in I}a_j$ on the alternatives in $J$. However, we also know that $\sum_{j \in I} a_j + \sum_{j \in J} \mu_j^{\hat{k}} < 1$, which holds because this is the state of the medians after phantom $\hat{k} -1$ moved to its final position, but at this point, normalization is not yet reached. Hence, each of the $n-q$ voters must appear within the levels $\mu^{\hat{k}+1}_j, \dots, \mu^{n}_j$ for some $j \in J$, which establishes the claim.

    Hence, we get the following lower bound for the contribution of agents in $N_j$: 
    
    \begin{equation}\label{eq:mboundfour}
        \sum_{j \in J} \hat{n}_j a_j  \geq \frac{n-q}{m-\ell} \sum_{j \in J} a_j. 
    \end{equation}

    We define $n_j = \frac{n-q}{m-\ell}$ for all $j \in J$ and observe that for any $j' \in I, j \in J$ it holds that $n_{j'} \geq n-\hat{k}+1 \geq \frac{\sum_{j'' \in J}\hat{n}_{j''}}{m-\ell}\geq \frac{n-q}{m-\ell} = n_{j}$. Hence, note that $n^\star := n - k^\star= \max_{j \in J}(n_j)= \max_{j \in [m]}(n_j)$.
    
    Putting the above together with (\ref{eq:mboundtwo}) and (\ref{eq:mboundfour}), we can lower-bound the utility of agents in levels $\mu^{k^\star}, \dots, \mu^n$ across all alternatives by: 
    \begin{align}
    \sum_{j \in I} \frac{(n_j)^2}{n}+ \frac{n-q}{m-\ell}\left(\sum_{j \in J} a_j + \sum_{j \in I} x_j\right) & = \sum_{j \in I} \frac{(n_j)^2}{n} + \frac{n-q}{m-\ell}\left(\frac{n-q}{n}{}\right) \notag \\ & = \sum_{j \in I} \frac{(n_j)^2}{n} + \frac{n-q}{m-\ell}\left(\frac{(m-\ell)(n-q)}{m-\ell} \cdot \frac{1}{n}{}\right)  \notag \\ 
    & = \sum_{j \in I} \frac{(n_j)^2}{n} + \sum_{j \in J}\frac{(n_j)^2}{n}, \label{eq:mboundfive} 
    \end{align}
    where the last equality follows from the fact that $|J|=m-\ell$. 
     We derive a lower bound for (\ref{eq:mboundfive}) in dependence on $n^\star$. To this end, the next paragraph will closely follow the arguments by \citet{MPS20a}. We define $j^\star \in \argmax _{j \in [m]} (n_j)$ and $s = \sum_{j \in [m] \setminus j^\star}(\frac{n_j}{n^\star})$. By the Cauchy-Schwarz inequality we obtain that $\sum_{j \in [m] \setminus j^\star}(\frac{n_j}{n^\star})^2 \geq \frac{s^2}{m-1}$. Now, we can derive the following bound: 

    \begin{align*}
        \sum_{j \in [m]} \frac{(n_j)^2}{n} = \frac{\sum_{j \in [m]}(n_j)^2}{\sum_{j \in [m]}n_j} = n^\star \cdot \frac{\sum_{j \in [m]}(\frac{n_j}{n^\star})^2}{\sum_{j \in [m]}\frac{n_j}{n^\star}} = n^\star \cdot \frac{1 + \sum_{j \in [m] \setminus j^\star}(\frac{n_j}{n^\star})^2}{1 + \sum_{j \in [m] \setminus j^\star}(\frac{n_j}{n^\star})} \geq n^\star \cdot \frac{m-1 + s^2}{(m-1)(1+s)}
    \end{align*}

    \citet{MPS20a} argue that this function (excluding the factor $n^\star$) for $s \in [0,m-1]$ can be lower bounded by $\frac{2 \sqrt{m} - 2}{m}$. We can now start to put everything together: 
    \begin{align*}
        \frac{w(a^{\Uts})}{w(a)} & \leq \frac{\sum_{k=1}^{k^\star}\sum_{j \in [m]}\mu_j^k + n - k^{\star}}{\sum_{k=1}^{k^\star}\sum_{j \in [m]}\mu_j^k + \sum_{j \in [m]} \frac{(n_j)^2}{n}} %
        \leq  \frac{n^{\star}}{\sum_{j \in [m]} \frac{(n_j)^2}{n}} %
        \leq \frac{m}{2\sqrt{m} - 2}, 
    \end{align*}
    which proves the statement.
    \end{proof}

\subsection{Welfare Approximation of \PiecewiseUniform and \Ladder}\label{app:lb-im-pu}

We show that \PiecewiseUniform does not achieve the same welfare approximation as \UtP.

\begin{proposition}
    For every $n\in \N$ with $n\geq 4$ even, there are instances with $n$ voters in which, if $a$ is the outcome of \PiecewiseUniform, it holds $\frac{n+1}{3} w(a) = w(a^\opt)$.
\end{proposition}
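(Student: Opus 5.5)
The plan is to construct, for each even $n\ge 4$, an explicit profile in which \PiecewiseUniform reaches normalization during its second phase. At that moment a large group of voters with a common interest is still funded only at a low phantom height, while many near-private alternatives absorb the budget.

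The idea comes from the median structure of \PiecewiseUniform. On an alternative where $k$ voters vote high and $n-k$ vote $0$, the median is the phantom $f_{n-k}(t)$. If $k\le n/2$, this phantom is still at $0$ during the first phase and only becomes positive in the second phase, at rate $\frac{k(2t-1)}{n}$ or $t-\frac12$. A purely single-minded profile cannot produce a linear ratio, since single-minded proportionality forces the proportional outcome and the worst case is then governed by \Cref{prop:proportional_social_welfare_lower_bound}. For instance, $n/2$ voters on one alternative plus $n/2$ private alternatives only gives a ratio of about $2$.

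I would therefore use non-single-minded votes. There is one common alternative $A$ that all $n$ voters value. In addition, each voter reserves the rest of their vote for private alternatives, with values chosen so that the upper-half phantoms in the first phase stop on those private alternatives rather than on $A$. Concretely, I would take:
\begin{itemize}
\item a parameterized family in which voter $i$ votes $x_i$ on $A$ and $1-x_i$ spread over private alternatives;
\item the $x_i$ forming a staircase, so that the median on $A$ is pinned to a low phantom position;
\item the private alternatives arranged, via the spacing $\frac{2}{n}$ of the first-phase phantoms, so that their medians collectively reach total budget $1$ early.
\end{itemize}
The optimum puts (almost) everything on $A$ and earns about $n+1$, up to normalization. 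The target is to arrange that \PiecewiseUniform earns exactly $3$: roughly $1$ from the spending on $A$ counted over few voters, plus contributions of about $\frac1n$ from each of the private alternatives.

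The steps, in order, are as follows.
\begin{enumerate}
\item Fix the profile with its parameters.
\item Compute $a^\opt$. \Cref{lem:Util_lower_bound} helps here: the optimum fills the level vectors $\mu^k$ up to total $1$ from the top level down.
\item Write the \PiecewiseUniform medians on each alternative as explicit piecewise-linear functions of $t$, using the table in \Cref{app:mp-mechs}.
\item Solve the normalization equation for $t^\star$, and check that it lies in the intended phase.
\item Evaluate $w(a)$ and tune the parameters so that $w(a^\opt)=\frac{n+1}{3}w(a)$ holds exactly.
\end{enumerate}

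I expect the main obstacle to be step 4 combined with the exact tuning. The median on each alternative switches between a voter value and a phantom as $t$ varies, so the guessed $t^\star$ must be verified consistent with the assumed ordering of phantoms and votes on every alternative. The evenness of $n$ matters because \PiecewiseUniform treats the phantoms with index up to $n/2$ separately. My first attempt would be to put exactly $n/2$ voters high on $A$, so that the median on $A$ equals $f_{n/2}(t)=t-\frac12$ in the second phase. I would then choose the private alternatives so that normalization occurs close to $t=\frac12$, keeping $a_A$ tiny while the welfare-optimal outcome concentrates on $A$.
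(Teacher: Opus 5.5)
\textbf{The gap: no instance is given.} Your proposal is a plan rather than a proof. The statement is an existence claim with an exact constant, so the whole proof consists of one concrete profile together with the computation of $t^\star$, $w(a)$ and $w^\opt$. None of these appear.

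Your closing ``first attempt'' points in the right direction: $\frac n2$ voters high on a common alternative, whose median in the second phase is $f_{n/2}(t)=t-\frac12$, with normalization shortly after $t=\frac12$. But you leave open the decisive choice, namely what the other voters do. Several of your guiding heuristics would also mislead the tuning:
\begin{itemize}
\item The optimum cannot earn ``about $n+1$'', since every utility is at most $1$ and hence $w^\opt\le n$. In a working instance one has $w^\opt=\frac n2$ and $w(a)=\frac{3n}{2(n+1)}$, not $3$.
\item During the first phase only $f_0,\dots,f_{n/2-1}$ are positive. So an alternative can have a positive median then only if more than $\frac n2$ voters vote positively on it. Upper-half phantoms can therefore never ``stop on private alternatives''.
\item Letting all $n$ voters value $A$ with a staircase is unnecessary. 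It also makes $A$ receive budget already in the first phase, which works against keeping $a_A$ small.
\end{itemize}

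\textbf{The missing idea.} Let the remaining $\frac n2$ voters spread their vote thinly over alternatives of their own. Take $m=\frac{n^2}{2}+1$. Each voter $i\le\frac n2$ votes $\frac1n$ on a private block of $n$ alternatives, and voters $\frac n2+1,\dots,n$ vote $1$ on the last alternative.
\begin{itemize}
\item For $t<\frac12$ we have $f_{n-1}(t)=f_{n/2}(t)=0$, so all medians are $0$ and nothing is spent.
\item For $t\ge\frac12$, the median on each private alternative is $\min\bigl(f_{n-1}(t),\frac1n\bigr)=\frac{2t-1}{n}$. On the last alternative it is $f_{n/2}(t)=\frac{2t-1}{2}$. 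So the private alternatives jointly receive exactly $n$ times the amount on the last one.
\item Normalization gives $2t^\star-1=\frac{2}{n+1}$. Hence $a=\bigl(\frac{2}{n(n+1)},\dots,\frac{2}{n(n+1)},\frac{1}{n+1}\bigr)$ and $w(a)=\frac{n}{n+1}+\frac{n}{2(n+1)}=\frac{3n}{2(n+1)}$.
\item Meanwhile $a^\opt=(0,\dots,0,1)$ has $w^\opt=\frac n2$, giving the ratio $\frac{n+1}{3}$.
\end{itemize}
This also shows what your ``tuning'' step must fix. With $r$ private alternatives per voter at $\frac1r$ each, the same computation gives the ratio $\frac{n(r+1)}{n+2r}$, which equals $\frac{n+1}{3}$ exactly when $r=n$.
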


\begin{proof}
    Let $n\in \N$ be an even number with $n\geq 4$, and consider the following profile with $m=\frac{n^2}{2}+1$ alternatives:
    Voters $i$ from $1$ to $\frac{n}{2}$ vote $\frac{1}{n}$ on every alternative 
    $ (i-1) \cdot n + \ell$ for $\ell\in [n]$;
    voters $i$ from $\frac{n}{2}+1$ to $n$ vote $1$ on alternative $\frac{n^2}{2}+1$.
    It is easy to see that the optimal allocation allocates all budget on the last alternative, i.e., $a^\opt=(0,\ldots,0,1)$, with $w(a^\opt)=\frac{n}{2}$.

    \PiecewiseUniform reaches normalization at time $t^\star=\frac{n+3}{2(n+1)}$, when the phantom positions are $f_k(t^\star)=\frac{(n-k)(2t^\star-1)}{n} = \frac{2(n-k)}{n(n+1)}$ for $k\geq \frac{n}{2}$, which are the relevant phantoms for this profile and time of normalization.
    In particular, we have $f_{n-1}(t^\star)=\frac{2}{n(n+1)}$ and $f_{\frac{n}{2}}(t^\star)=\frac{1}{n+1}$, so the allocation is $a=\big(\frac{2}{n(n+1)},\ldots,\frac{2}{n(n+1)},\frac{1}{n+1}\big)$ with welfare $w(a)=\frac{n^2}{2}\cdot \frac{2}{n(n+1)} + \frac{n}{2} \cdot \frac{1}{n+1} = \frac{3n}{2(n+1)}$.
    Since $\frac{w(a^\opt)}{w(a)}=\frac{n+1}{3}$, the result follows.

\end{proof}

While we do not have a similar result for \Ladder, we can show that \Ladder does not satisfy the proportional spending property

\begin{proposition}
    \Ladder does not satisfy proportional spending.
\end{proposition}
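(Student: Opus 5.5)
The plan is to exhibit one explicit profile on which \Ladder violates the inequality $\overlap{\A(P)}{\mu^k} \ge \min\big(\frac{n-k+1}{n}, \sum_{j} \mu^k_j\big)$ for some $k$. I will first work out the closed form of the \Ladder aggregate from $f_k(t)=\max(t-\frac{k}{n},0)$, then search for a violation at a level $k$ with $2 \le k \le n-1$, and finally verify normalization and the violated inequality by direct computation.

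\textbf{Closed form for small $n$.} For $n=3$ the time axis splits into three phases.
\begin{itemize}
  \item For $t\in[0,\frac13]$ only $f_0$ is positive. The median of the seven values is then $a_j=\min(t,\mu^1_j)$.
  \item For $t\in[\frac13,\frac23]$ the two phantoms $t$ and $t-\frac13$ are positive. The median is the second smallest of $\{t,\,t-\frac13,\,\mu^1_j,\mu^2_j,\mu^3_j\}$, so $a_j \le \mu^2_j$.
  \item For $t\ge\frac23$ the median is the third smallest of $\{t,\,t-\frac13,\,t-\frac23,\,\mu^1_j,\mu^2_j,\mu^3_j\}$.
\end{itemize}
In the first two phases every $a_j \le \mu^2_j$, so $\overlap{a}{\mu^2} = 1$ and level $k=2$ is automatically fine. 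Range respect also forces level $k=1$ to hold. A violation must therefore come from normalization in the last phase on an alternative $j$ with $\mu^1_j < t-\frac23$ and $\mu^2_j > t-\frac13$. There $a_j = t-\frac13 < \mu^2_j$: two voters approve more, but \Ladder caps the spending at the second phantom.

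\textbf{Target shape.} The idea is that \Ladder's uniform $\frac1n$ spacing between phantoms lets budget leak to single-voter-supported intervals while still below the level-$k$ ceiling on alternatives where $n-k+1$ voters approve more. Concretely, I want alternatives where $n-k+1$ voters vote high and the others vote $0$, plus filler mass where only few voters approve. The filler should push normalization late, to a time where these alternatives receive only the value of phantom $k-1$, namely $t-\frac{k-1}{n}$, and not their full $\mu^k_j$. The bound $\frac{n-k+1}{n}$ should then fail.

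\textbf{Expected obstacle and fallback.} For $n=3$ the filler tends to make the slack disappear: single-supported filler alternatives receive only $t-\frac23 \le \frac13$, and normalization happens early. I therefore expect to need larger $n$ (e.g.\ $n=4$ or $5$) and an intermediate $k$. There, several moving phantoms sit strictly between the relevant vote levels, and the total mass captured at phantom $k-1$'s position falls short of $\frac{n-k+1}{n}$ even though $\sum_j \mu^k_j$ exceeds it. My first attempt is a profile with one block of alternatives approved highly by exactly $n-k+1$ voters and a second block where votes are staggered just above the phantom positions $\frac1n,\frac2n,\dots$. I would tune the numbers so that normalization occurs at some $t^\star$ with $t^\star - \frac{k-1}{n}$ strictly below the block's $\mu^k$. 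Finding explicit numbers that make this slack strictly positive is the main work. After that, computing $t^\star$, the aggregate, and the violated inequality is routine arithmetic.
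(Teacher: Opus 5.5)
Your proposal has a genuine gap. For a negative statement like this, the proof \emph{is} an explicit profile on which the inequality fails, and you never produce one: you defer ``finding explicit numbers'' as the main work.

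Moreover, the structural claim you use to narrow the search is false. You assert that range respect forces level $k=1$ to hold, but \Ladder is not range respecting. Phantom $f_1$ starts moving once $f_0$ reaches $\frac1n$ (not $1$). So on an alternative where every voter votes above $t$, the median is just $f_0(t)=t$, and normalization can occur before $t$ reaches the common vote. Your own phase-two formula for $n=3$ already shows this: the second smallest of $\{t-\frac13,t,\mu^1_j,\mu^2_j,\mu^3_j\}$ equals $t<\mu^1_j$ whenever $\mu^1_j>t$.

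Proportional spending at $k=1$ is exactly the condition $a_j\ge\mu^1_j$ for all $j$. So you ruled out precisely the level where the simplest violation lives. This is the paper's counterexample: $n=2$, $p_1=(\frac56,\frac16,0)$, $p_2=(\frac56,0,\frac16)$. At $t=\frac23$ the phantoms sit at $\frac23,\frac16,0$, the output is $a=(\frac23,\frac16,\frac16)$, and $\overlap{a}{\mu^1}=\frac23<\frac56=\min\bigl(1,\sum_j\mu^1_j\bigr)$.

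Your intermediate-level route is not hopeless, and your expectation that $n=3$ is too small is mistaken. Take $p_1=(\frac23,\frac13,0,0)$, $p_2=(\frac23,0,\frac13,0)$, $p_3=(0,0,0,1)$. Then \Ladder normalizes at $t=\frac56$ (phantoms at $\frac56,\frac12,\frac16,0$) and outputs $(\frac12,\frac16,\frac16,\frac16)$. Hence $\overlap{a}{\mu^2}=\frac12<\frac23=\min\bigl(\frac23,\sum_j\mu^2_j\bigr)$, which is exactly your target shape with single-supported filler. But as written, the proposal contains no verified instance and rests on a false exclusion of $k=1$, so it does not yet prove the proposition.
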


\begin{proof}
    Consider the profile $P \in \mathcal{P}_{2,3}$ with $n = 2$ voters and $m = 3$ alternatives.
    Voter $1$ votes $p_1 = \big(\frac{5}{6}, \frac{1}{6}, 0\big)$ and voter $2$ votes $p_2 = \big(\frac{5}{6}, 0, \frac{1}{6}\big)$. \Ladder reaches normalization at time $t = \frac{4}{6}$, when the phantoms are at positions $f_0\big(\frac{4}{6}\big) = \frac{4}{6}$, $f_1\big(\frac{4}{6}\big) = \frac{1}{6}$, and $f_2\big(\frac{4}{6}\big) = 0$. The output allocation of \Ladder is thus $a = \big(\frac{4}{6}, \frac{1}{6}, \frac{1}{6}\big)$, which violates proportional spending for $k = 1$, since $\overlap{a}{\mu^k} = \frac{4}{6} < \frac{5}{6} = \min\big(1, \frac{5}{6}\big) = \min\big(\frac{n-k+1}{n}, \sum_{j \in [m]} \mu^k_j\big)$. 
\end{proof}

\section{Omitted Material from \Cref{sec:phantom_domination}} 

\subsection{Proof of \Cref{lem:phantom_domination}}\label{app:lemPhantomDomination}

\lemPhantomDomination*

\begin{proof}
    We show that the increase in welfare due to the alternatives $j$ receiving more from $\F$ than from $\G$ (i.e., $a^{\F}_j - a^{\G}_j > 0$) is larger than the decrease from alternatives receiving less. If $k^\star = - 1$, then all phantoms of $\F_n$ are less than or equal to the phantoms from $\G_n$, implying that each median must be lower or equal as well. Due to normalization, we then get $a^{\F} = a^{\G}$. The same holds for $k^\star = n$, thus we assume $0 \le k^\star \le n-1$ from this point.
    
    As both $a^{\F}$ and $a^{\G}$ are normalized, we have $\sum_{j \in [m]} (a^{\F}_j - a^{\G}_j) = 0$. Let $\smash{M^+ = \{j \in [m] \mid a^{\F} > a^{\G}\}}$ and $\smash{M^- = \{j \in [m] \mid a^{\F} < a^{\G}\}}$ denote the alternatives with a positive and a negative difference, respectively. We claim that, on any alternative $j \in M^+$, there are at least $n - k^\star$ voters with a vote larger than or equal to $\smash{a^\F_j}$.

    \textit{Proof of claim:}
    Suppose there are $k < n - k^\star$ such voters. Then, by the definition of the median, there must be at least $n+1-k > k^\star + 1$ phantoms $f_i$ at positions larger than or equal to $\smash{a^\F_j}$, thus $\smash{f_{n-k} \ge a^\F_j}$. But since $n-k > k^\star$, we also know that $f_{n-k} \le g_{n-k}$ and thus $\smash{g_{n-k} \ge a^\F_j}$. Therefore, there are $k$ voters and at least $n-k+1$ phantoms $g_i$ at positions larger than or equal to $\smash{a^\F_j}$, contradicting the assumption $\smash{a^\F_j > a^\G_j}$. \hfill $\blacksquare$

    \medskip 

    Similarly, we claim that on each alternative $j \in M^-$, there are at most $n - k^\star -1$ voters with a vote strictly larger than $\smash{a^\F_j}$.
    
    \textit{Proof of claim:} Suppose there are $k \geq n - k^\star$ such voters. Then by the definition of the median, there can be only $n-k \leq k^\star $ phantoms $f_i$ at positions strictly larger than $\smash{a^\F_j}$, thus $\smash{f_{n-k} \le a^\F_j}$. But since $n-k \leq k^\star$, we also know that $f_{n-k} \ge g_{n-k}$ and thus $\smash{g_{n-k} \le a^\F_j}$. Therefore, there are $n-k$ voters and at least $k+1$ phantoms $g_i$ at positions lower than or equal to $\smash{a^\F_j}$, contradicting the assumption $\smash{a^\F_j < a^\G_j}$. \hfill $\blacksquare$

    Combining the previous claims, we obtain
    \begin{align*}
        w(a^{\F}) - & w(a^{\G})\\
        = {} & \sum_{i \in [n]} \sum_{j \in [m]} \min(p_{i,j},a^\F_j) - \sum_{i \in [n]} \sum_{j \in [m]} \min(p_{i,j},a^\G_j)\\
        = {} & \sum_{j \in M^+} \underbrace{\sum_{i \in [n]} (\min(p_{i,j},a^\F_j)-\min(p_{i,j},a^\G_j))}_{\ge (n - k^\star)(a^\F_j-a^\G_j)} - \sum_{j \in M^-} \underbrace{\sum_{i \in [n]} (\min(p_{i,j},a^\G_j)-\min(p_{i,j},a^\F_j))}_{< (n - k^\star)(a^\G_j-a^\F_j)}\\
        > {} & (n - k^\star) \underbrace{\left(\sum_{j \in M^+} (a^\F_j-a^\G_j) - \sum_{j \in M^-} (a^\G_j-a^\F_j)\right)}_{=0}\\
        = {} & 0.\qedhere
    \end{align*}
\end{proof}

\subsection{Proof of \Cref{thm:phantom-domination}}\label{app:thmPhantomDomination}

\thmPhantomDomination*

\begin{proof}[Proof of \Cref{thm:phantom-domination} \textnormal{\ref{item:welf-all}}]
    It is well known that \Ut is welfare-maximizing among moving-phantom mechanisms~\citep{freeman2021truthful}. 
    To see that \Constant is welfare-minimizing, we observe that for any profile, \Constant attains normalization when all phantoms are at position $\frac{1}{m}$. 
    Let $n, m$ be fixed, $P \in \mathcal{P}_{n,m}$ be a profile, and $\F_n$ a phantom system attaining normalization at time $t$. 
    Choose $k^\star$ to be the index of the lowest phantom at a position $\frac{1}{m}$ or larger, or $k^\star = -1$ if no such phantom exists. 
    Then, we have $f_i(t) \ge \frac{1}{m}$ for all $i \in \{0, \dots, k^\star\}$ and $f_i \le \frac{1}{m}$ for all $i \in \{k^\star+1, \dots, n\}$. 
    By \Cref{lem:phantom_domination}, we conclude that $\A^{\F_n} \trianglerighteq \Constant$.
\end{proof}

\begin{proof}[Proof of \Cref{thm:phantom-domination} \textnormal{\ref{item:welf-smp}} (second part)]
    Fix $n\in \N$ and let $\G_n$ be the phantom system of \Fan. By \Cref{lem:proportional_phantoms_characterization}, for any single-minded proportional moving-phantom mechanism $\A$, we can find a phantom system $\F_n=\{f_k\mid k\in [n]_0\}$ inducing it such that $f_k(1) = 1 - \frac{k}{n}$ for each $k\in [n]_0$.

    We show that for any two times $t, t' \in [0,1]$, we can find a threshold index $k^\star$ such that all phantoms with index $k \le k^\star$ from $\F_n$ lie above the ones from $\G_n$ at times $t$ and $t'$, respectively, and all the phantoms with index $k > k^\star$ lie below. 
    Indeed, let $k^\star$ be the highest index such that $f_{k^\star}(t) \ge t'$, or choose $k^\star = -1$ if such an index does not exist. Note that, by the definition of \Fan, we then have $f_{k}(t) \ge t' \ge g_{k}(t')$ for all $k \leq k^\star$. For the lower phantoms of $\F_n$ with $k > k'$, we know by monotonicity that they are lower than (or equal to) both $t'$ and their end positions, i.e., $f_{k}(t) \le \min\big(t', 1 - \frac{k}{n}\big) = g_{k}(t')$ for all $k > k'$. Applying \Cref{lem:phantom_domination}, with $t$ and $t'$ equal to the respective times of normalization, yields $\A \trianglerighteq \Fan$.
\end{proof}

\begin{proof}[Proof of \Cref{thm:phantom-domination} \textnormal{\ref{item:welf-pwu-lad}}]
    We first prove that $\PiecewiseUniform \trianglerighteq \IM$. 
    Let $\F = \{\F_n \mid n \in \N\}$ and $\G = \{\G_n \mid n \in \N\}$ be the families of phantom systems of \PiecewiseUniform and \IM, respectively. Fix some $n \in \N$. We show that for any two times $t, t' \in [0,1]$, we can find a threshold index $k^\star$ such that $f_i(t) \ge g_i(t')$ for all $i \in \{0, \dots, k^\star\}$ and $f_i(t) \le g_i(t')$ for all $i \in \{k^\star+1, \dots, n\}$.
    Applying \Cref{lem:phantom_domination} then completes the proof.

    Note that at time $t'$, the phantoms of \IM are spaced equally with a distance of $d^\IMs = \frac{t'}{n}$ between them. Similarly, the lower phantoms $f_n$ to $f_{\lfloor \frac{n}{2} \rfloor + 1}$ of \PiecewiseUniform  are equally spaced with a distance of $d^\PiecewiseUniforms_\ell = \max
    \big(0, \frac{2t-1}{n}\big)$ between them. The upper phantoms $f_{\lfloor \frac{n}{2} \rfloor}$ to $f_0$ are also equally spaced, with a distance $d^\PiecewiseUniforms_u$ of $\frac{4t}{n}$ if $t < \frac{1}{2}$ and $\frac{3-2t}{n}$ if $t \ge \frac{1}{2}$.

    We scan the phantoms from \PiecewiseUniform from bottom to top and compare their positions with the phantoms from \IM. Both lowest phantoms are at position $0$, i.e. $f_n(t) = 0  = g_n(t')$. If $d^\PiecewiseUniforms_\ell = \max(0, \frac{2t-1}{n}) > \frac{t'}{n} = d^\IMs$, then we have $t > \frac{1}{2}$ and thus $d^\PiecewiseUniforms_u = \frac{3-2t}{n} \ge \frac{1}{n} \ge d^\IMs$. Therefore, all phantoms of \PiecewiseUniform are above or equal to the phantoms from \IM, and we can choose $k^\star = n$.
    Otherwise ($d^\PiecewiseUniforms_\ell \le d^\IMs$), we know that all lower phantoms from \PiecewiseUniform are at most at the phantom positions from \IM, i.e., $f_k(t) \le g_k(t')$ for all $k \in \{\lfloor \frac{n}{2} \rfloor + 1, \dots, n\}$. 
    If $d^\PiecewiseUniforms_u \le d^\IMs$, then the same holds for the remaining phantoms, and we can choose $k^\star = -1$. Otherwise, choose $k^\star \le \lfloor \frac{n}{2} \rfloor$ to be the index of the lowest phantom with $f_{k^\star}(t) > g_{k^\star}(t')$ or $k^\star = -1$ if no such index exists. Then, for all phantoms $k < k^\star$ from \PiecewiseUniform above $f_{k^\star}$, we have 
    $$
        f_{k}(t) = f_{k^\star}(t) + (k^\star-k) d^\PiecewiseUniforms_u > g_{k^\star}(t') + (k^\star-k) d^\IMs = g_{k}(t').
    $$

    We next prove that $\Ladder \trianglerighteq \IM$.
    Let $\F = \{\F_n \mid n \in \N\}$ and $\G = \{\G_n \mid n \in \N\}$ be the families of phantom systems of \Ladder and \IM, respectively. Fix some $n \in \N$. We show that for any two times $t, t' \in [0,1]$, we can find a threshold index $k^\star$ such that $f_i(t) \ge g_i(t')$ for all $i \in \{0, \dots, k^\star\}$ and $f_i(t) \le g_i(t')$ for all $i \in \{k^\star+1, \dots, n\}$. Applying \Cref{lem:phantom_domination} then completes the proof.

    We choose $k^\star$ to be the largest index such that $f_{k^\star}(t) > g_{k^\star}(t')$. If this does not exist, then we can choose $k^\star = -1$ and conclude. Note that, in particular, we have $f_{k^\star}(t) > 0$, since $g_k(t') \ge 0$ for any $k$. By the definition of \Ladder and \IM, we have for each $k < k^\star$
    \begin{align*}
        f_{k}(t) &= t - \frac{k}{n} = t - \frac{k^\star}{n} + \frac{k^\star-k}{n} = f_{k^\star}(t) + \frac{k^\star-k}{n} \\
        &> g_{k^\star}(t') + t' \cdot \frac{k^\star-k}{n} = t' \cdot \frac{n-k^\star}{n} + t' \cdot \frac{k^\star-k}{n} = t' \cdot \frac{n-k}{n} = g_{k}(t').
    \end{align*}
    Similarly, for each $k \ge k^\star+1$ with $f_k(t) > 0$ we know that 
    \begin{align*}
        f_{k}(t) &= t-\frac{k}{n} = t - \frac{k^\star+1}{n} - \frac{k-k^\star-1}{n} = f_{k^\star+1}(t) - \frac{k-k^\star-1}{n}  \\
        &\le g_{k^\star+1}(t') - t' \cdot \frac{k-k^\star-1}{n} = t' \cdot \frac{n-k^\star-1}{n} - t' \cdot \frac{k-k^\star-1}{n} = t'\cdot \frac{n-k}{n} = g_{k}(t').
    \end{align*}
    Finally, for all $k > k^\star$ with $f_k(t) = 0$ it is trivial that $f_{k}(t) = 0 \le g_{k}(t')$.

    For the last part of the claim in the statement, we construct a profile $P$ where \Ladder achieves higher welfare than \PiecewiseUniform, and a profile $P'$ for which the opposite relation holds.

    Let $n = 4$ and $m = 3$.
    Profiles $P, P' \in \mathcal{P}_{4,3}$ are defined by the votes
    \begin{align*} 
        p_1 &= (1, 0, 0)         & p'_1 &= (\textstyle\frac{1}{2}, \textstyle\frac{1}{2}, 0) \\
        p_2 &= (0, 1, 0)         & p'_2 &= (\textstyle\frac{1}{2}, \textstyle\frac{1}{2}, 0)  \\
        p_3 &= (0, 0, 1)         & p'_3 &= (\textstyle\frac{1}{2}, 0, \textstyle\frac{1}{2}) \\
        p_4 &= (\textstyle\frac{1}{2}, \textstyle\frac{1}{2}, 0)         & p'_4 &= (0, \textstyle\frac{1}{2}, \textstyle\frac{1}{2}) \\
    \end{align*}
    
    In profile $P$, \Ladder reaches normalization at time $t = \frac{11}{12}$ when the phantoms positions are $\big( \frac{11}{12}, \frac{2}{3}, \frac{5}{12}, \frac{1}{6}, 0\big)$, outputting the allocation $\big(\frac{5}{12}, \frac{5}{12}, \frac{1}{6}\big)$ with welfare $\frac{11}{6}$. \PiecewiseUniform reaches normalization at time $t = \frac{9}{10}$ when the phantom positions are $\big( 1, \frac{7}{10}, \frac{2}{5}, \frac{1}{5}, 0 \big)$, outputting the allocation $\big(\frac{2}{5}, \frac{2}{5}, \frac{1}{5}\big)$ with welfare $\frac{9}{5} < \frac{11}{6}$.

    In profile $P'$, \Ladder reaches normalization at time $t = \frac{2}{3}$ when the phantom positions are $\big( \frac{2}{3}, \frac{5}{12}, \frac{1}{6}, 0, 0\big)$, outputting the allocation $\big(\frac{5}{12}, \frac{5}{12}, \frac{1}{6}\big)$ with welfare $\frac{17}{6}$. \PiecewiseUniform reaches normalization at time $t = \frac{1}{2}$ when the phantom positions are $\big( 1, \frac{1}{2}, 0, 0, 0 \big)$, outputting the allocation $\big(\frac{1}{2}, \frac{1}{2}, 0\big)$ with welfare $3 > \frac{17}{6}$.
\end{proof}

\begin{proof}[Proof of \Cref{thm:phantom-domination} \textnormal{\ref{item:welf-fan}}]
    Let $\F = \{\F_n \mid n \in \N\}$ and  $\G = \{\G_n \mid n \in \N\}$ be the families of phantom systems of \Fan and \GreedyMax, respectively. Fix some $n \in \N$. We show that for any two times $t, t' \in [0,1]$, we can find a threshold index $k^\star$, such that $f_i(t) \ge g_i(t')$ for all $i \in \{0, \dots, k^\star\}$ and $f_i(t) \le g_i(t')$ for all $i \in \{k^\star+1, \dots, n\}$. Applying \Cref{lem:phantom_domination} then completes the proof.

    We choose $k^\star$ to be the highest index such that $f_{k^\star}(t) \ge g_{0}(t')$, or $k^\star = -1$ if no such index exists. If $k^\star = n$, then $t' = 0$, since $f_n(t) = 0$ and thus $f_i(t) \ge g_i(t') = 0$ for all $i = 0, \dots, k^\star = n$ follows trivially. Otherwise, we have $f_k(t) \ge g_k(t') = g_{0}(t')$ for all $k = 0, \dots, k^\star$ and $f_k(t) \le g_k(t') = g_{0}(t')$ for all $k \in \{k^\star+1, \dots, n-1\}$. Since $f_n(t) = 0 \le 0 = g_n(t')$ also holds, we conclude.
\end{proof}

\subsection{Proof of \Cref{lem:proportional_phantoms_characterization}}\label{app:lemProportionalPhantoms}

\lemProportionalPhantoms*
\begin{proof}
    The backwards direction of the implication was shown by \citet{freeman2021truthful}; we include it here for completeness. Let $n$ be fixed, $P \in \mathcal{P}_{n,m}$ be a single-minded profile, and $n_j$ denote the number of voters voting $1$ on alternative $j\in [m]$. Then, at time $t = 1$, the median on each alternative $j$ is the $(n-n_j)^\text{th}$ phantom, which is at position $f_{n-n_j}(1) = 1 - \frac{n-n_j}{n} = \frac{n_j}{n} = \frac{1}{n} \sum_{i \in [n]} p_{i,j}$.
    Thus, the mechanism is single-minded proportional.

    For the forward direction, let $\A^\G$ be a single-minded proportional moving-phantom mechanism induced by a phantom system $\G_n = \{g_k \mid k \in [n]_0\}$ for each $n \in \N$. Recall that, by the definition of moving-phantom mechanisms, $g_k(1) \ge \frac{n-k}{n}$ for all $k \in [n]_0$.
    
    We consider the phantom system $\F_n = \big\{f_k(t) = \min\big(g_k(t), \frac{n-k}{n}\big) \mid k \in [n]_0\big\}$, and claim that for any profile $P$, any time $t$ of normalization for $\G_n$ is also a time of normalization for $\F_n$. Since all phantoms and thus medians of $\F_n$ are at most as high as the ones of $\G_n$, normalization then implies that $\A^\G(P) = \A^\F(P)$, concluding the proof of the theorem.

    We fix $n\in \N$ in what follows. Before showing the claim, we make the following observation: For $k\in [n]_0$ and $t\in [0,1]$,
    \begin{equation}
        g_k(t) > 1 - \frac{k}{n} \quad \Longrightarrow \quad g_{k'}(t) \ge 1 - \frac{k'}{n} \text{ for all } k'\in [n]_0 \text{ with } k'\geq n-k.\label{imp:prop-phantoms}
    \end{equation}
    Indeed, suppose that $g_k(t) > 1 - \frac{k}{n}$ for some $k\in [n]_0$ and $t\in [0,1]$ and that there is some $k' \ge n-k$ with $g_{k'}(t) < 1 - \frac{k'}{n}$.
    Consider the following profile in $\mathcal{P}_{n,3}$, where $k \times p$ denotes $k$ voters voting $p$: $$\big( (n-k) \times (1,0,0), (n-k') \times (0,1,0), (k+k'-n) \times (0,0,1) \big).$$
    For $\A^\G$ to be single-minded proportional, there must be a time $t'$ at which $g_k(t') = 1 - \frac{k}{n}$ and $g_{k'}(t') = 1 - \frac{k'}{n}$. But then $g_k(t') < g_k(t)$ and $g_{k'}(t') > g_{k'}(t)$, contradicting the monotonicity of the phantom functions.

    We now prove the claim from above. Let $P\in \mathcal{P}_{n,m}$ be a profile and $t\in [0,1]$ a time of normalization for the phantom system $\G_n$. %
    We will iteratively move the phantoms from $\G_n$ continuously down to the positions of the phantoms from $\F_n$, starting with the lowest phantom $g_n$. 
    If none of the medians change during this process, then $t$ is a time of normalization for $\F$ and we conclude. 
    Otherwise, we let $k^\star$ be the index of the first (lowest) phantom whose movement changes the median, and $j$ be an alternative where the median changes. 
    Since the median is continuous and always equal to one of its input values, when moving the phantom from $g_{k^\star}(t)$ to $f_{k^\star}(t)$ there must be a position $\tau \in (f_{k^\star}(t), g_{k^\star}(t)) \setminus \{p_{i,j} \mid i \in [n]\}$ at which the median is exactly $\tau$. We denote the adjusted positions of the phantoms as $h_k \in [f_k(t), g_k(t)]$ for each $k \in [n]_0$, i.e., $h_k=f_k(t)$ for $k>k^*$, $h_{k^*}=\tau$, and $h_k=g_k(t)$ for $k<k^*$.

    Now, let $n_j = n-k^\star$ be the number of voters strictly above $h_{k^\star} = \tau$ on alternative $j$. We manipulate the profile, without changing the median on $j$, by making these $n_j$ voters single-minded on $j$. Note that both moving the phantoms down and making the voters single-minded can only decrease the medians on all alternatives. Since no voter votes exactly $\tau$ on $j$, each of the remaining $n-n_j$ voters votes strictly below $\tau$ on $j$ and has some alternative $j' \neq j$ with a vote strictly higher than the current median by the normalization of the votes. %
    We make these voters single-minded on these alternatives to obtain a single-minded profile, which again can not increase any medians. Moreover, the median on alternative $j$ is still $h_{k^\star} = \tau > f_{k^\star}(t) = \frac{n_j}{n}$. By Observation~\eqref{imp:prop-phantoms}, from $g_{k^\star}(t) = g_{n-n_j}(t) > \frac{n_j}{n}$ it follows that $g_k(t) \ge 1 - \frac{k}{n}$ for $k \ge n_j$. Let $n_{j'}$ denote the number of voters now voting $1$ on each alternative $j' \neq j$. Then, on each alternative $j' \neq j$, we have $n-n_{j'} \ge n_j$ and thus the median on $j'$ is exactly $$h_{n-n_{j'}} \ge f_{n-n_{j'}}(t) = \min\bigg(g_{n-n_{j'}}(t), \frac{n_{j'}}{n}\bigg) = \frac{n_{j'}}{n}.$$ But then, the current sum of medians is $$h_{k^\star} + \sum_{j' \in [m]: j' \neq j}  h_{n-n_{j'}} > \frac{n_j}{n} + \sum_{j' \in [m]:j' \neq j} \frac{n_{j'}}{n} = 1,$$ contradicting the normalization of $\mathcal{G}_n$ at time $t$.
\end{proof}

\subsection{Lattice of Phantom Mechanisms for Two Alternatives}\label{app:latticeTwoAlternatives}

\begin{proposition}
    For $m=2$ alternatives, the dominance relation $\trianglerighteq$ among moving-phantom mechanisms forms a lattice. 
\end{proposition}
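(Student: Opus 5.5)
The plan is to exploit the fact that for $m=2$ every moving-phantom mechanism collapses to a generalized median with \emph{fixed} phantoms. Welfare is then a simple convex function of the single coordinate $a_1$, and the dominance relation $\trianglerighteq$ becomes a componentwise order on a set of monotone vectors, where meets and joins are just componentwise $\min$ and $\max$. Since $\trianglerighteq$ is only a preorder on mechanisms, I will work with mechanisms identified up to equal outcomes on all instances. Moreover, a family $\F=\{\F_n\}$ is compared instance by instance for each $n$ separately, so it suffices to show that the relation restricted to a fixed $n$ forms a lattice. The product over $n$ of lattices, with the product order, is again a lattice.

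\textbf{Step 1 (reduction to fixed phantoms).} Fix $n$, a phantom system $\F_n$, and write $x_i=p_{i,1}$, so that $p_{i,2}=1-x_i$. For each $k\in[n]_0$, the function $t\mapsto f_k(t)+f_{n-k}(t)$ is continuous, equals $0$ at $t=0$, and is at least $1$ at $t=1$. Hence there is a time $t_k$ with $f_k(t_k)+f_{n-k}(t_k)=1$, and I set $c_k=f_k(t_k)$. I will show that $\A^{\F}(P)_1=\med(c_0,\dots,c_n,x_1,\dots,x_n)$ and that $c_k+c_{n-k}=1$. To do so, I compare the two medians $\med(f(t),x)$ and $\med(f(t),1-x)$ at a time of normalization. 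Reflecting the second one gives $1-\med(1-f(t),x)$, and normalization pins the outcome between consecutive order statistics in the same way as the fixed vector $c$. Monotonicity of the $f_k$ gives $c_0\ge\dots\ge c_n$, and the symmetry gives $c_{n-k}=1-c_k$. Hence the mechanism at $n$ is determined by the free coordinates $c_k\in[\tfrac12,1]$ for $k<n/2$, forming a nonincreasing sequence, with $c_{n/2}=\tfrac12$ when $n$ is even. Conversely, every such vector is realized by a phantom system that moves phantom $k$ linearly from $0$ to $c_k$, keeping the ordering and the boundary conditions $f_k(1)\ge 1-\tfrac kn$ where needed. I expect the verification of the boundary conditions to require a small adjustment, namely running phantoms further after the normalization times without changing outcomes.

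\textbf{Step 2 (welfare and the order).} For $m=2$ we have $\min(x,a)+\min(1-x,1-a)=1-|x-a|$, so $w(a)=n-\sum_i|x_i-a_1|$. This is convex and piecewise linear in $a_1$, and it is maximized on the interval of medians of the votes. I will then show that, for vectors $c,d$ as above, the mechanism induced by $c$ dominates the one induced by $d$ if and only if $c_k\ge d_k$ for all $k<n/2$, that is, $c$ is more spread away from $\tfrac12$.

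For sufficiency, I apply \Cref{lem:phantom_domination} with the fixed positions $c$ and $d$ and threshold $k^\star=\lceil n/2\rceil-1$. By symmetry, $c_k\ge d_k$ above the threshold and $c_k\le d_k$ below it. Alternatively, I can argue directly that $\med(c,x)$ lies between $\med(d,x)$ and the median interval of the votes, and then use convexity of $w$.

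For necessity, suppose $c_k<d_k$ for some $k<n/2$. I build a profile with $k+1$ voters at $x=0$ and $n-k-1$ voters at $x=1$, adjusting one voter if the two groups tie. On this profile the generalized median equals the $k$-th phantom, a majority prefers larger $a_1$, and so $d$ gives strictly higher welfare than $c$, contradicting dominance.

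\textbf{Step 3 (lattice).} The set of nonincreasing sequences in $[\tfrac12,1]$ is closed under componentwise $\max$ and $\min$. By Step 2 these operations are exactly the join and meet for $\trianglerighteq$, and the realization part of Step 1 shows that they correspond to actual moving-phantom mechanisms. Taking the product over $n$ finishes the argument.

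I expect the main obstacle to be Step 1, and in particular checking carefully that the outcome equals the fixed-phantom median in all tie and degenerate cases. These include the case where $t_k$ is not unique, votes coinciding with phantom positions, and the middle phantom when $n$ is even. I also need the converse realization to satisfy all the phantom-system axioms.
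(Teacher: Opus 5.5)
Your overall route is the paper's: reduce to symmetric fixed phantoms $c_0\ge\dots\ge c_n$ with $c_{n-k}=1-c_k$, show that $\trianglerighteq$ is the componentwise order on $(c_k)_{k<n/2}$, and take componentwise $\max$/$\min$. Sufficiency goes through \Cref{lem:phantom_domination} and necessity through a two-block profile.

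The paper cites Theorem 5 of \citet{freeman2021truthful} for your Step 1, whereas you re-derive it, and your derivation does go through. The set $\{t : f_k(t)+f_{n-k}(t)=1\}$ is a nonempty interval on which both functions are constant, so $c_k$ is well defined. For the outcome, let $r$ and $s$ be the numbers of votes strictly below and strictly above the outcome $a$. Normalization gives $f_r(t^\star)\ge a\ge f_{n-s}(t^\star)$ and $f_s(t^\star)\ge 1-a\ge f_{n-r}(t^\star)$. Comparing $t^\star$ with $t_r$ and with $t_s$ then yields $c_r\ge a\ge c_{n-s}$, that is, $a=\med(c,x)$.

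You also address two points the paper leaves implicit, and both are handled correctly. One is realizability of the componentwise $\max$/$\min$: move phantom $k$ linearly to $c_k$ by $t=\frac12$, then up to $\max(c_k,1-\frac kn)$. The other is antisymmetry.

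There is, however, a concrete off-by-one error in your necessity step. With $k+1$ voters at $x=0$ and $n-k-1$ at $x=1$, the $(n{+}1)$-st smallest of the $2n+1$ values is the $(n-k)$-th smallest phantom. That phantom is $c_{k+1}$, not $c_k$. Your profile therefore compares $c_{k+1}$ with $d_{k+1}$, and the hypothesis $c_k<d_k$ says nothing about those coordinates. If $c_{k+1}=d_{k+1}$, the two mechanisms have equal welfare on your profile and no contradiction arises. Moreover, for odd $n$ and $k=\frac{n-1}{2}$ the majority in your profile sits at $0$, not at $1$.

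The fix is to take $k$ voters at $0$ and $n-k$ at $1$, which is exactly the paper's profile. The outcome on the first alternative is then $c_k$ (respectively $d_k$), and the welfare is $k+(n-2k)c_k$. Because $k<\frac n2$, the voters at $1$ form a strict majority for every relevant $k$, so no tie adjustment is needed.

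A minor slip: $w(a)=n-\sum_i|x_i-a_1|$ is concave, not convex, in $a_1$. It is concavity (unimodality) that your alternative direct sufficiency argument relies on.
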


\begin{proof}
    For $2$ alternatives, by \citet[][Theorem 5]{freeman2021truthful} we can assume w.l.o.g. that every moving-phantom mechanism is induced by a \emph{symmetric generalized median mechanism}. More precisely, this means that the moving-phantom mechanism can be expressed by a phantom systems $\F_n=\{ f_k\mid k=0,\ldots,n\}$ satisfying the following two properties: (i) phantoms are constant, i.e., for all $k \in [n]_0$ there exists $f_k \in [0,1]$ with $f_k(t) = f_k$ for all $t \in [0,1]$, and (ii) $f_k=1-f_{n-k}$ for every $k\in \big\{0,\ldots,\big\lfloor\frac{n}{2}\big\rfloor\big\}$. 

    We claim that, for two families of symmetric phantom systems $\F=\{\F_n\mid n\in \N\}$ and $\G=\{\G_n\mid n\in \N\}$ given by $\F_n=\big\{f_k\mid k=0,\ldots,\big\lfloor\frac{n}{2}\big\rfloor\big\}$ and $\G_n=\big\{g_k\mid k=0,\ldots,\big\lfloor\frac{n}{2}\big\rfloor\big\}$ for each $n\in \N$, it holds that
    \[
        \A^\F \trianglerighteq \A^\G \Longleftrightarrow f_k\geq g_k \text{ for all } k\in \left\{0,\ldots,\left\lfloor\frac{n}{2}\right\rfloor\right\},\ n\in \N.
    \]
    If the right-hand side of this equivalence is true, then we can apply \Cref{lem:phantom_domination} with $k^\star=\big\lfloor\frac{n}{2}\big\rfloor$, and conclude that $\A^\F \trianglerighteq \A^\G$.
    If the right-hand side is not true, there exists $n\in \N$ and $k\in \big\{0,\ldots,\big\lfloor\frac{n}{2}\big\rfloor\big\}$ such that $f_k<g_k$.
    Note that, when $n$ is even, $f_{\frac{n}{2}}=g_{\frac{n}{2}}=\frac{1}{2}$ by symmetry, so we have $k<\frac{n}{2}$.
    Consider the profile $P\in \mathcal{P}_{n,2}$ where the voters $i$ from $1$ to $n-k$ vote $(1,0)$ and the voters $n-k+1$ to $n$ vote $(0,1)$. 
    Then, the outcomes are $\A^\F(P)=(f_k,1-f_k)$ and $\A^\G(P)=(g_k,1-g_k)$, with respective welfares of
    \[
        w(A^\F) = (n-k)f_k+k(1-f_k)=k+(n-2k)f_k < k+(n-2k)g_k = (n-k)g_k + k(1-g_k) = w(\A^\G),
    \]
    where the inequality follows from the facts that $k<\frac{n}{2}$ and $f_k<g_k$.
    
    Having established the claim, for any two families of symmetric phantom systems $\F=\{\F_n\mid n\in \N\}$ and $\G=\{\G_n\mid n\in \N\}$ given by $\F_n=\big\{f_k\mid k=0,\ldots,\big\lfloor\frac{n}{2}\big\rfloor\big\}$ and $\G_n=\big\{g_k\mid k=0,\ldots,\big\lfloor\frac{n}{2}\big\rfloor\big\}$ for each $n\in \N$, it follows that:
    \begin{enumerate}[label=(\roman*)]
        \item The least upper bound or \textit{join} of their associated moving-phantom mechanisms according to $\trianglerighteq$ is given by $\A^\F\vee \A^\G = \A^{\mathcal{H}}$, where $\mathcal{H}=\{\mathcal{H}_n\mid n\in \N\}$ is given, for each $n\in \N$, by the symmetric phantom system $\mathcal{H}_n=\big\{\max\{f_k,g_k\}\mid k=0,\ldots,\big\lfloor\frac{n}{2}\big\rfloor\big\}$;
        \item The greatest lower bound or \textit{meet} of their associated moving-phantom mechanisms according to $\trianglerighteq$ is given by $\A^\F\wedge \A^\G = \A^{\mathcal{H}}$, where $\mathcal{H}=\{\mathcal{H}_n\mid n\in \N\}$ is given, for each $n\in \N$, by the symmetric phantom system $\mathcal{H}_n=\big\{\min\{f_k,g_k\}\mid k=0,\ldots,\big\lfloor\frac{n}{2}\big\rfloor\big\}$.\qedhere
    \end{enumerate}
\end{proof}

\section{Omitted Material from \Cref{sec:optimal_decomposable}} 

\subsection{Pareto Suboptimality of \GrdDec and \UtDec}\label{app:paretoGrdDecUtDec}

\begin{proposition} \label{prop:greedydecomp_pareto_dominated}
    Neither \GrdDec nor \UtDec is Pareto-optimal.
\end{proposition}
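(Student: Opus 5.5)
The plan is to exhibit a single small profile $P$ on which the set of decomposable allocations is a singleton $\{a\}$, and then to show that $a$ is Pareto-dominated by some allocation $a'$. By \Cref{prop:greedydecomp_decomposable}, \GrdDec is decomposable, and \UtDec outputs a decomposable allocation by definition. So if decomposable allocations are unique on $P$, both mechanisms must return $a$, and one example disposes of both claims simultaneously, regardless of tie-breaking in \UtDec.

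\textbf{Choice of profile.} I do not yet have an instance; the plan is to search for one with few voters ($n = 3$ or $n = 4$) and $m = 3$ or $m = 4$ alternatives. The inefficiency of decomposability comes from contributions being individual: each voter must place exactly $\frac{1}{n}$ on alternatives whose final level does not exceed their own vote. I want this to force the following situation. Some voter $i$ must put part of their share on an alternative $j$ that only $i$ (and few others) likes. Meanwhile, a jointly liked alternative $j'$ sits strictly below most voters' votes but cannot be raised further. It cannot be raised because any voter who would fund it is capped there, or has exhausted their share by the first point. Tentative shape: two ``partisan'' voters with large votes on private alternatives $1$ and $2$ and moderate votes on a common alternative $3$, plus voters whose votes pin $a_3$ from above.

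First, I will check that $a$ is decomposable by writing explicit contributions $c_i$, as in \Cref{ex:greedydecomp_social_welfare_small}. This step is a routine calculation.

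\textbf{Uniqueness (main obstacle).} I expect the main difficulty to be showing that no other decomposable allocation exists. The candidate profiles I tried mentally (two voters, three alternatives) all admitted decomposable allocations that are Pareto optimal, so the instance must be designed with care. I will argue uniqueness by cases on which alternatives exceed which voters' votes, using the same style of argument as in the proof of \Cref{prop:decomp-truthfulness}. If an alternative's level exceeds a voter's vote, that voter cannot contribute there. This forces that voter's whole share onto the remaining alternatives, which in turn overshoots another vote, giving a contradiction.

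If exact uniqueness proves too hard to engineer, the fallback is different for the two mechanisms. For \GrdDec, I will simulate \Cref{alg:greedy_decomposable} directly on $P$ to obtain $a$. For \UtDec, I will show that every welfare-maximizing decomposable allocation is Pareto-dominated. To do so I will solve the small optimization over contributions (welfare is piecewise linear, so this reduces to comparing finitely many vertices) and show that the maximizers coincide with $a$.

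\textbf{Pareto domination.} Finally, I will give an explicit $a'$, which is necessarily non-decomposable. I will verify $u_i(a') \ge u_i(a)$ for all $i$, with strict inequality for at least one voter, by computing the overlaps $\overlap{p_i}{a'}$ and $\overlap{p_i}{a}$ directly.
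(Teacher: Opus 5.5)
You have the right skeleton (exhibit a profile, certify the mechanisms' outputs, exhibit a dominating allocation). But the proposal stops where the proof begins. For an existence claim like this the concrete profile \emph{is} the proof, and you do not have one. The missing quantitative idea is that budget freed on several ``private'' alternatives must be re-spent on \emph{shared} alternatives, so that one unit of increase makes several losers whole at once and leaves room for a strict gain. The paper uses $n=7$, $m=4$ with $p_1=(\frac{3}{7},0,\frac{4}{7},0)$, $p_2=(0,\frac{3}{7},\frac{4}{7},0)$, $p_3=(0,0,1,0)$, $p_4=p_5=(\frac{2}{7},0,0,\frac{5}{7})$ and $p_6=p_7=(0,\frac{2}{7},0,\frac{5}{7})$. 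Only voters $4$--$7$ vote positively on alternative $4$, so every decomposable allocation has $a_4\le\frac{4}{7}$; this is your ``supporters have exhausted their share'' mechanism. \GrdDec returns $(\frac{1}{7},\frac{1}{7},\frac{1}{7},\frac{4}{7})$. Now pass to $(0,0,\frac{2}{7},\frac{5}{7})$. A single $\frac{1}{7}$ increase on alternative $3$ restores both voters $1$ and $2$, and a single $\frac{1}{7}$ increase on alternative $4$ restores all of voters $4$--$7$. Voter $3$ strictly gains.

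Your insistence on robustness to \UtDec's tie-breaking is well placed, and the paper's instance alone does not settle it. There every decomposable allocation has welfare at most $3(a_1+a_2+a_3)+4a_4=3+a_4\le\frac{25}{7}$. This is attained by $(\frac{1}{7},\frac{1}{7},\frac{1}{7},\frac{4}{7})$ and also by $(0,0,\frac{3}{7},\frac{4}{7})$. The latter is Pareto-optimal: a weak improvement needs $a_3\ge\frac{3}{7}$ for voter $3$, plus $a_1+a_4\ge\frac{4}{7}$ and $a_2+a_4\ge\frac{4}{7}$ for voters $4$--$7$, which forces $a_4=\frac{4}{7}$ and $a_1=a_2=0$.

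Your primary route via uniqueness is unpromising. Decomposable allocations are far from unique in such examples. Moreover, the natural uniqueness certificate, $|\{i: p_{i,j}>a_j\}|\le n a_j$ for all $j$, already forces Pareto-optimality. A voter voting exactly $a$ vetoes every change. Otherwise, summing over $j$ shows that each voter exceeds $a$ on exactly one alternative, with equality for every $j$. Then lowering any $a_j>0$ hurts a voter whose only such alternative is $j$.

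Your fallback is the right route, and it can be realized. For each $i\in\{1,2,3\}$, take one voter with $\frac{1}{2}$ on alternative $i$ and $\frac{1}{2}$ on alternative $4$, and three voters with $\frac{1}{12}$ on $i$ and $\frac{11}{12}$ on alternative $5$, so $n=12$. Decomposability forces $a_5\le\frac{3}{4}$, so welfare is at most $4-a_4+5a_5\le\frac{31}{4}$. Equality holds only at $(\frac{1}{12},\frac{1}{12},\frac{1}{12},0,\frac{3}{4})$. This is also \GrdDec's output, and it is Pareto-dominated by $(0,0,0,\frac{1}{6},\frac{5}{6})$.
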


\begin{figure}
    \centering
    \def\coorddist{0.25}
    \def\coordwidth{0.06}
    \def\shift{0.0093}
    \newcommand{\drawvoteannotated}[6]{
        \draw[ultra thick, #3] (#1*\coorddist-\coordwidth,#2) -- (#1*\coorddist+\coordwidth,#2)
        node[pos=0, inner sep=0pt, anchor=east, yshift=#6] {\small #4}
        node[pos=1, inner sep=0pt, anchor=west, yshift=#6] {\small #5};
    }
    \newcommand{\drawcontribution}[5]{
        \fill[pattern={Lines[angle=#5*0.8,distance=4pt,line width=2pt]},pattern color=#4!80!white, draw=#4!60!white] (#1*\coorddist-\coordwidth,#2) rectangle (#1*\coorddist-\coordwidth+\coordwidth*2,#3);
    }
    \begin{subfigure}[t]{.48\textwidth}
        \centering
        \begin{tikzpicture}[yscale=3,xscale=6]
            \draw (0,0) -- (0,1);  
            \draw (-.02,0) -- (0.02,0); 
            \foreach \x in {1,...,6}{\draw (-.01,\x/7) -- (0.01,\x/7);}
            \draw (-.02,1) -- (0.02,1); 
            \node[anchor=east, xshift=-5px] at (0,0) {$0$}; 
            \node[anchor=east, xshift=-5px] at (0,1) {$1$}; 
            \node[anchor=south, rotate=90] at (0,0.5) {\small Budget}; 
            \filldraw[fill=alternativebarcolor,draw=none] (1*\coorddist-\coordwidth,0) rectangle (1*\coorddist+\coordwidth,1); 
            \filldraw[fill=alternativebarcolor,draw=none] (2*\coorddist-\coordwidth,0) rectangle (2*\coorddist+\coordwidth,1); 
            \filldraw[fill=alternativebarcolor,draw=none] (3*\coorddist-\coordwidth,0) rectangle (3*\coorddist+\coordwidth,1); 
            \filldraw[fill=alternativebarcolor,draw=none] (4*\coorddist-\coordwidth,0) rectangle (4*\coorddist+\coordwidth,1); 
            \node[anchor=north east, outer sep=4.4pt] at (1*\coorddist,0) {\small Alternative:};
            \node[anchor=north, outer sep=5pt] at (1*\coorddist,0) {$1$};
            \node[anchor=north, outer sep=5pt] at (2*\coorddist,0) {$2$};
            \node[anchor=north, outer sep=5pt] at (3*\coorddist,0) {$3$};
            \node[anchor=north, outer sep=5pt] at (4*\coorddist,0) {$4$};
            \drawcontribution{1}{0}{1/7}{color1}{45};
            \drawcontribution{2}{0}{1/7}{color1!50!black}{45};
            \drawcontribution{3}{0}{1/7}{color4}{45};
            \drawcontribution{4}{0}{2/7}{color2}{45};
            \drawcontribution{4}{2/7}{4/7}{color3}{-45};
            \drawvoteannotated{1}{3/7}{color1}{}{}{0};
            \drawvoteannotated{3}{4/7-\shift}{color1}{}{}{0};
            \drawvoteannotated{2}{3/7}{color1!50!black}{}{}{0};
            \drawvoteannotated{3}{4/7+\shift}{color1!50!black}{}{}{0};
            \drawvoteannotated{1}{2/7}{color2}{$2\times$}{}{0};
            \drawvoteannotated{4}{5/7+\shift}{color2}{$2\times$}{}{2.5pt};
            \drawvoteannotated{2}{2/7}{color3}{$2\times$}{}{0};
            \drawvoteannotated{4}{5/7-\shift}{color3}{$2\times$}{}{-2.5pt};
            \drawvoteannotated{3}{1}{color4}{}{}{0};
        \end{tikzpicture}
        \caption{Output of \GrdDec.}\label{fig:greedydecomp_pareto_dominated_greedydecomp}
    \end{subfigure}\hfill
    \begin{subfigure}[t]{.48\textwidth}
        \centering
        \begin{tikzpicture}[yscale=3,xscale=6]
            \draw (0,0) -- (0,1);  
            \draw (-.02,0) -- (0.02,0); 
            \foreach \x in {1,...,6}{\draw (-.01,\x/7) -- (0.01,\x/7);}
            \draw (-.02,1) -- (0.02,1); 
            \node[anchor=east, xshift=-5px] at (0,0) {$0$}; 
            \node[anchor=east, xshift=-5px] at (0,1) {$1$}; 
            \node[anchor=south, rotate=90] at (0,0.5) {\small Budget}; 
            \filldraw[fill=alternativebarcolor,draw=none] (1*\coorddist-\coordwidth,0) rectangle (1*\coorddist+\coordwidth,1); 
            \filldraw[fill=alternativebarcolor,draw=none] (2*\coorddist-\coordwidth,0) rectangle (2*\coorddist+\coordwidth,1); 
            \filldraw[fill=alternativebarcolor,draw=none] (3*\coorddist-\coordwidth,0) rectangle (3*\coorddist+\coordwidth,1); 
            \filldraw[fill=alternativebarcolor,draw=none] (4*\coorddist-\coordwidth,0) rectangle (4*\coorddist+\coordwidth,1); 
            \node[anchor=north east, outer sep=4.4pt] at (1*\coorddist,0) {\small Alternative:};
            \node[anchor=north, outer sep=5pt] at (1*\coorddist,0) {$1$};
            \node[anchor=north, outer sep=5pt] at (2*\coorddist,0) {$2$};
            \node[anchor=north, outer sep=5pt] at (3*\coorddist,0) {$3$};
            \node[anchor=north, outer sep=5pt] at (4*\coorddist,0) {$4$};
            \drawcontribution{3}{0}{2/7}{darkgray}{45};
            \drawcontribution{4}{0}{5/7}{darkgray}{45};
            \drawvoteannotated{1}{3/7}{color1}{}{}{0};
            \drawvoteannotated{3}{4/7-\shift}{color1}{}{}{0};
            \drawvoteannotated{2}{3/7}{color1!50!black}{}{}{0};
            \drawvoteannotated{3}{4/7+\shift}{color1!50!black}{}{}{0};
            \drawvoteannotated{1}{2/7}{color2}{$2\times$}{}{0};
            \drawvoteannotated{4}{5/7+\shift}{color2}{$2\times$}{}{2.5pt};
            \drawvoteannotated{2}{2/7}{color3}{$2\times$}{}{0};
            \drawvoteannotated{4}{5/7-\shift}{color3}{$2\times$}{}{-2.5pt};
            \drawvoteannotated{3}{1}{color4}{}{}{0};
        \end{tikzpicture}
        \caption{Welfare-optimal allocation.}\label{fig:greedydecomp_pareto_dominated_opt}
    \end{subfigure}
    \caption{Visualization of the profile and aggregates from the proof of \Cref{prop:greedydecomp_pareto_dominated}}
    \label{fig:greedydecomp_pareto_dominated}
\end{figure}

\begin{proof}
    Consider the profile $P$ with $n = 7$ and $m = 4$ defined as 
    \[
    P \;=\;
    \begin{pmatrix}
    \sfrac{3}{7} & 0 & \sfrac{4}{7} & 0\\
    0 & \sfrac{3}{7} & \sfrac{4}{7} & 0\\
    0 & 0 & 1 & 0\\
    \sfrac{2}{7} & 0 & 0 & \sfrac{5}{7}\\
    \sfrac{2}{7} & 0 & 0 & \sfrac{5}{7}\\
    0 & \sfrac{2}{7} & 0 & \sfrac{5}{7}\\
    0 & \sfrac{2}{7} & 0 & \sfrac{5}{7}\\
    \end{pmatrix}.
    \] \Cref{fig:greedydecomp_pareto_dominated} illustrates this profile.
    
    \GrdDec first exhausts the budget of voters $4$ to $7$ on alternative 4, and then spends the budget of voters $1$ to $3$ on alternatives $1$ to $3$, respectively. This leads to the allocation $\big(\frac{1}{7}, \frac{1}{7}, \frac{1}{7}, \frac{4}{7}\big)$ (see \Cref{fig:greedydecomp_pareto_dominated_greedydecomp}). Note that this is also an
    optimal decomposable allocation for this profile, as no decomposable mechanism can spend more than $\frac{4}{7}$ on alternative $4$.
    However, this outcome is Pareto-dominated by the non-decomposable allocation $\big(0, 0, \frac{2}{7}, \frac{5}{7}\big)$ (see \Cref{fig:greedydecomp_pareto_dominated_opt}), where all voters get as much utility as before, except for voter $3$, whose utility strictly increases. 
\end{proof}

\subsection{IP Formulation for Computing an Optimal Decomposable Allocation}\label{app:IPDecomp}

For a fixed profile $P=(p_{i,j})_{i\in [n],j\in [m]}\in \mathcal{P}_{n,m}$, we consider the following integer linear program $\mathrm{IP}(P)$ with variables $x\in \{0,1\}^{n\times m}$ and $c,u\in [0,1]^{n\times m}$: %
\begin{align}
    \max &\sum_{i \in [n]} \sum_{j \in [m]} u_{i,j} \nonumber\\ 
    &\sum_{j \in [m]} c_{i,j} = \frac{1}{n} & \text{ for all } i \in [n],\label{eq:ilp-decomp-1}\\ 
    & c_{i,j} \leq x_{i,j} & \text{ for all } i \in [n], j \in [m],\label{eq:ilp-decomp-2}\\ 
    &\sum_{i' \in [n]} c_{i',j} + (1-p_{i,j}) \cdot x_{i,j} \leq 1 & \text{ for all } i \in [n], j \in [m],\label{eq:ilp-decomp-3}\\ 
    & u_{i,j} \leq \sum_{i' \in [n]} c_{i',j} &\text{ for all } i \in [n], j \in [m],\label{eq:ilp-decomp-4}\\ 
    & u_{i,j} \leq p_{i,j} &\text{ for all } i \in [n], j \in [m].\label{eq:ilp-decomp-5}%
\end{align}

\begin{proposition}\label{prop:ilp-decomp}
    For a profile $P\in \mathcal{P}_{n,m}$ and $c\in [0,1]^{n\times m}$, 
    $a\in \Delta^{(m-1)}$ is a decomposable allocation with voter contributions $c$ if and only if there exist $x\in \{0,1\}^{n\times m}$ and $u\in [0,1]^{n\times m}$ such that $(x,c,u)$ is a feasible solution  for $\mathrm{IP}(P)$. 
\end{proposition}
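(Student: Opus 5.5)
We prove the two directions separately, both by direct verification of the constraints. The binary variable $x_{i,j}$ is meant to indicate whether voter $i$ is \emph{allowed} to contribute to alternative $j$. Constraint~\eqref{eq:ilp-decomp-3} then encodes the decomposability condition $a_j \le p_{i,j}$ whenever $x_{i,j}=1$, with $a_j = \sum_{i'} c_{i',j}$. Throughout, we read the statement with $a$ defined by $a_j = \sum_{i'\in[n]} c_{i',j}$, which is exactly the first bullet of decomposability. The variable $u_{i,j}$ is a surrogate for $\min(a_j,p_{i,j})$; for the equivalence itself we only need it to be feasible.

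For the forward direction, suppose $a$ is decomposable with contributions $c$. We set $x_{i,j}=1$ if $c_{i,j}>0$ and $x_{i,j}=0$ otherwise, and $u_{i,j}=\min(a_j,p_{i,j})\in[0,1]$. Each constraint then follows directly:
\begin{itemize}
\item Constraint~\eqref{eq:ilp-decomp-1} holds because each voter contributes exactly $\frac1n$.
\item Constraint~\eqref{eq:ilp-decomp-2} holds because $c_{i,j}\le \frac1n\le 1$ when $x_{i,j}=1$, and $c_{i,j}=0$ when $x_{i,j}=0$.
\item For constraint~\eqref{eq:ilp-decomp-3}: if $x_{i,j}=0$, the left-hand side is $a_j\le 1$. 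If $x_{i,j}=1$, then $c_{i,j}>0$, so $a_j\le p_{i,j}$ and the left-hand side is $a_j+1-p_{i,j}\le 1$.
\item Constraints~\eqref{eq:ilp-decomp-4} and~\eqref{eq:ilp-decomp-5} hold by the choice of $u$.
\end{itemize}

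For the backward direction, take a feasible $(x,c,u)$. By~\eqref{eq:ilp-decomp-1} and nonnegativity, $c_{i,j}\in[0,\frac1n]$ and each row of $c$ sums to $\frac1n$. Hence $a_j=\sum_i c_{i,j}$ is nonnegative and sums to $1$ over $j$, so $a\in\Delta^{(m-1)}$, and the first bullet of decomposability holds by definition. For the second bullet, suppose $c_{i,j}>0$. Then~\eqref{eq:ilp-decomp-2} forces $x_{i,j}=1$, since $x$ is binary. Constraint~\eqref{eq:ilp-decomp-3} then reads $a_j+1-p_{i,j}\le 1$, that is, $a_j\le p_{i,j}$.

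The only subtle point is how the statement should be parsed, namely that $a$ is tied to $c$ via $a_j=\sum_i c_{i,j}$. Once that is fixed, both directions are routine, and there is no real technical obstacle. We may also remark that at an optimum $u_{i,j}=\min(a_j,p_{i,j})$, so the objective equals $w(a)$ and the IP computes \UtDec.
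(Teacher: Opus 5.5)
Your proof is correct and follows the paper's argument essentially line for line. In the forward direction you make the same choices, setting $x_{i,j}=1$ exactly when $c_{i,j}>0$ and $u_{i,j}=\min(a_j,p_{i,j})$, with the same case split on \eqref{eq:ilp-decomp-3}; in the backward direction you use \eqref{eq:ilp-decomp-2} to force $x_{i,j}=1$ whenever $c_{i,j}>0$ and then \eqref{eq:ilp-decomp-3} to get $a_j\le p_{i,j}$, and your reading of $a$ via $a_j=\sum_i c_{i,j}$ matches how the paper treats it.
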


\begin{proof}
    We fix $P\in \mathcal{P}_{n,m}$ and $c\in [0,1]^{n\times m}$.
    For the forward direction, we let $a\in \Delta^{(m-1)}$ be a decomposable allocation with voter contributions $c\in [0,1]^{n\times m}$; i.e., $\sum_{j\in [m]}c_{i,j}=\frac{1}{n}$ for all $i\in [n]$, $\sum_{i\in [n]}c_{i,j}=a_j$ for all $j\in [m]$, and $a_j\leq p_{i,j}$ for all $i\in[n],j\in[m]$ with $c_{i,j}>0$. 
    We define $x\in \{0,1\}^{n\times m}$ and $u\in [0,1]^{n\times m}$ as follows:
    \[
        x_{i,j} = \begin{cases}
            1 & \text{if } c_{i,j} > 0,\\
            0 & \text{otherwise,}
        \end{cases} \qquad
        u_{i,j} = \min\{ a_{j}, p_{i,j} \}, \qquad \text{for all } i\in [n], j\in [m].
    \]
    We claim that $(x,c,u)$ is a feasible solution for $\mathrm{IP}(P)$.
    That this solution satisfies constraints \eqref{eq:ilp-decomp-1}, \eqref{eq:ilp-decomp-2},  \eqref{eq:ilp-decomp-4} and \eqref{eq:ilp-decomp-5} is straightforward from its definition.
    For \eqref{eq:ilp-decomp-3}, we fix $i\in [n]$ and $j\in [m]$, and distinguish two cases.
    If $x_{i,j}=1$, the constraint becomes $\sum_{i'\in [n]}c_{i',j}\leq p_{i,j}$. Since $c$ certifies decomposability, this is equivalent to $a_{j}\leq p_{i,j}$ and holds true.
    If $x_{i,j}=0$, the constraint becomes $\sum_{i'\in [n]}c_{i',j}\leq 1$, which is equivalent to $a_{j}\leq 1$ and thus holds trivially by the feasibility of $a$.

    For the backward direction, we let $(x,c,u)$ be a feasible solution for $\mathrm{IP}(P)$ and consider $a\in [0,1]^{m}$ defined by $a_j=\sum_{i\in [n]} c_{i,j}$ for every $j\in [m]$.
    From the fact that $c\in [0,1]^{n\times m}$ and constraint \eqref{eq:ilp-decomp-1}, it follows that $a\in \Delta^{(m-1)}$, i.e., $a$ is a feasible allocation. 
    We prove in what follows that $c$ are voter contributions that certify its decomposability. 
    Two of the three conditions follow directly from constraint  \eqref{eq:ilp-decomp-1} and the definition of $a$: that $\sum_{j\in [m]}c_{i,j}=\frac{1}{n}$ for all $i\in [n]$, and that $\sum_{i\in [n]}c_{i,j}=a_j$ for all $j\in [m]$.
    It only remains to show that $a_j\leq p_{i,j}$ for all $i\in[n],j\in[m]$ with $c_{i,j}>0$. 
    Fix $i\in [n]$ and $j\in [m]$ such that $c_{i,j}>0$. 
    Constraint \eqref{eq:ilp-decomp-2} implies $x_{i,j}=1$, and thus constraint \eqref{eq:ilp-decomp-3} yields $p_{i,j} \geq \sum_{i'\in [n]}c_{i',j}=a_j$, where the last equality follows from the definition of $a$.
\end{proof}

From our definition of voter utilities, captured precisely by $\sum_{j\in [m]}u_{i,j}$ for each voter $i$ due to constraints \eqref{eq:ilp-decomp-4} and \eqref{eq:ilp-decomp-5}, we obtain the following corollary, stating an IP-based characterization of welfare-optimal decomposable allocations.
\begin{corollary}
    For any $P\in \mathcal{P}_{n,m}$, $a^\star\in \Delta^{(m-1)}$ is a welfare-optimal decomposable allocation if and only if there exists an optimal solution $(x^\star,c^\star,u^\star)$ for $\mathrm{IP}(P)$ such that $a^\star_j=\sum_{i\in [n]} c^\star_{i,j}$ for every $j\in [m]$.
\end{corollary}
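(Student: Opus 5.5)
The statement to prove is the corollary: $a^\star$ is a welfare-optimal decomposable allocation if and only if it arises as $a^\star_j=\sum_i c^\star_{i,j}$ from an optimal solution of $\mathrm{IP}(P)$. The plan is to derive this directly from \Cref{prop:ilp-decomp}, together with one observation about the objective. For any fixed feasible $(x,c)$, the $u$-variables are constrained only by \eqref{eq:ilp-decomp-4} and \eqref{eq:ilp-decomp-5}, and they appear in the objective with positive coefficients. Hence, for the allocation $a_j=\sum_{i}c_{i,j}$, the largest objective value attainable with these $(x,c)$ is $\sum_{i,j}\min(a_j,p_{i,j})=w(a)$. This value is reached by setting $u_{i,j}=\min(a_j,p_{i,j})$, which lies in $[0,1]$. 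Consequently, every optimal IP solution has $u$ set this way, and its objective value equals $w(a)$.

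\textbf{Step 1: the two optima coincide.} Let $W^\star$ be the optimal IP value and $W_D$ the maximum welfare over decomposable allocations. I first argue that $W_D$ is attained. Decomposable allocations are exactly the projections of the feasible IP region, by \Cref{prop:ilp-decomp}. That region is a finite union, over $x\in\{0,1\}^{n\times m}$, of compact polytopes. So it is compact, and it is nonempty since \GrdDec gives a decomposable allocation. Moreover, $w$ is continuous. Now:
\begin{itemize}
\item Given any decomposable $a$ with contributions $c$, \Cref{prop:ilp-decomp} supplies $x,u$ making $(x,c,u)$ feasible, and we may choose $u$ as above. So $W^\star\ge w(a)$.
\item Conversely, any feasible $(x,c,u)$ projects to a decomposable $a$, and its objective value is at most $w(a)\le W_D$.
\end{itemize}
Together these give $W^\star=W_D$.

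\textbf{Step 2: forward direction.} Suppose $a^\star$ is welfare-optimal decomposable with certifying contributions $c^\star$. Build $x^\star$ via \Cref{prop:ilp-decomp} and set $u^\star_{i,j}=\min(a^\star_j,p_{i,j})$. The objective value of this solution is $w(a^\star)=W_D=W^\star$, so it is an optimal IP solution, and its projection is $a^\star$.

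\textbf{Step 3: backward direction.} Suppose $(x^\star,c^\star,u^\star)$ is optimal for $\mathrm{IP}(P)$. Its projection $a^\star$ is decomposable by \Cref{prop:ilp-decomp}. Its welfare satisfies $w(a^\star)\ge$ objective value $=W^\star=W_D$, so $a^\star$ is welfare-optimal among decomposable allocations.

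The only delicate point is the objective-versus-welfare identity. The IP objective is only an upper-bounded proxy for welfare, so I must state explicitly that it is at most $w(a)$ for every feasible solution, with equality achievable. I expect no real obstacle beyond this and the attainment argument.
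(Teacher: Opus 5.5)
Your proposal is correct and follows the paper's route. The paper also derives the corollary from \Cref{prop:ilp-decomp}, using that constraints \eqref{eq:ilp-decomp-4} and \eqref{eq:ilp-decomp-5} force $u_{i,j}=\min(a_j,p_{i,j})$ at any optimum, so the objective equals welfare; you just make that one-line remark rigorous, and your compactness argument for attainment is harmless but unnecessary, since equality of the two suprema already suffices.
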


\subsection{Proof of \Cref{thm:welfare-optimal-decomposable-hardness}}\label{app:thmNPcomplete}

\thmNPcomplete*

\begin{figure}
    \centering
    \def\coordwidth{0.055}
    \def\coorddist{0.45}
    \def\posone{0.25}
    \def\epsmark{0.12}
    \def\postwo{\posone + \coorddist}
    \def\posthr{\postwo + \coorddist}
    \def\posfou{\posthr + \coorddist}
    \def\posfiv{\posfou + \coorddist}
    \newcommand{\drawvoteannotated}[5]{
        \draw[ultra thick, #3] (#1-\coordwidth,#2) -- (#1+\coordwidth,#2)
        node[pos=0, inner sep=1pt, anchor=east] {\small #4}
        node[pos=1, inner sep=2pt, anchor=west] {\small #5};
    }
    \newcommand{\drawcontribution}[5]{
        \fill[pattern={Lines[angle=#5*0.8,distance=4pt,line width=2pt]},pattern color=#4!70!white, draw=#4!50!white] (#1-\coordwidth,#2) rectangle (#1-\coordwidth+\coordwidth*2,#3);
    }
    \begin{tikzpicture}[yscale=4,xscale=6]
        \draw (0,0) -- (0,1);  
        \draw (-.02,0) -- (0.02,0); 
        \draw (-.02,1) -- (0.02,1); 
        \node[anchor=east, xshift=-5px] at (0,0) {$0$}; 
        \node[anchor=east, xshift=-5px] at (0,1) {$1$}; 
        \node[anchor=south, rotate=90] at (0,0.5) {\small Budget}; 
        \filldraw[fill=alternativebarcolor,draw=none] (\posone-\coordwidth,0) rectangle (\posone+\coordwidth,1); 
        \filldraw[fill=alternativebarcolor,draw=none] (\postwo-\coordwidth,0) rectangle (\postwo+\coordwidth,1); 
        \filldraw[fill=alternativebarcolor,draw=none] (\posthr-\coordwidth,0) rectangle (\posthr+\coordwidth,1); 
        \filldraw[fill=alternativebarcolor,draw=none] (\posfou-\coordwidth,0) rectangle (\posfou+\coordwidth,1); 
        \filldraw[fill=alternativebarcolor,draw=none] (\posfiv-\coordwidth,0) rectangle (\posfiv+\coordwidth,1); 
        \node[anchor=north, outer sep=5pt] (label1) at (\posone,0) {$X$};
        \node[anchor=north, outer sep=5pt] at (\postwo,0) {$Y$};
        \node[anchor=north, outer sep=5pt] at (\posthr,0) {$S$ (in cover)};
        \node[anchor=north, outer sep=5pt] at (\posfou,0) {$S$ (not in cover)};
        \node[anchor=north, outer sep=5pt] at (\posfiv,0) {$D_i$};
        \node[anchor=east, outer sep=0pt] at (label1.west) {\small Alternative:};
        \drawcontribution{\posone}{0}{0.75}{color2}{45};
        \drawcontribution{\postwo}{0}{0.4}{color4}{45};
        \drawcontribution{\posfou}{1/8}{1/4}{color4}{45};
        \drawcontribution{\posthr}{0}{1/8}{color1}{-45};
        \drawcontribution{\posfou}{0}{1/8}{color1}{-45};
        \drawcontribution{\posthr}{1/8}{0.05+\epsmark}{color3}{45};
        \drawcontribution{\posfiv}{0}{0.05+\epsmark/3}{color3}{45};
        \drawvoteannotated{\posone}{0.8}{color2}{$18q \times$}{$1-\frac{1}{n}+\frac{\varepsilon}{3}$};
        \drawvoteannotated{\posfiv}{0.05+\epsmark/3}{color2}{$6 \times$}{$\frac{1}{n}-\frac{\varepsilon}{3}$};
        \drawvoteannotated{\postwo}{0.4}{color4}{$k\times$}{$\frac{q}{n}$};
        \drawvoteannotated{\posthr}{0.6}{color4}{}{$1-\frac{q}{n}$};
        \drawvoteannotated{\posfou}{0.6}{color4}{}{$1-\frac{q}{n}$};
        \drawvoteannotated{\posthr}{1}{color1}{}{$1$};
        \drawvoteannotated{\posfou}{1}{color1}{}{$1$};
        \drawvoteannotated{\posthr}{0.05+\epsmark}{color3}{$3\times$}{$\varepsilon+\frac{1}{n}$};
        \drawvoteannotated{\posfou}{0.05+\epsmark}{color3}{$3\times$}{$\varepsilon+\frac{1}{n}$};
        \drawvoteannotated{\posfiv}{0.5}{color3}{}{$1-3\!\left(\varepsilon+\tfrac{1}{n}\right)$};
    \end{tikzpicture}
    \caption{Illustration of the construction used in the proof of Theorem~\ref{thm:welfare-optimal-decomposable-hardness}. Only non-zero votes are drawn and only representative examples of subset and dummy alternatives are shown. Blue line segments represent helper voters, orange line segments represent single-minded voters, red line segments represent toggle voters, and purple line segments represent element voters. Quantities to the left of the line segments indicate multiple voters at that position. Quantities to the right label the vote values. Colored hatching represents contributions of each voter type in the welfare-optimal decomposable allocation assuming that an exact cover exists. If an exact cover does not exist then the amount allocated to $D_i$ increases while the amount allocated to subset alternatives decreases in the welfare-optimal decomposable allocation.}
    \label{fig:reduction}
\end{figure}

We make use of the following problem for the reduction.

\begin{definition}[\textsc{ExactCoverBy3Sets} (X3C)]
    An instance consists of a finite ground set \(U\) with \(|U|=3q\) for some integer \(q\ge 1\), and a family \(\mathcal{S}=\{S_1,\dots,S_k\}\subseteq \binom{U}{3}\) of 3-element subsets of \(U\). The decision problem asks whether there exists a subfamily
    \(\mathcal{C}\subseteq \mathcal{S}\) of size \(q\) such that the sets in \(\mathcal{C}\) are pairwise disjoint and \(\bigcup_{S\in\mathcal{C}} S = U\) (i.e., \(\mathcal{C}\) is an exact cover of \(U\)).
\end{definition}

\begin{proof}
We reduce from \textsc{ExactCoverBy3Sets} (X3C).
Note that the restricted version of X3C in which every element of $U$ is contained in exactly three subsets, denoted RX3C, remains NP-complete~\citep{gonzalez1985clustering}.  Given an instance $(U, \mathcal{S})$ of RX3C we define the following budget aggregation instance. There are $m=3q+k+2$ alternatives: $k$ \emph{subset} alternatives $S_1, \ldots, S_k$, $3q$ \emph{dummy} alternatives $D_1, \ldots, D_{3q}$, one for each element, and two additional alternatives $X$ and $Y$. Note that in the RX3C problem it is necessarily the case that $k=3q$; however, we will continue to treat $k$ and $q$ as distinct to make the mapping to elements and subsets more clear.

There are $n=21q+2k \ge 21q \geq 21$ voters. 
We describe in the following the construction of a profile $P\in \mathcal{P}_{n,m}$. 
Let $0<\varepsilon<\frac{1}{n(q+1)}$ be a small positive constant. For every element $i \in U$, we consider an \emph{element} voter $e_i$ with $p_{e_i,S_j}=\varepsilon+\frac{1}{n}$ for each $j$ with $i \in S_j$ and $p_{e_i,D_i}=1-3(\varepsilon+\frac{1}{n})$. For every subset $S_i$, $i \in [m]$, we consider a \emph{toggle} voter $t_i$ with $p_{t_i,Y}=\frac{q}{n}$ and $p_{t_i,S_i}=1-\frac{q}{n}$. Additionally, for every element $i \in U$ we consider six additional \emph{helper} voters $h_{i,1},\ldots, h_{i,6}$ with $p_{h_{i,\ell},D_i}=\frac{1}{n}-\frac{\varepsilon}{3}$ and $p_{h_{i,\ell},X}=1-\frac{1}{n}+\frac{\varepsilon}{3}$. Finally, for every subset $S_i$, we consider a single-minded voter $s_i$ with $p_{s_i,S_i}=1$. All other votes are zero.

We show that there exists a decomposable solution to the budget aggregation instance defined above with welfare at least $\frac{324q^{2} + kq + 28q + 4k}{n} \;+\; 4q\,\varepsilon$ if and only if the RX3C instance has an exact cover.

Suppose that there is an exact cover $\mathcal{C}$ for the RX3C instance. We claim that the following decomposable allocation $a^\star$ achieves the required welfare: allocate $a^\star_X = \frac{18q}{n}$ to alternative $X$, allocate $a^\star_Y = \frac{q}{n}$ to alternative $Y$, allocate $a^\star_{D_i} = \frac{1}{n}-\frac{\varepsilon}{3}$ to each alternative $D_i$, allocate $a^\star_{S_i}=\frac{1}{n}+\varepsilon$ to every $S_i \in \mathcal{C}$, and allocate $a^\star_{S_i}=\frac{2}{n}$ to every $S_i \not \in \mathcal{C}$. 

To see that $a^\star$ is decomposable, we can define the following voter contributions. All helper voters $h_{i,\ell}$, $i \in [3q], \ell \in [6]$, contribute all their $1/n$ mass to alternative $X$, i.e., $c_{h_{i,j},X}=1/n$. All single-minded voters $s_i$ contribute all of their $1/n$ mass to alternative $S_i$, i.e., $c_{s_i,S_i}=1/n$. For all $i$ with $S_i \in \mathcal{C}$, toggle voters $t_i$ contribute all $1/n$ of their mass to alternative $Y$, i.e., $c_{t_i,Y}=1/n$ (note that there are $q$ such voters). For all $i$ with $S_i \not \in \mathcal{C}$, toggle voters $t_i$ contribute all $1/n$ of their mass to alternative $S_i$, i.e., $c_{t_i, S_i}=1/n$ (note that there are $k-q$ such voters). The element voters $e_i$ contribute $\varepsilon/3$ of their mass to the alternative $S_j$ corresponding to the set that contains them in the exact cover and the remaining $1/n-\varepsilon/3$ of their mass to alternative $D_i$, i.e., $c_{e_i,S_j}=\varepsilon/3$ for $i \in S_j \in \mathcal{C}$ and $c_{e_i,d_i}=1/n-\varepsilon/3$. Note that, by definition of an exact cover, there exists exactly one such $S_j$ with $i \in S_j \in \mathcal{C}$. Notice that every voter contributes $1/n$ in total and that the sum of contributions on each alternative is equal to that alternative's allocation. It remains to check that $a^\star_j \le p_{i,j}$ for all voters $i$ and all alternatives $j$ with $c_{i,j}>0$:
\begin{itemize}
    \item Helper voters $h_{i,\ell}$ contribute positively only to alternative $X$, and we have $a^\star_X = 18q/n \le 18/21 < 1-\frac{1}{n}+\frac{\varepsilon}{3}=p_{h_{i,j},X}$.
    \item Single-minded voters $s_i$ contribute positively only to $S_i$, for which they have $p_{s_i,S_i}=1$.
    \item For all $i$ with $S_i \in \mathcal{C}$, toggle voters $t_i$ contribute positively only to alternative $Y$, and we have $a^\star_Y = q/n =p_{t_i,Y}$.
    \item For all $i$ with $S_i \not \in \mathcal{C}$, toggle voters $t_i$ contribute positively only to alternative $S_i$, and we have $a^\star_{S_i} = 2/n < 20/21 \le 1-q/n = p_{t_i,S_i}$.
    \item Element voters $e_i$ contribute positively to an alternative $S_j \in \mathcal{C}$, for which $a^\star_{S_j}=\varepsilon+1/n=p_{e_i,S_j}$, and to alternative $D_i$, for which $a^\star_{D_i} = 1/n-\varepsilon/3 <15/21 \le 1-3(\varepsilon+1/n) = p_{e_i,D_i}$.
\end{itemize}

 The utility of every helper voter is $18q/n+1/n-\varepsilon/3$, of every element voter is $3(\varepsilon+1/n)+1/n-\varepsilon/3$, of every toggle voter $t_i$ is $q/n+\varepsilon+1/n$ if $S_i \in \mathcal{C}$ and $q/n+2/n$ if $S_i \not \in \mathcal{C}$, and of every single-minded voter $s_i$ is $\varepsilon+1/n$ if $S_i \in \mathcal{C}$ and $2/n$ if $S_i \not \in \mathcal{C}$. Thus, the welfare of $a^\star$ is 
 \begin{align*}
     w(a^\star) ={} & 18q\bigg(\frac{18q}{n}+\frac{1}{n}-\frac{\varepsilon}{3}\bigg) + 3q\bigg(3\bigg(\varepsilon+\frac{1}{n}\bigg)+\frac{1}{n}-\frac{\varepsilon}{3} \bigg) +q \bigg(\frac{q}{n}+\varepsilon+\frac{1}{n}\bigg) + (k-q)\bigg(\frac{q}{n} + \frac{2}{n}\bigg) \\
     & + q\bigg(\varepsilon+\frac{1}{n}\bigg) + (k-q)\cdot \frac{2}{n} \\ 
     ={} &\frac{324q^{2} + kq + 28q + 4k}{n} \;+\; 4q\,\varepsilon
 \end{align*}

We now show that on any RX3C instance for which there does not exist an exact cover, all decomposable budget allocations must achieve strictly lower welfare than $\frac{324q^{2} + kq + 28q + 4k}{n} \;+\; 4q\,\varepsilon$. Note that, because decomposability implies that only $q/n$ mass can be allocated to alternative $Y$, at least $(k-q)/n$ mass from the toggle voters $t_1, \ldots, t_k$ must be allocated among subset alternatives $S_1, \ldots, S_k$. Furthermore, at most $1/n$ of the mass can be placed on each subset alternative (since no two toggle voters like the same one). In particular, this implies that at least $k-q$ subset alternatives receive more than $\varepsilon$ mass from toggle voters. To see this, suppose otherwise: that at most $k-q-1$ subset alternatives receive more than $\varepsilon$ mass from toggle voters. Then the total amount of mass allocated among subset alternatives from toggle voters is at most 
\[
\frac{1}{n}(k-q-1) + \varepsilon(q+1) < \frac{k-q}{n}-\frac{1}{n}+\frac{1}{n}=\frac{k-q}{n},
\]
where the inequality follows from $\varepsilon<\frac{1}{n(q+1)}$.

Note that every subset alternative $S_j$ receives $1/n$ mass from the corresponding single-minded voter $s_j$ in any decomposable allocation. Thus, if alternative $S_j$ additionally receives more than $\varepsilon$ mass from toggle voter $t_j$, decomposability prohibits it from receiving positive mass from any element voter $e_i$ such that $i\in S_j$, each of whom has $p_{e_i,S_j}=1/n+\varepsilon$. Therefore, there are at most $q$ subset alternatives that can receive mass from element voters. If there is no exact cover, then it is by definition the case that among any set of at most $q$ subsets, there exists at least one element $i$ not contained within any of the subsets. Therefore, the corresponding element voter $e_i$ must allocate all of its $1/n$ mass to alternative $D_i$. 

Subject to allocating at most $18q/n$ to alternative $X$ (as required by decomposability) and exactly $1/n$ to alternative $D_i$ (as argued above), the welfare-optimal decomposable allocation $a$ allocates $18q/n$ to alternative $X$, $q/n$ to alternative $Y$, $1/n-\varepsilon/3$ to all alternatives $D_j$ with $j \neq i$, and some amount $x$ with $1/n+\varepsilon \le x \le 2/n$ to each subset alternative $S_j$.

To justify optimality, observe that welfare can be maximized greedily by allocating budget to segments with the highest marginal contribution. Every unit of spending on $X$ contributes to the utility of $18q$ voters; every unit of spending up to $q/n$ on $Y$ contributes to the utility of $k$ voters; every unit of spending up to $1/n-\varepsilon/3$ on a dummy alternative contributes to the utility of $7$ voters; and every unit of spending up to $1/n+\varepsilon$ on a subset alternative contributes to the utility of $5$ voters. The allocation therefore saturates all such regions subject to decomposability. Any remaining budget is optimally spent on subset alternatives in the range $(1/n+\varepsilon,\,1-q/n]$, where each unit contributes to the utility of only $2$ voters. All other admissible marginal spending contributes to the utility of at most one voter and is therefore minimized, with the unavoidable exception of the mass allocated to $D_i$.

Computing the welfare of this allocation yields
\[
w(a)=\frac{324q^{2} + kq + 28q + 4k}{n} \;+\; \Bigl(4q-\frac{1}{3}\Bigr)\varepsilon
\;<\;
\frac{324q^{2} + kq + 28q + 4k}{n} \;+\; 4q\,\varepsilon.
\]
In particular, $w(a)=w(a^\star)-\varepsilon/3$, reflecting the fact that an amount $\varepsilon/3$ of budget is shifted from a segment of a subset alternative approved by two voters to a dummy alternative approved by only one voter.
\end{proof}

\subsection{Proof of \Cref{thm:greedy-decomp-two}}\label{app:thmGreedyDecompTwo}
\thmGreedyDecompTwo*

For this proof, we introduce some additional notation. We define an \textit{allocation region}
$A\subseteq [0,1)^m$ to be a vector of subsets, where each subset $A_j$ is a disjoint union of intervals. An allocation $a$ then corresponds to the allocation region $A$ with $A_j = [0, a_j)$. We define the intersection and exclusion of two allocation regions $A$ and $A'$ as $A \cap A' = (A_j \cap A_j')_{j\in [m]}$ and $A \setminus A' = (A_j \setminus A_j')_{j\in [m]}$. Even though allocation regions that are not of the form $[0, x)$ for each alternative do not have an intuitive interpretation, they can be a useful tool for computing the welfare of an allocation. To this end, for a disjoint union of $x$ intervals $X = \dot{\bigcup}_{k = 1}^x X_k$, we let $|X|$ denote the Lebesgue measure of $X$, which corresponds to the sum of the lengths of the sub-intervals, i.e., $|X| = \sum_{k = 1}^x |X_k|$. We then define the welfare of an allocation region $A$ as $w(A) = \sum_{j \in [m]} \sum_{k \in [n]} (n-k+1) \cdot |[\mu^{k-1}_j, \mu^k_j) \cap A_j|$, where we set $\mu^j_0 = 0$ for all $j \in [m]$. Note that for a given allocation $a$ and its corresponding allocation region $A$, we indeed have
\begin{align*}
    w(A) &= \sum_{j \in [m]} \sum_{k \in [n]} (n-k+1) \cdot |[\mu^{k-1}_j, \mu^k_j) \cap A_j|
    = \sum_{j \in [m]} \sum_{k \in [n]} |[0, \mu^k_j) \cap A_j| \\
    & = \sum_{j \in [m]} \sum_{k \in [n]} \min(\mu^k_j, a_j) 
    = \sum_{i \in [n]} \sum_{j \in [m]} \min(p_{i,j}, a_j)
    = \sum_{i \in [n]} u_i(a) = w(a).
\end{align*}

\begin{figure}
    \centering
    \def\coorddist{0.5}
    \def\coordwidth{0.06}
    \newcommand{\drawaggregatearea}[5]{
        \fill[pattern={Lines[angle=#4*0.8,distance=4pt,line width=2pt]},pattern color=#3!70!white, draw=#3!60!white] (#1*\coorddist-\coordwidth,0) rectangle (#1*\coorddist+\coordwidth,#2);
    }
    \newcommand{\drawaggregatelineannotated}[5]{
        \draw[#3] (#1*\coorddist-\coordwidth,#2) -- (#1*\coorddist+\coordwidth,#2) node[anchor=west] {\small #5};
    }
    \begin{tikzpicture}[yscale=3,xscale=6]
        \draw (0,0) -- (0,1);  
        \draw (-.02,0) -- (0.02,0); 
        \draw (-.02,1) -- (0.02,1); 
        \node[anchor=east, xshift=-5px] at (0,0) {$0$}; 
        \node[anchor=east, xshift=-5px] at (0,1) {$1$}; 
        \node[anchor=south, rotate=90] at (0,0.5) {\small Budget}; 
        \filldraw[fill=alternativebarcolor,draw=none] (1*\coorddist-\coordwidth,0) rectangle (1*\coorddist+\coordwidth,1); 
        \filldraw[fill=alternativebarcolor,draw=none] (2*\coorddist-\coordwidth,0) rectangle (2*\coorddist+\coordwidth,1); 
        \filldraw[fill=alternativebarcolor,draw=none] (3*\coorddist-\coordwidth,0) rectangle (3*\coorddist+\coordwidth,1); 
        \node[anchor=north east, outer sep=4.4pt] at (1*\coorddist,0) {\small Alternative:};
        \node[anchor=north, outer sep=5pt] at (1*\coorddist,0) {$1$};
        \node[anchor=north, outer sep=5pt] at (2*\coorddist,0) {$2$};
        \node[anchor=north, outer sep=5pt] at (3*\coorddist,0) {$3$};
        \drawaggregatearea{1}{0.3}{darkgray}{-45}{$a_1$}
        \drawaggregatearea{2}{0.1}{darkgray}{-45}{$a_2$}
        \drawaggregatearea{3}{0.6}{darkgray}{-45}{$a_3$}
        \drawaggregatearea{1}{0.5}{lightgray}{45}{$a_1'$}
        \drawaggregatearea{2}{0.2}{lightgray}{45}{$a_2'$}
        \drawaggregatearea{3}{0.3}{lightgray}{45}{$a_3'$}
        \drawaggregatelineannotated{1}{0.3}{darkgray}{-45}{$a_1$}
        \drawaggregatelineannotated{2}{0.1}{darkgray}{-45}{$a_2$}
        \drawaggregatelineannotated{3}{0.6}{darkgray}{-45}{$a_3$}
        \drawaggregatelineannotated{1}{0.5}{lightgray}{45}{$a_1'$}
        \drawaggregatelineannotated{2}{0.2}{lightgray}{45}{$a_2'$}
        \drawaggregatelineannotated{3}{0.3}{lightgray}{45}{$a_3'$}
        \draw [black, very thick] (1*\coorddist-\coordwidth-0.02,0.3) -- (1*\coorddist-\coordwidth-0.02,0.5) node[midway, anchor=east] {\small $(A' \setminus A)_1$};
        \draw [black, very thick] (2*\coorddist-\coordwidth-0.02,0.1) -- (2*\coorddist-\coordwidth-0.02,0.2) node[midway, anchor=east] {\small $(A' \setminus A)_2$};
    \end{tikzpicture}
    \caption{Illustration of example allocation regions. The exclusion of allocation regions $A$ (dark gray, downwards hatching) and $A'$ (light gray, upwards hatching) corresponding to the allocations $a$ and $a'$, respectively, is given as $(A' \setminus A) = ([a_1, a_1'), [a_2, a_2'), \emptyset)$.}
    \label{fig:allocation_intervals}
\end{figure}

\begin{proof}[Proof of \Cref{thm:greedy-decomp-two}]
    Let $n,m$ be fixed, $P \in \mathcal{P}_{n,m}$ be a profile, $a^\optd$ be the output of \UtDec, and $a^\GrdDecs$ the output of \GrdDec for $P$. 
    Let $c^\optd_i$ for $i \in [n]$ be the voter contributions certifying the decomposability of $a^\optd$; i.e., $\sum_{i \in [n]} c^\optd_{ij} = a^\optd_j$ for each $j \in [m]$ and $a^\optd_j \le p_{i,j}$ whenever $c^\optd_{ij} > 0$. 

    Now, let $A^\optd$ and $A^\GrdDecs$ be the allocation regions corresponding to $a^\optd$ and  $a^\GrdDecs$, respectively.
    We %
    assign each voter $i\in [n]$ the allocation region $C^\optd_i = \big[\sum_{i' \in [i-1]} c^\optd_{ij}, \sum_{i' \in [i]} c^\optd_{ij}\big)$. Intuitively, this corresponds to letting the voters add their contribution to the allocation $a^\optd$ in order. We also construct the allocation regions $C^\GrdDecs_i$ for $i \in [n]$ by tracking the order in which the voters spend their budget during the execution of $\GrdDec$. We initialize all voters regions as $C^\GrdDecs_i = (\emptyset)_{j \in [m]}$. Whenever a voter's payment gets deducted from their budget in \Cref{alg_line:update_voter_budget} of \Cref{alg:greedy_decomposable}, we add the interval $[\Tilde{a}_j, \Tilde{a}_j+\pi_{i,j}(\tau^\star))$ to $C^\GrdDecs_{ij}$.\footnote{If $\pi_{i,j}(\tau^\star) = 0$ we do not add anything.}

    We decompose $A^\optd$ into two disjoint allocation regions $A^\optd \setminus A^\GrdDecs$ and $A^\optd \cap A^\GrdDecs$ and upper bound the additional welfare $w(A^\optd \setminus A^\GrdDecs)$ achieved by \UtDec over \GrdDec. To this end, we further decompose $A^\optd \setminus A^\GrdDecs$ by considering each voter's contribution to it, computing its welfare as $w(A^\optd \setminus A^\GrdDecs) = \sum_{i \in [n]} w((A^\optd \setminus A^\GrdDecs) \cap C^\optd_i)$. It is easy to see that for each alternative $j$ the sets $C^\GrdDecs_{ij}$ of all voters form a disjoint partition of $A^\GrdDecs_j = [0, a^\GrdDecs_j)$ and that the contributions $c^\GrdDecs_{ij} = |C^\GrdDecs_{ij}|$ certify the decomposability of $a^\GrdDecs$ (see proof of \Cref{prop:greedydecomp_decomposable}).
    
    We will argue that, for each voter, their contribution to the welfare of $A^\optd \setminus A^\GrdDecs$ is at most as big as their contribution $w(C^\GrdDecs_i)$ to the welfare of $A^\GrdDecs$. 
    To see why, we first fix a voter $i$. If $(A^\optd \setminus A^\GrdDecs) \cap C^\optd_i = (\emptyset)_{j \in [m]}$, then voter $i$ does not contribute to $A^\optd \setminus A^\GrdDecs$ and thus $w((A^\optd \setminus A^\GrdDecs) \cap C^\optd_i) = 0 \le w(C^\GrdDecs_i)$ trivially holds. Otherwise, let $k \in [n]$ be the minimum value such that $[\mu^{k-1}_j, \mu^k_j) \cap (A^\optd_j \setminus A^\GrdDecs_j) \cap C^\optd_{ij}$ is non-empty for some alternative $j$. Then, the welfare contribution of that voter to $A^\optd_j \setminus A^\GrdDecs_j$ is upper-bounded by 
    \begin{align*}
        w\big((A^\optd \setminus A^\GrdDecs) \cap C^\optd_i\big) 
        &= \sum_{j \in [m]} \sum_{k' = k}^{n} (n-k'+1) \cdot \big|\big[\mu^{k'-1}_j, \mu^{k'}_j\big) \cap \big(A^\optd_j \setminus A^\GrdDecs_j\big) \cap C^\optd_{ij}\big| \\
        &\le \sum_{j \in [m]} \sum_{k' = k}^{n} (n-k'+1) \cdot \big|\big[\mu^{k'-1}_j, \mu^{k'}_j\big) \cap C^\optd_{ij}\big| \\
        &\le \sum_{j \in [m]} (n-k+1) \cdot \big|\big[\mu^{k-1}_j, 1\big) \cap C^\optd_{ij}\big| \\
        &\le \sum_{j \in [m]} (n-k+1) \cdot |C^\optd_{ij}| = \frac{n-k+1}{n}.    
    \end{align*}

    Since $[\mu^{k-1}_j, \mu^k_j) \cap (A^\optd_j \setminus A^\GrdDecs_j)$ is non-empty, we know that \GrdDec spends less than $\mu^k_j$ on alternative $j$.  
    Now, consider the $k^\text{th}$ outer iteration of \GrdDec after the last inner iteration with $\tau^\star=1$. At that time, we have $\Tilde{a}_j \le a^\GrdDecs < \mu^k_j$ and $\Tilde{a}_j \le a^\GrdDecs < a^\optd \le p_{i,j}$. The only reason \GrdDec did not buy more of alternative $j$ is that all voters above $\Tilde{a}_j$ have run out of budget. In particular, we thus know that voter $i$ spent their whole budget on intervals below the $k^\text{th}$ lowest vote, implying $w(C^\GrdDecs_i) \ge \frac{n-k+1}{n} \ge w((A^\optd\setminus A^\GrdDecs) \cap C^\optd_i)$.

    Before we conclude, we state two more observations that follow from the range respect of \GrdDec. First, the welfare from $A^\GrdDecs \setminus A^\optd$ is at least $|A^\GrdDecs \setminus A^\optd|$. Second, since $a^\GrdDecs_j \ge \mu^1_j$ for all alternatives $j$, the welfare of $A^\optd \setminus A^\GrdDecs$ is at most $(n-1) \cdot |A^\optd \setminus A^\GrdDecs| = (n-1) \cdot |A^\GrdDecs \setminus A^\optd|$.
    Thus, $w(A^\GrdDecs \setminus A^\optd) \ge \frac{1}{n-1} w(A^\optd \setminus A^\GrdDecs)$ and therefore
    
    \begin{align*}
        w(a^\optd) 
        & = w(A^\optd \cap A^\GrdDecs) + w(A^\optd \setminus A^\GrdDecs)  \\
        & = w(A^\GrdDecs) - w(A^\GrdDecs \setminus A^\optd) +  w(A^\optd \setminus A^\GrdDecs)  \\
        & \le w(A^\GrdDecs) + \frac{n-2}{n-1} w(A^\optd \setminus A^\GrdDecs)  \\
        & = w(A^\GrdDecs) + \frac{n-2}{n-1} \sum_{i \in [n]} w((A^\optd \setminus A^\GrdDecs) \cap C^\optd_i) \\
        & \le w(A^\GrdDecs) + \frac{n-2}{n-1} \sum_{i \in [n]} w(C^\GrdDecs_i) \\
        & \le \bigg(2 - \frac{1}{n-1}\bigg) \cdot w(a^\GrdDecs).\qedhere
    \end{align*}
\end{proof}

\subsection{Lower Bound on Welfare Ratio Between \GrdDec and \UtDec}\label{app:lbGrdDec}

\begin{proposition}\label{prop:lbGrdDec}
    For every $n\in \N$ with $n\geq 4$ even, there are profiles with $n$ voters in which \UtDec achieves $2-\frac{6}{n+4}$ times the welfare that \GrdDec achieves.
\end{proposition}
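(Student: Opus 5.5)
The plan is to generalize \Cref{ex:greedydecomp_social_welfare_small} into a parametrized family with $m = n+1$ alternatives, as announced after \Cref{thm:greedy-decomp-two}, and then tune the parameters. The mechanism exploited in the example is this: \GrdDec greedily lets a large, mutually agreeing group $B$ spend its budget on alternatives that individual voters also like up to a low level. This exhausts $B$ early and forces the individual voters to dump their budget on private alternatives approved by only one voter. A better decomposable allocation instead lets the individual voters cover the shared low region, so that $B$ can spend on an extra alternative approved by all of $B$.

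Concretely, with $h = n/2$, I would take $h$ individual voters $i \in [h]$, each voting $1-y$ on a private alternative $i$ and $y$ on a shared alternative $h+i$. I would take $h$ voters in group $B$, each voting $x$ on alternatives $h+1,\dots,2h$ and $1-hx$ on a last alternative $2h+1$. The parameters satisfy $y < x$ and $hy$ is roughly the budget of $B$. This differs from the example in that $B$'s vote on the last alternative is decoupled from its vote on the shared ones; with uniform votes the ratio only tends to $1$, so the decoupling is what I hope drives the ratio towards $2$.

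The steps are as follows.
\begin{enumerate}
\item Trace \Cref{alg:greedy_decomposable} on this profile and show that it charges $B$ (the highest voters above $\Tilde{a}$) for raising the shared alternatives to $y$. I would choose $hy = \tfrac12$ exactly, so that $B$ is exhausted precisely at the end of this phase.
\item Show that the individual voters must then put their remaining $1/n$ on their private alternatives, since every other alternative they approve is already at their vote. Compute $w(a^\GrdDecs)$ from this.
\item Exhibit explicit contributions for a better allocation. Each individual voter contributes up to $y$ on its shared alternative and a small rest $c$ privately. The group $B$ tops up the shared alternatives to at most $x$ and spends the remainder on alternative $2h+1$, capped at its vote $1-hx$. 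Check the decomposability inequalities $a_j \le p_{i,j}$ line by line, as in the example.
\item Compute the welfare of this allocation. It only lower-bounds \UtDec, so the statement's ratio must be attained by choosing $x$ and $y$ as functions of $n$ and checking equality with $2 - \frac{6}{n+4}$.
\end{enumerate}

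The main obstacle is step~4, that is, choosing $x$ and $y$ so that the ratio is exactly $2-\frac{6}{n+4}$. The statement claims equality, so I must also argue that no decomposable allocation does better; otherwise \UtDec could achieve a different value. I would first optimize the ratio over $(x,y)$ symbolically. Because the ratio must be exact, I expect to have to adjust the family (for instance by adding single-minded voters on the last alternative or splitting $B$) until the optimum matches $2-\frac{6}{n+4}$. I would then prove optimality of the exhibited allocation by the greedy marginal-approval argument used at the end of the proof of \Cref{thm:welfare-optimal-decomposable-hardness}: rank the vote segments by the number of approving voters, and show that decomposability caps the allocation on each alternative at exactly the values used.
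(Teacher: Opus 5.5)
Your family is the paper's construction. With $h=n/2$, $y=\frac1n$ and $x=\frac{1+\varepsilon}{n}$ it is exactly the paper's profile: the $B$-voters then vote $1-hx=\frac{1-\varepsilon}{2}$ on alternative $n+1$. Your steps 1 and 2 match the paper's trace of \GrdDec, which exhausts $B$ in outer iteration $k=h$ and returns $(\frac1n,\dots,\frac1n,0)$ with welfare $\frac n4+1$.

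The gap is step 4, which you leave open and set up incorrectly. It starts in step 3. The $B$-voters cannot top up shared alternative $h+i$ to anything above $y$, because voter $i$ contributes there and votes only $y$. So $B$ pays $hc$ on the shared alternatives and must place its leftover $\frac12-hc$ on alternative $n+1$, under the cap $1-hx$; this forces $c\ge x-y$. The welfare of the resulting allocation is $\frac{n+1}{2}-h(h-1)c$, which is decreasing in $c$, so the right choice is $c=x-y=\frac{\varepsilon}{n}$. That is the paper's allocation $a'$, with welfare $\frac n2+\frac12-\varepsilon(\frac n4-\frac12)$. Letting $\varepsilon\to0$ gives the ratio $\frac{2(n+1)}{n+4}=2-\frac{6}{n+4}$.

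The target value is a supremum, not an attained value. Your plan to force exact equality, by proving optimality and, if needed, adding single-minded voters or splitting $B$, therefore chases something this construction neither needs nor delivers.
\begin{itemize}
\item You cannot plug in $\varepsilon=0$. If $x=y$, voter $i$ ties with $B$ in $N^+_{h+i}$, so \GrdDec splits the charge, $B$ keeps budget, and $B$ funds alternative $n+1$.
\item For each fixed $\varepsilon\in(0,\frac2n)$ one can check that $a'$ is in fact welfare-optimal among decomposable allocations. The exact ratio is then strictly below $2-\frac{6}{n+4}$.
\end{itemize}
The statement must be read, as the paper proves it, as ``arbitrarily close to $2-\frac{6}{n+4}$''. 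For that, any decomposable allocation gives a valid lower bound on \UtDec, so no optimality argument is needed.
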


\begin{proof}
    Let $n\geq 4$ be an even number and $\varepsilon > 0$.
    Consider a profile with $n$ voters and $m=n+1$ alternatives. Voters $i$ from $1$ to $\frac{n}{2}$ vote $\frac{n-1}{n}$ on alternative $i$ and $\frac{1}{n}$ on alternative $i+\frac{n}{2}$. All remaining voters vote $\frac{1+\varepsilon}{n}$ on alternatives $\frac{n}{2}+1$ to $n$ and $\frac{1-\varepsilon}{2}$ on alternative $n+1$. \Cref{fig:greedydecomp_social_welfare} illustrates this profile.

    \GrdDec first distributes the entire budget from the second half of the voters over alternatives $\frac{n}{2}+1$ to $n$ and then has to spend the budget of each remaining voter $i$ fully on alternative $i$ (compare \Cref{fig:greedydecomp_social_welfare_grdydecomp}), leading to the allocation $a = \big(\frac{1}{n}, \dots, \frac{1}{n}, 0\big)$ with welfare $w(a) = \frac{n}{2} \cdot \frac{1}{n} + \frac{n}{2} \cdot \big(\frac{n}{2} + 1\big) \cdot \frac{1}{n} = \frac{n}{4} + 1$.

    A better decomposable allocation is $a' = \big(\frac{\varepsilon}{n}, \dots, \frac{\varepsilon}{n}, \frac{1}{n}, \dots, \frac{1}{n}, \frac{1-\varepsilon}{2}\big)$ with welfare $w(a') = \frac{n}{2} \cdot \frac{\varepsilon}{n} + \frac{n}{2} \cdot \big(\frac{n}{2} + 1\big) \cdot \frac{1}{n} + \frac{n}{2} \cdot \frac{1-\varepsilon}{2} = \frac{n}{2} + \frac{1}{2} - \varepsilon \cdot \big(\frac{n}{4} - \frac{1}{2}\big)$. 
    To see that $a'$ is decomposable, consider the voter contributions (compare \Cref{fig:greedydecomp_social_welfare_opt}) $c\in [0,1]^{n\times m}$ given by 
    \begin{align*}
        c_{i,j} & = \begin{cases} \frac{\varepsilon}{n} & \text{if } j=i,\\
        \frac{1-\varepsilon}{n} & \text{if } j=i+\frac{n}{2},\\
        0 & \text{otherwise,}\end{cases} \quad \text{for } i \in \textstyle [\frac{n}{2}], j\in [m],\\ 
        c_{i,j} & =  \begin{cases} \frac{\varepsilon}{n} & \text{if } j=i,\\
        \frac{1-\varepsilon}{n} & \text{if } j=n+1,\\
        0 & \text{otherwise,}\end{cases} \quad \text{for } i\in \textstyle \{\frac{n}{2}+1,\ldots,n\}, j\in [m].
    \end{align*}
    
    Thus, \GrdDec achieves a welfare approximation with respect to \UtDec no better than 
    \[
        \frac{\frac{n}{2} + \frac{1}{2} - \varepsilon \cdot \big(\frac{n}{4} - \frac{1}{2}\big)}{\frac{n}{4} + 1} \stackrel{\varepsilon \rightarrow 0}{=} \frac{\frac{n}{2} + \frac{1}{2}}{\frac{n}{4} + 1} = 2 - \frac{6}{n+4}.\qedhere
    \]
\end{proof}

\begin{figure}
    \centering
    \def\posone{0.2}
    \def\postwo{\posone + 0.15}
    \def\posthr{\postwo + 0.15}
    \def\posfou{\posthr + 0.33}
    \def\posfiv{\posfou + 0.18}
    \def\possix{\posfiv + 0.17}
    \def\possev{\possix + 0.33}
    \def\coordwidth{0.05}
    \newcommand{\drawvoteannotated}[5]{
        \draw[ultra thick, #3] (#1-\coordwidth,#2) -- (#1+\coordwidth,#2)
        node[pos=0, inner sep=0pt, anchor=east] {\small #4}
        node[pos=1, inner sep=0pt, anchor=west] {\small #5};
    }
    \newcommand{\drawcontribution}[7]{
        \fill[pattern={Lines[angle=#5*0.8,distance=4pt,line width=2pt]},pattern color=#4!80!white, draw=#4!60!white] (#1-\coordwidth+#6*\coordwidth*2,#2) rectangle (#1-\coordwidth+#7*\coordwidth*2,#3);
    }
    \begin{subfigure}[t]{.8\textwidth}
        \begin{tikzpicture}[yscale=3,xscale=7]
            \draw (0,0) -- (0,1);  
            \draw (-.02,0) -- (0.02,0); 
            \draw (-.02,1) -- (0.02,1); 
            \node[anchor=east, xshift=-5px] at (0,0) {$0$}; 
            \node[anchor=east, xshift=-5px] at (0,1) {$1$}; 
            \node[anchor=south, rotate=90] at (0,0.5) {\small Budget}; 
            \filldraw[fill=alternativebarcolor,draw=none] (\posone-\coordwidth,0) rectangle (\posone+\coordwidth,1); 
            \node at (\postwo,0.5) {$\cdots$}; 
            \filldraw[fill=alternativebarcolor,draw=none] (\posthr-\coordwidth,0) rectangle (\posthr+\coordwidth,1); 
            \filldraw[fill=alternativebarcolor,draw=none] (\posfou-\coordwidth,0) rectangle (\posfou+\coordwidth,1); 
            \filldraw[fill=alternativebarcolor,draw=none] (\possix-\coordwidth,0) rectangle (\possix+\coordwidth,1); 
            \node at (\posfiv,0.5) {$\cdots$}; 
            \filldraw[fill=alternativebarcolor,draw=none] (\possev-\coordwidth,0) rectangle (\possev+\coordwidth,1); 
            \node[anchor=north, outer sep=5pt] (label1) at (\posone,0) {$1$};
            \node[anchor=north, outer sep=5pt] at (\postwo,0) {\phantom{2}$\cdots$\phantom{2}};
            \node[anchor=north, outer sep=5pt] at (\posthr,0) {$\frac{n}{2}$};
            \node[anchor=north, outer sep=5pt] at (\posfou,0) {$\frac{n}{2}+1$};
            \node[anchor=north, outer sep=5pt] at (\posfiv,0) {\phantom{5}$\cdots$\phantom{5}};
            \node[anchor=north, outer sep=5pt] at (\possix,0) {$n$};
            \node[anchor=north, outer sep=5pt] at (\possev,0) {$n+1$};
            \node[anchor=east] at (label1.west) {\small Alternative:};
            \drawcontribution{\posone}{0}{0.22}{color2}{45}{0}{1};
            \drawcontribution{\posthr}{0}{0.22}{color3}{45}{0}{1};
            \drawcontribution{\posfou}{0}{0.22}{color1}{-45}{0}{1};
            \drawcontribution{\possix}{0}{0.22}{color1}{-45}{0}{1};
            \drawvoteannotated{\posone}{0.78}{color2}{}{$\frac{n-1}{n}$};
            \drawvoteannotated{\posthr}{0.78}{color3}{}{$\frac{n-1}{n}$};
            \drawvoteannotated{\posfou}{0.26}{color1}{$\frac{n}{2}\times$}{$\frac{1+\varepsilon}{n}$};
            \drawvoteannotated{\posfou}{0.22}{color2}{}{};
            \drawvoteannotated{\possix}{0.26}{color1}{$\frac{n}{2}\times$}{};
            \drawvoteannotated{\possix}{0.22}{color3}{}{$\frac{1}{n}$};
            \drawvoteannotated{\possev}{0.5}{color1}{$\frac{n}{2}\times$}{$\frac{1-\varepsilon}{2}$};
        \end{tikzpicture}
        \caption{Output of \GrdDec.} \label{fig:greedydecomp_social_welfare_grdydecomp}
    \end{subfigure}
    \begin{subfigure}[t]{.8\textwidth}
        \begin{tikzpicture}[yscale=3,xscale=7]
            \draw (0,0) -- (0,1);  
            \draw (-.02,0) -- (0.02,0); 
            \draw (-.02,1) -- (0.02,1); 
            \node[anchor=east, xshift=-5px] at (0,0) {$0$}; 
            \node[anchor=east, xshift=-5px] at (0,1) {$1$}; 
            \node[anchor=south, rotate=90] at (0,0.5) {\small Budget}; 
            \filldraw[fill=alternativebarcolor,draw=none] (\posone-\coordwidth,0) rectangle (\posone+\coordwidth,1); 
            \node at (\postwo,0.5) {$\cdots$}; 
            \filldraw[fill=alternativebarcolor,draw=none] (\posthr-\coordwidth,0) rectangle (\posthr+\coordwidth,1); 
            \filldraw[fill=alternativebarcolor,draw=none] (\posfou-\coordwidth,0) rectangle (\posfou+\coordwidth,1); 
            \filldraw[fill=alternativebarcolor,draw=none] (\possix-\coordwidth,0) rectangle (\possix+\coordwidth,1); 
            \node at (\posfiv,0.5) {$\cdots$}; 
            \filldraw[fill=alternativebarcolor,draw=none] (\possev-\coordwidth,0) rectangle (\possev+\coordwidth,1); 
            \node[anchor=north, outer sep=5pt] (label1) at (\posone,0) {$1$};
            \node[anchor=north, outer sep=5pt] at (\postwo,0) {\phantom{2}$\cdots$\phantom{2}};
            \node[anchor=north, outer sep=5pt] at (\posthr,0) {$\frac{n}{2}$};
            \node[anchor=north, outer sep=5pt] at (\posfou,0) {$\frac{n}{2}+1$};
            \node[anchor=north, outer sep=5pt] at (\posfiv,0) {\phantom{5}$\cdots$\phantom{5}};
            \node[anchor=north, outer sep=5pt] at (\possix,0) {$n$};
            \node[anchor=north, outer sep=5pt] at (\possev,0) {$n+1$};
            \node[anchor=east] at (label1.west) {\small Alternative:};
            \drawcontribution{\posone}{0}{0.04}{color2}{45}{0}{1};
            \drawcontribution{\posthr}{0}{0.04}{color3}{45}{0}{1};
            \drawcontribution{\posfou}{0.00}{0.04}{color1}{-45}{0}{1};
            \drawcontribution{\posfou}{0.04}{0.22}{color2}{45}{0}{1};
            \drawcontribution{\possix}{0.00}{0.04}{color1}{-45}{0}{1};
            \drawcontribution{\possix}{0.04}{0.22}{color3}{45}{0}{1};
            \drawcontribution{\possev}{0}{0.5}{color1}{-45}{0}{1};
            \drawvoteannotated{\posone}{0.78}{color2}{}{$\frac{n-1}{n}$};
            \drawvoteannotated{\posthr}{0.78}{color3}{}{$\frac{n-1}{n}$};
            \drawvoteannotated{\posfou}{0.26}{color1}{$\frac{n}{2}\times$}{$\frac{1+\varepsilon}{n}$};
            \drawvoteannotated{\posfou}{0.22}{color2}{}{};
            \drawvoteannotated{\possix}{0.26}{color1}{$\frac{n}{2}\times$}{};
            \drawvoteannotated{\possix}{0.22}{color3}{}{$\frac{1}{n}$};
            \drawvoteannotated{\possev}{0.5}{color1}{$\frac{n}{2}\times$}{$\frac{1-\varepsilon}{2}$};
        \end{tikzpicture}
        \caption{Welfare-optimal decomposable allocation.} \label{fig:greedydecomp_social_welfare_opt}
    \end{subfigure}
    \caption{Visualization of the profile and aggregates from the proof of \Cref{prop:lbGrdDec}.}
\label{fig:greedydecomp_social_welfare}
\end{figure}

\section{The Weighted Setting} \label{app:weights}

In the weighted setting, each voter $i \in [n]$ has a weight $\omega_i \in \N$, which intuitively indicates the voter's influence over the outcome. In the donor coordination interpretation (compare \cite{BBG+22a,brandt2023balanced}), the weights represent voters' contributions to a shared budget, which is then allocated across a set of charities. We denote the total weight (i.e., the total available budget) as $b = \sum_{i \in [n]} \omega_i$. In the remainder of this section, we introduce weighted versions of our mechanisms and axioms and show that all results from this paper extend to the weighted setting.

\subsection{Weighted Mechanisms} 

Every mechanism from the unweighted setting can be transformed into a weighted mechanism with the following simple construction. Given a weighted profile $P$ with $n$ voters and weights $\omega$, we construct the unweighted profile $P'$ with $n' = b = \sum_{i \in [n]} \omega_i$ voters by duplicating each voter $i$ exactly $\omega_i$ times.\footnote{\citet{becker2025efficiently} use the same construction to lift their results for the computational complexity of computing Nash equilibria into the weighted setting. In our model, their results imply the existence and a pseudo-polynomial time computation of decomposable allocations in the weighted setting.} We then run the unweighted mechanism on $P'$ and use the resulting allocation as the output for the weighted profile $P$. For this section, let $\mathbb{P}(P, \omega)$ denote the unweighted profile constructed from a weighted profile $P$ with weights $\omega$.

We introduce a generalization of the class of moving-phantom mechanism to the weighted setting, which we then show to be equivalent to the class obtained by applying moving-phantom mechanisms as explained above. This generalization is based on the following key observation. An unweighted profile $\mathbb{P}(P, \omega)$ constructed from a weighted profile potentially has a lot of duplicate voters. When computing the median of these voters and some moving phantoms on an alternative $j$, many of these phantoms can never be the median (unless they coincide with other phantoms or voters). As an example, consider a profile with one voter voting $0.2$ and two voters voting $0.4$ on some alternative $j$. On this alternative, there exists no placement of four phantoms $f_0$ to $f_3$, such that $f_2$ is the median (without being at the same position as another phantom or voter). By removing such phantoms from consideration, we can define weighted moving-phantom mechanisms as follows.

\paragraph{Weighted Moving Phantoms} A weighted moving-phantom mechanism is defined via one phantom
system for each value of the total budget $b$. For $b \in \N$, let $\F_b = \{f_\kappa : \kappa \in [b]_0 \}$ be a phantom system. Then, a family of phantom systems $\F = \{\F_b \mid b \in \N\}$ defines the weighted moving-phantom mechanism $\B^\F$ as follows. For a given profile $P = (p_1, \dots, p_n)$ with $n$ voters and weights $\omega = (\omega_1, \dots, \omega_n)$ totaling $\sum_{i \in [n]} \omega_i = b$, let $t^\star \in [0,1]$ be a value for which
\[\sum_{j \in [m]} \med(f_{\Omega_{0,j}}(t^\star), f_{\Omega_{1,j}}(t^\star), \dots, f_{\Omega_{n,j}}(t^\star), p_{1,j}, \dots, p_{n,j}) = 1,\]
where $\Omega_{i,j}$ is the total weight of voters with the first to $i^{\textrm{th}}$ lowest votes on alternative $j$ (with some arbitrary tie-breaking). The weighted moving-phantom mechanism $\B^\F$ returns the allocation $a$ with $a_j = \med(f_{\Omega_{0,j}}(t^\star), f_{\Omega_{1,j}}(t^\star), \dots, f_{\Omega_{n,j}}(t^\star), p_{1,j}, \dots, p_{n,j})$ for each alternative $j$.

We can show that running a weighted phantom mechanism induced by a family of phantom systems $\F$ on some profile $P$ with weights $\omega$ is indeed the same as running the unweighted phantom mechanism induced by $\F$ on the unweighted profile $\mathbb{P}(P, \omega)$. Note that, although the outcomes are identical, using weighted phantom mechanisms reduces the computation time from pseudo-polynomial to polynomial.

\begin{proposition}
    For any weighted profile $P$ with weights $\omega$, the weighted phantom mechanism $\B^\F$ induced by a family of phantom systems $\F$ returns the output of the unweighted moving-phantom mechanism $\A^\F$ for the profile $\mathbb{P}(P, \omega)$, that is, $\B^\F(P, \omega) = \A^\F(\mathbb{P}(P, \omega))$.
\end{proposition}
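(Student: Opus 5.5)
The plan is to prove the stronger pointwise statement: for every alternative $j$ and every time $t\in[0,1]$, the weighted median coincides with the unweighted one,
\[
\med\big(f_{\Omega_{0,j}}(t),\dots,f_{\Omega_{n,j}}(t),p_{1,j},\dots,p_{n,j}\big) \;=\; \med\big(f_0(t),\dots,f_b(t),p'_{1,j},\dots,p'_{b,j}\big),
\]
where $p'$ denotes the votes in $\mathbb{P}(P,\omega)$. Once this is shown, the two sums over $j$ agree as functions of $t$. Hence the sets of normalization times coincide, and the outputs coincide coordinatewise. Since the unweighted output does not depend on the choice of normalization time, neither does the weighted one.

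For the pointwise identity, fix $j$ and $t$, and sort the distinct voters by their vote on $j$ as $q_1\le\dots\le q_n$, with weights $w_1,\dots,w_n$ matching the tie-breaking used to define $\Omega_{i,j}=w_1+\dots+w_i$. In $\mathbb{P}(P,\omega)$ the $b$ votes are $q_1$ repeated $w_1$ times, then $q_2$ repeated $w_2$ times, and so on. I will use the characterization of the median of $2N+1$ numbers (here $N=b$) as the unique value $x$ among the inputs such that at least $N+1$ inputs are $\le x$ and at least $N+1$ are $\ge x$. Equivalently, I will use the standard formula for "phantom medians": with phantoms $f_0\ge\dots\ge f_b$ and sorted votes $v_1\le\dots\le v_b$, the median equals $\max_{0\le \kappa\le b}\min\big(f_\kappa, v_{\kappa+1}\big)$, with the convention $v_{b+1}=1$. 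This expression is also the value obtained by clamping the votes between consecutive phantoms. I will verify this formula quickly by the counting argument: the term $\min(f_\kappa,v_{\kappa+1})$ has at least $\kappa+1$ phantoms ($f_0,\dots,f_\kappa$) and at least $b-\kappa$ votes ($v_{\kappa+1},\dots,v_b$) weakly above it, giving $b+1$ in total, so every term is $\le$ the median; a symmetric count shows that the maximum is attained at the median.

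Applying the formula to the unweighted profile, the index set $\kappa\in\{0,\dots,b\}$ splits into blocks $\Omega_{i-1,j}\le\kappa<\Omega_{i,j}$ on which $v_{\kappa+1}=q_i$ is constant. Since $f_\kappa$ is nonincreasing in $\kappa$, within each block the maximum of $\min(f_\kappa,q_i)$ is attained at the smallest index $\kappa=\Omega_{i-1,j}$; the final index $\kappa=b=\Omega_{n,j}$ contributes $\min(f_b,1)=f_b$. So the unweighted median equals $\max_{0\le i\le n}\min\big(f_{\Omega_{i,j}},q_{i+1}\big)$ with $q_{n+1}=1$. This is exactly the same formula applied to the weighted median with the $n+1$ phantoms $f_{\Omega_{0,j}}\ge\dots\ge f_{\Omega_{n,j}}$ (which are nonincreasing, since $\Omega$ is increasing) and $n$ votes. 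Ties among equal votes do not matter, because the resulting value is the same for any ordering of tied voters.

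The main obstacle is stating and justifying the max–min median formula cleanly, including the boundary conventions; the block-collapsing step itself is routine. As a fallback, I would argue directly by counting that the unweighted median $x$ satisfies the $(n+1)$-out-of-$(2n+1)$ conditions for the weighted input list, by observing that the phantoms dropped in each block lie weakly between the retained ones.
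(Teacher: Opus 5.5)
Your argument is correct, and its outer structure matches the paper's. Both show that for every $t$ and every alternative $j$ the weighted median equals the unweighted one, so the normalization sums agree as functions of $t$ and the outputs coincide.

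The difference is in how the pointwise identity is proved. The paper works directly with the unweighted median $x$. It lets $k'$ be the number of duplicated votes strictly below $x$, notes that $k'=\Omega_{k,j}$ is a block boundary, and deduces $f_{k'}(t)\ge x$ from the median property. It then checks that in the weighted list the $n-k$ voters at or above $x$ and the $k+1$ retained phantoms $f_{\Omega_{0,j}},\dots,f_{\Omega_{k,j}}$ all lie weakly above $x$, and handles the lower side ``analogously''. This is exactly your fallback.

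You instead first establish the closed form $\max_{\kappa}\min(f_\kappa,v_{\kappa+1})$ for phantom medians. You then observe that within each block of duplicated votes the maximum is attained at the block's first index. The nontrivial direction of your formula, which you describe only as ``a symmetric count'', is precisely the paper's counting step: take $\kappa=k'$, the number of votes strictly below $x$, and note that both $f_{k'}\ge x$ and $v_{k'+1}\ge x$. So the two proofs rest on the same fact.

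Your route costs an auxiliary lemma, but it pays for it in three ways:
\begin{itemize}
\item It delivers both sides of the median characterization at once, with no ``analogously'' step.
\item It treats tie-breaking and the boundary index $\kappa=b$ uniformly.
\item It makes explicit the intuition the paper states before defining weighted moving phantoms: every phantom strictly inside a block of tied duplicated votes is dominated by the block's first phantom and can be discarded.
\end{itemize}
The paper's version is shorter and needs no auxiliary formula.
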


\begin{proof}
    Let $P$ be a weighted profile with weight vector $\omega$ and $n$ voters. Let $P' = \mathbb{P}(P, \omega)$ be the corresponding unweighted profile with $b = \sum_{i\in[n]} \omega_i$ voters. 
    Now, let $\F_{b} = \{f_k \mid k = 0, \dots, b\}$ be the phantom system inducing $\B^\F$. 
    
    We fix a time $t \in [0,1]$ and an alternative $j \in [m]$. Let $x = \med(f_{0}(t), \dots, f_{b}(t), p_{1,j}', \dots, p_{b,j}')$ be the median value of the votes from $P'$ and the phantoms and let $k'$ be the number of voters that vote strictly below $x$ on alternative $j$ in $P'$. Since $x$ is the median and there are $b-k'$ voters voting at least $x$ on $j$, there must be at least $k'+1$ phantoms on positions at least $x$ at time $t$, so in particular $f_{k'}'(t) \ge x$.
    Then, by the definition of $P'$, there must be an index $k \in [n]$ with $k' = \Omega_{k, j}$. Then, there are exactly $k$ voters voting strictly below $x$ in $P$ and thus $n-k$ voters voting at least $x$ on $j$. Additionally, we have the $k+1$ phantoms $f_{\Omega_{0,j}}(t) \ge \dots \ge f_{\Omega_{k,j}}(t) = f_{\Omega_{k,j}}'(t) = f_{k'}'(t) \ge x $  voting at least $x$ in $P$. 
    We can analogously show that the number of voters and phantoms voting below or equal $x$ in $P$ is at least $n+1$, showing that $x$ is indeed also the median of the phantoms $f_{\Omega_{0,j}}$ to $f_{\Omega_{n,j}}$ and votes in $P$ at time $t$.

    Since at any time $t$, the medians of both mechanisms are the same, they will reach normalization at the same time and return the same allocation. 
\end{proof}

We define the weighted version of \GrdDec as the mechanism that first converts the weighted profile into an unweighted one and then applies \GrdDec. This algorithm can also be formulated independently of the unweighted version, which would involve assigning initial budgets proportional to voters' weights, running the outer loop for $k \in 1, \dots, b$ and charging voters in $N^+_j$ proportionally to their weights. Such a reformulation, however, would require redefining the values $\mu^k_j$, and since it does not yield many additional insights, we omit it.

Both the weighted \GrdDec mechanism and the weighted moving-phantom mechanisms admit a further generalization to real-valued weights $\omega_i$. 
A trivial adaptation would lead to an infinite number of outer iterations for \GrdDec and an infinite number of phantoms for moving-phantom mechanisms.
However, only a polynomial number of iterations/phantoms are actually relevant, and we can compute the outcomes by carefully controlling which outer iterations are executed in \GrdDec or, similarly, selecting the appropriate phantoms to consider in the moving-phantom mechanism.

\subsection{Weighted Axioms}

We identified an easy way of constructing weighted mechanisms from unweighted ones. However, it is unclear whether the properties of these mechanisms carry over into the weighted setting. In this section, we show that this is indeed the case for the axioms considered in this paper. Note that this is not generally true, for instance, the axiom \textit{individual fair share (IFS)} \citep{bogomolnaia2005collective,brandl2021distribution}, which requires each voter's utility to be at least $\frac{1}{n}$, naturally generalizes requiring that each voter's utility is at least $\frac{\omega_i}{b}$ in the weighted setting. Weighted IFS, however, is not necessarily satisfied when applying a mechanism satisfying IFS to the unweighted equivalent of a weighted profile. 

Since a major strength of moving-phantom mechanisms is truthfulness, we start by investigating whether truthfulness still holds in the weighted setting. To show that weighted moving-phantom mechanisms are indeed truthful, we 
observe via standard arguments that a slightly stronger property holds:
No group of voters with the same preferences can profit from a manipulation, even if we allow these voters to have different misreports.

We say that a budget-aggregation mechanism $\A$ is \emph{unanimous-group truthful} if for all $n, m \in \N$ with $m \ge 2$, profile $P = (\p_1, \dots, \p_n) \in \mathcal{P}_{n,m}$, voter group $N \subset [n]$ with $p_i = p_h$ for all $i,h \in N$, and misreports $\p_i' \in \Delta^{(m-1)}$ for each $i \in N$, we have $u_i(\A(P)) \ge u_i(\A(P'))$ for all $i \in N$ where $P'$ is identical to $P$, except that each voter $i \in N$ votes $p_i'$.

\begin{proposition}
    Every truthful mechanism is unanimous-group truthful.
\end{proposition}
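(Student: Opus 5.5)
The plan is to reduce a joint deviation of the group $N$ to a chain of single-voter deviations, and to invoke truthfulness at each link of the chain. Fix $P$, a group $N=\{i_1,\dots,i_r\}$ of voters with a common preference $p$, and misreports $p'_{i}$ for $i\in N$. I would define intermediate profiles $P^{(0)}=P, P^{(1)},\dots,P^{(r)}=P'$, where $P^{(s)}$ is obtained from $P^{(s-1)}$ by changing only the report of voter $i_s$ from $p$ to $p'_{i_s}$. Consecutive profiles then differ in exactly one voter's report.

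The key step is to read each single change backwards. In $P^{(s)}$, voter $i_s$ reports $p'_{i_s}$, while in $P^{(s-1)}$ that same voter reports $p$, which is their true preference. The hard case is the link where the deviator is no longer truthful in the profile being deviated from, so I do not apply truthfulness to the deviation from $P^{(s-1)}$ to $P^{(s)}$. Instead I view $P^{(s)}$ as the base profile and the move back to $P^{(s-1)}$ as a unilateral misreport by voter $i_s$. Truthfulness quantifies over \emph{all} profiles and \emph{all} reports, and utilities depend only on the true preference $p$. Hence, for an agent whose true preference is $p$ but who currently reports $p'_{i_s}$, switching to report $p$ cannot increase, under utility $\overlap{p}{\cdot}$, what they would obtain from reporting truthfully... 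This framing still points the wrong way, so I would state it carefully as follows.

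Apply truthfulness in the profile $P^{(s-1)}$ to a voter whose true type is $p$ and who reports $p$ there. Changing that report to $p'_{i_s}$ yields $P^{(s)}$. The definition of truthfulness (take $\p_i=p$ in the profile $P^{(s-1)}$, with misreport $p'_{i_s}$) gives $\overlap{p}{\A(P^{(s-1)})}\ge \overlap{p}{\A(P^{(s)})}$. This is legitimate at every step: in $P^{(s-1)}$, voter $i_s$ has not yet deviated and still reports $p$, and truthfulness applies to any profile, whatever the other voters report.

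Chaining the $r$ inequalities gives $\overlap{p}{\A(P)}\ge\overlap{p}{\A(P')}$. Since $u_i(\cdot)=\overlap{p}{\cdot}$ for every $i\in N$, this is exactly $u_i(\A(P))\ge u_i(\A(P'))$ for all $i\in N$. The main point to get right is that at each link the deviating voter is still truthful in the source profile, which the ordering guarantees. The chaining is valid because all group members share the same utility function $\overlap{p}{\cdot}$, so the inequalities are comparable.
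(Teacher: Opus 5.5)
Your argument is correct and is essentially the paper's proof. The paper also fixes an order on $N$ and passes through the intermediate profiles $P^k$ in which only the first $k$ group members misreport. It phrases the argument by contradiction, picking a step where the common utility strictly increases and charging it to the one member whose report changes, whereas you chain the inequalities directly. Your abandoned ``backwards'' reading in the second paragraph is unnecessary and should be cut. The forward application of truthfulness at $P^{(s-1)}$, where voter $i_s$ still reports $p$, is exactly the right step.
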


\begin{proof}
    Suppose there is a truthful, but not unanimous-group truthful mechanism $\A$. Then there must be $n,m \in \N$, a voter group $N$, a profile $P$, misreports $p_i'$ for $i \in N$ and the resulting profile $P'$ as in the definition of unanimous-group truthfulness, such that $u_i(\A(P)) < u_i(\A(P'))$ for some $i \in N$.

    We fix an arbitrary order over $N$ and consider the profiles $P^k$ for $k \in [|N|]_0$ in which only the first $k$ voters from $N$ misreport. Therefore, since $u_i(\A(P^0)) = u_i(\A(P)) < u_i(\A(P')) = u_i(\A(P^{|N|}))$, there must be some index $k^\star \in [|N|]$, where $u_i(\A(P^{k^\star-1})) < u_i(\A(P^{k^\star}))$. Since all voters in $N$ have the same utility function as $i$, this is also an improvement for the $k^\star$th voter in $N$, which contradicts the truthfulness of $\A$. 
\end{proof}

Because each weighted voter corresponds to a unanimous group in the unweighted profile, any truthfulness violation in the weighted setting carries over to a unanimous-group truthfulness violation in the unweighted setting. We therefore obtain the following corollary.

\begin{corollary}
    Weighted moving-phantom mechanisms are strategy-proof.
\end{corollary}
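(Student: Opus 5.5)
The plan is a reduction to the unweighted setting. Three ingredients are needed: the proposition that $\B^\F(P,\omega)=\A^\F(\mathbb{P}(P,\omega))$, the truthfulness of unweighted moving-phantom mechanisms \citep{freeman2021truthful}, and the proposition that every truthful mechanism is unanimous-group truthful.

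First fix the objects. Take a weighted profile $P$ with weights $\omega$, a voter $i$, and a misreport $p_i'$. Write $P'$ for the weighted profile in which only $i$ changes their report, with the weights unchanged. Since weights stay fixed, the total weight $b$ is the same for $P$ and $P'$. Hence the same phantom system $\F_b$ is used in both cases, and the same unweighted mechanism $\A^\F_{b,m}$ applies to both $\mathbb{P}(P,\omega)$ and $\mathbb{P}(P',\omega)$.

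Second, relate the two unweighted profiles. In $\mathbb{P}(P,\omega)$, voter $i$ becomes a group $N$ of $\omega_i$ identical copies, all reporting $p_i$. The profile $\mathbb{P}(P',\omega)$ is exactly the profile in which every member of $N$ misreports $p_i'$, and every other voter is unchanged. Such a joint misreport by a group with identical preferences is precisely what unanimous-group truthfulness rules out.

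Third, conclude. The mechanism $\A^\F$ is truthful, so it is unanimous-group truthful. Voter $i$'s utility in the weighted instance depends only on the outcome and $p_i$, and it equals the utility of any copy in $N$. The equivalence proposition then gives
\[
u_i(\B^\F(P,\omega)) = u_i(\A^\F(\mathbb{P}(P,\omega))) \ge u_i(\A^\F(\mathbb{P}(P',\omega))) = u_i(\B^\F(P',\omega)),
\]
which is strategy-proofness.

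The only delicate point is the identification of $\mathbb{P}(P',\omega)$ with the group-misreport profile. The construction $\mathbb{P}$ duplicates voters in a fixed order, and the mechanism is anonymous, so this ordering is immaterial. Once that is checked, the corollary is just bookkeeping.
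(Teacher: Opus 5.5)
Your proposal is correct and follows the paper's argument. The paper likewise combines the equivalence $\B^\F(P,\omega)=\A^\F(\mathbb{P}(P,\omega))$, the truthfulness of unweighted moving-phantom mechanisms, and the fact that truthfulness implies unanimous-group truthfulness, observing that a weighted voter's misreport becomes a unanimous group's joint misreport in the duplicated profile. Your explicit checks that $b$ (and hence $\F_b$) is unchanged by the misreport, and that the ordering of duplicated voters is immaterial, are details the paper leaves implicit.
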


While the truthfulness axiom did not require adaptation to the weighted setting, single-minded proportionality and decomposability require slight modifications.
For single-minded proportionality, we require the output of a mechanism to coincide with the \textit{weighted} mean of the votes in single-minded instances. It is straightforward to verify that any mechanism satisfying single-minded proportionality in the unweighted setting also satisfies it in the weighted setting. 
The definition of decomposability is modified only in that the contributions of each voter $i$ sum up to $\frac{\omega_i}{b}$ instead of $\frac{1}{n}$. This reflects the intuition that, in a donor coordination setting, each participant should be satisfied with how their contribution is spent. Again, it is easy to verify that every mechanism satisfying decomposability in the unweighted setting also does so in the weighted setting.

Since all relevant weighted axioms and mechanisms reduce to their unweighted counterparts when all voter weights are $1$, the incompatibility of neutrality, decomposability, and truthfulness established in \Cref{prop:decomp-truthfulness} remains valid in the weighted setting. The same holds for the fact that \GrdDec and \UtDec are not Pareto efficient, stated in \Cref{prop:greedydecomp_pareto_dominated}.

\subsection{Social Welfare in Weighted Budget Aggregation}

We now discuss how the results in this paper can be lifted to the weighted setting. 
We start by defining the social welfare of an allocation $a$ for a weighted instance as $w(a) = \sum_{i \in [n]} w_i u_i(a)$. Each voter's effect on the social welfare is scaled by its weight, which matches the definition of utilitarian welfare in \citet{brandt2023balanced}. For any aggregate, its social welfare in a weighted instance equals its social welfare in the corresponding unweighted instance. Therefore, all social welfare bounds established in \Cref{sec:worst_case_price} extend to the weighted setting, with every dependence on the number of voters replaced by a dependence on the total voter weight $b$. In particular, the lower bound on the welfare approximation of proportional phantoms from \Cref{prop:proportional_social_welfare_lower_bound} and the matching upper bounds for \UtP (\Cref{cor:util_prop_welfare_approximation}) and \GrdDec (\Cref{cor:greedy_decomp_welfare_approximation}) hold in the weighted setting with respect to the total weight of the voters $b$. We leave as an open question how these bounds change when expressed in terms of $n$ in the weighted setting.
We remark that the bounds with respect to $m$ remain unchanged for the weighted setting.

For the welfare domination relation $\trianglerighteq$ defined in \Cref{sec:phantom_domination}, all our results carry over into the weighted setting without modification. In particular, \Cref{lem:phantom_domination} and the resulting \Cref{thm:phantom-domination} remain valid for the weighted moving-phantom mechanisms.

In section \Cref{sec:optimal_decomposable} we show that computing a welfare-optimal decomposable allocation is NP-hard, which remains true in the weighted setting, as it is more general. Finally, weighted \GrdDec provides an approximation to the welfare-optimal decomposable allocation, just as in the unweighted case (\Cref{thm:greedy-decomp-two}), with the dependence on the number of voters replaced by a dependence on $b$.

\end{document}